\Crefname{algocf}{Algorithm}{Algorithms}
\newtheorem{theorem}{Theorem}
\newtheorem{problem}{Problem}
\newtheorem{lemma}{Lemma}
\newtheorem*{theorem*}{Theorem}
\Crefname{remark}{Remark}{Remark}
\newtheorem{proposition}{Proposition}
\newtheorem*{proposition*}{Proposition}
\newtheorem{definition}{Definition}
\newtheorem*{corollary*}{Corollary}
\newcommand{\1}[1]{\mathbbm{1}\left\{#1\right\}} 
\newcommand{\A}{\mathcal{A}} 
\newcommand{\Bern}{\mathsf{Bernoulli}} 
\newcommand{\Binomial}{\mathsf{Binomial}} 
\newcommand{\choosetopk}{\textsf{Choose-Top-}k}
\newcommand{\counter}{\mathrm{counter}}
\newcommand{\CR}{\mathsf{CR}}
\newcommand{\eps}{\varepsilon} 
\newcommand{\event}{E}
\newcommand{\Ex}[2]{\operatorname*{\mathbb{E}}_{#1}\left[#2\right]} 
\newcommand{\findkth}{\textsf{Find-}k\textsf{-th}}
\newcommand{\Geo}{\mathsf{Geo}}
\newcommand{\goodevent}{\event^{\textsf{good}}} 
\newcommand{\OPT}{\mathsf{OPT}}
\newcommand{\poly}{\operatorname*{poly}} 
\newcommand{\polylog}{\operatorname*{polylog}} 
\newcommand{\pr}[2]{\Pr_{#1}\left[#2\right]} 
\newcommand{\rank}{\mathrm{Rank}} 
\newcommand{\rmd}{\mathrm{d}}
\newcommand{\SOL}{\mathsf{SOL}}
\newcommand{\Var}[2]{\operatorname*{\mathrm{Var}}_{#1}\left[#2\right]} 
\title{Optimal $k$-Secretary with Logarithmic Memory}
\author{
Mingda Qiao\thanks{University of Massachusetts Amherst. Email: \texttt{mingda.qiao.cs@gmail.com}.}
\and
Wei Zhang\thanks{Massachusetts Institute of Technology. Email: \texttt{weiz41798@gmail.com}.}
}
\date{}
\begin{document}
\maketitle

\begin{abstract}%
We study memory-bounded algorithms for the $k$-secretary problem. The algorithm of Kleinberg (SODA 2005) achieves an optimal competitive ratio of $1 - O(1/\sqrt{k})$, yet a straightforward implementation requires $\Omega(k)$ memory. Our main result is a $k$-secretary algorithm that matches the optimal competitive ratio using $O(\log k)$ words of memory. We prove this result by establishing a general reduction from $k$-secretary to (random-order) \emph{quantile estimation}, the problem of finding the $k$-th largest element in a stream. We show that a quantile estimation algorithm with an $O(k^{\alpha})$ expected error (in terms of the rank) gives a $(1 - O(1/k^{1-\alpha}))$-competitive $k$-secretary algorithm with $O(1)$ extra words. We then introduce a new quantile estimation algorithm that achieves an $O(\sqrt{k})$ expected error bound using $O(\log k)$ memory. Of independent interest, we give a different algorithm that uses $O(\sqrt{k})$ words and finds the $k$-th largest element exactly with high probability, generalizing a result of Munro and Paterson (1980).
\end{abstract}

\section{Introduction}
In the classical secretary problem, $n$ numbers are presented to an online decision-maker (``player'') in random order. Upon seeing each number, the player has to make an irrevocable decision on whether to accept it---if the player accepts, the game terminates immediately; otherwise, the player moves on to the next element in the sequence. The goal of the player is to maximize the chosen number. It is well-known that there is a strategy for the player that chooses the largest element with probability at least $1/e$ for any value of $n$, which is essentially optimal.

The $k$-secretary problem is a natural extension of the above---instead of choosing a single element, the player is now allowed to accept up to $k$ elements. For this problem, Kleinberg~\cite{Kleinberg05} gave a $(1 - O(1/\sqrt{k}))$-\emph{competitive} algorithm. Here, an algorithm is $\alpha$-competitive if the expected sum of the accepted elements is at least an $\alpha$-fraction of the maximum possible result---the sum of the $k$ largest numbers. It was also shown that the competitive ratio of any algorithm is at best $1 - \Omega(1/\sqrt{k})$.

In this work, we revisit the $k$-secretary problem under an additional memory constraint on the player's algorithm. One potential use case of the $k$-secretary model is the routing setting~\cite{nakano2011message}, in which $n$ data packets arrive at a router in a sequential order, while the goal is to filter out certain low-quality packets and only forward $k$ out of the $n$ packets for downstream processing. In this setup, as the values of both $n$ and $k$ can be huge compared to the storage of the router, we need the algorithm to be both optimal (in terms of the competitive ratio) and memory-efficient (using a memory sublinear, or even logarithmic, in $k$).

Unfortunately, Kleinberg's algorithm, while being optimal, is not designed to be memory-efficient. A straightforward implementation of the algorithm requires $\Omega(k)$ words of memory. The bottleneck of the memory usage is for finding the $k$-th largest element in the sequence---a problem known as \emph{quantile estimation}---for which the straightforward method requires $\Omega(k)$ space.

One might hope to improve the memory bound using quantile estimation algorithms that are more space-efficient. We first note that there exist strong $\poly(k)$ memory lower bounds if we restrict ourselves to \emph{exact} algorithms that find the $k$-th largest element exactly (with high probability): a lower bound of Munro and Paterson~\cite{MP80} suggests that this requires $\Omega(\sqrt{k})$ space. Naturally, one might hope to use algorithms that find the $k$-th largest element approximately (e.g., returning an element with rank $k \pm o(k)$). Guha and McGregor~\cite{guha2009stream} gave an algorithm that uses $O(1)$ words of memory\footnote{We assume a word size of $\Theta(\log n)$.} and finds an element of rank $k \pm O(\sqrt{k}\cdot \log^2n)$. Combining this result with Kleinberg's algorithm, however, would lead to extra $\polylog(n)$ factors in the competitive ratio.

In this work, we aim to answer the following questions:
\begin{quote}
    \emph{In general, does an accurate quantile estimator lead to a competitive $k$-secretary algorithm while maintaining its memory usage? Concretely, to match the optimal competitive ratio for $k$-secretary, how much memory is needed?}
\end{quote}

\subsection{Problem Setup}
\paragraph{Notations.} Throughout the paper, $s = (s_1, s_2, \ldots, s_n)$ denotes a random-order sequence, and we shorthand $s_{i_1:i_2}$ for the subsequence $(s_{i_1}, s_{i_1 + 1}, \ldots, s_{i_2})$. When it is clear from the context, we sometimes abuse the notation and let $s_{i_1:i_2}$ denote the set formed by the elements in the subsequence. For instance, $s_{i_1:i_2} \cap (-\infty, a)$ is used as a shorthand for $\{x \in \{s_{i_1}, s_{i_1 + 1}, \ldots, s_{i_2}\}: x < a\}$.

We will frequently use the \emph{rank} of an element $x$ within a contiguous subsequence, when only the elements that are strictly smaller than a threshold $x'$ are counted.
\begin{definition}[Rank]\label{def:rank}
For sequence $(s_1, s_2, \ldots, s_n)$ and $1 \le i_1 \le i_2 \le n$, we define
\[
    \rank_{i_1:i_2}(x; x')\coloneqq  \left|\left\{t\in \{s_{i_1}, \ldots, s_{i_2}\}| x' > t \ge x\right\}\right|.
\]
We also write $\rank_{i_1:i_2}(x)$ as a shorthand for $\rank_{i_1:i_2}(x; +\infty)$.
\end{definition}

\paragraph{The $k$-Secretary problem.}
We formally define the $k$-secretary problem as follows.
\begin{problem}[$k$-Secretary]\label{def.ksecretary}
    Let $x_1 > x_2 > \cdots > x_n \ge 0$ be $n$ numbers that are non-negative, distinct, and unknown to the algorithm. The algorithm reads a uniformly random permutation of the $n$ elements one by one. Upon seeing each element, the algorithm decides whether to accept it. The algorithm may accept at most $k$ elements, and aims to maximize their sum.
\end{problem}

The assumption that all the elements are distinct is for the simplicity of the exposition. It comes without loss of generality by a standard tie-breaking argument.\footnote{For example, we may augment each element with a random identifier with $\Theta(\log n)$ bits. This ensures distinctness with high probability, and only requires one additional word for storing each element.}

We say that a $k$-secretary algorithm is competitive, if it is guaranteed to secure a certain fraction of the maximum possible outcome, namely, the sum of the $k$ largest elements.

\begin{definition}[Competitive ratio]\label{def.cr}
    A $k$-secretary algorithm is $\alpha$-competitive if, on any $k$-secretary instance $x_1 > x_2 > \cdots > x_n$, the expected sum of the accepted elements is at least $\alpha \cdot \sum_{i=1}^{k}x_i$. Here, the expectation is over the randomness both in the permutation and in the algorithm.
\end{definition}

\paragraph{Quantile estimation.} Next, we formally define quantile estimation in a random-order stream.

\begin{problem}[Quantile estimation]\label{def.quantile_estimation}
Let $x_1 > x_2 > \cdots > x_n$ be $n$ distinct numbers that are unknown to the algorithm. The algorithm takes $n$ and $k$ as inputs, and reads a uniformly random permutation of the $n$ numbers one by one. The goal is to output the (approximately) $k$-th largest element $x_k$. When the output is $x^* \in \{x_1, x_2, \ldots, x_n\}$, the algorithm incurs an error of  $|\rank_{1:n}(x^*) - k|$.
\end{problem}

Again, the assumption that the $n$ elements are distinct is for simplifying the notations. The results for distinct elements can be extended to the general case via tie-breaking and a more careful definition of the error.

\paragraph{Memory model.} We informally introduce the model of computation and the memory usage of an algorithm. All our algorithms can be implemented in the word RAM model, assuming that each word can store either an integer of magnitude $\poly(n)$ or an element in the stream (in either $k$-secretary or quantile estimation). In particular, when all stream elements are bounded by $\poly(n)$, a word size of $\Theta(\log n)$ is sufficient. Moreover, all our algorithms are comparison-based: they only access the stream elements via pairwise comparisons, and do not perform any arithmetic operations on the elements. Therefore, a word size of $\Theta(\log n)$ would suffice if: (1) every stream element is presented as a unique identifier (which takes $O(\log n)$ bits); (2) the algorithm has access to an oracle that compares two elements specified by their identifiers.

\subsection{Our Results}
Our main result shows that logarithmic memory is sufficient for being optimally-competitive in the $k$-secretary problem.
\begin{theorem}\label{thm:k-secretary}
    There is a $k$-secretary algorithm that uses $O(\log k)$ words and achieves a competitive ratio of $1 - O(1/\sqrt{k})$.
\end{theorem}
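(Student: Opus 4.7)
The plan is to prove Theorem~\ref{thm:k-secretary} by combining two technical ingredients that the abstract identifies: a general reduction from $k$-secretary to random-order quantile estimation, and a new memory-efficient quantile estimator. The reduction asserts that a quantile estimation algorithm with expected rank error $O(k^\alpha)$ and memory $M$ yields a $\bigl(1 - O(1/k^{1-\alpha})\bigr)$-competitive $k$-secretary algorithm using $M + O(1)$ words. The second ingredient is a quantile estimator that uses $O(\log k)$ memory and achieves expected rank error $O(\sqrt{k})$. Plugging the latter into the former with $\alpha = 1/2$ immediately yields the theorem: the competitive ratio is $1 - O(1/\sqrt{k})$ and the total memory is $O(\log k) + O(1) = O(\log k)$.

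My approach to proving the reduction would be to adapt a Kleinberg-style split-and-threshold template. First, run the quantile estimator on a prefix of the stream (with an appropriately rescaled target rank, slightly offset above $k/2$ to build in a safety margin) to obtain a threshold $\tau$ that approximates the $k$-th largest element of the full stream. Second, greedily accept any subsequent element exceeding $\tau$ until either the stream ends or the acceptance cap of $k$ is hit. The extra memory beyond the estimator is $O(1)$ words: one for $\tau$ and one for a counter of accepted elements. To analyze the competitive ratio, I would decompose the expected loss relative to $\sum_{i=1}^{k} x_i$ into three sources: (i) top-$k$ elements lost to the estimator's learning window, bounded via random-order / hypergeometric concentration; (ii) top-$k$ elements appearing after the learning window that fall on the wrong side of the noisy threshold $\tau$, bounded by the estimator's $O(k^\alpha)$ rank error and the fact that the marginal contribution of the $i$-th element is at most $x_k$; and (iii) the rare event that the cap of $k$ is exhausted early, bounded by a Chernoff argument using the safety offset. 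Under a balanced choice of parameters, each source should contribute $O(1/k^{1-\alpha})$ to the competitive-ratio loss.

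The construction of the quantile estimator itself is the more technically novel piece, but for the reduction-based proof of Theorem~\ref{thm:k-secretary} we may treat it as a black box. I expect the main obstacle in the reduction to be controlling the loss from source (i) above: if the learning window is a constant fraction of the stream, a naive accounting already loses a constant fraction of $\sum_{i=1}^{k} x_i$, which is far worse than the target $O(1/\sqrt{k})$. Resolving this likely requires either a recursive scheme à la Kleinberg in which the learning phase still participates in the acceptance set (with a smaller budget $k/2$), or a more careful analysis exploiting that the missed elements in the learning window are offset by equally large elements surviving in the remainder. Orchestrating this bookkeeping while keeping the extra memory at $O(1)$ is the delicate part; once it is in place, the final combination to obtain Theorem~\ref{thm:k-secretary} is a direct substitution.
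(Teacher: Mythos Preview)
Your first paragraph is exactly right and matches the paper: Theorem~\ref{thm:k-secretary} is obtained by instantiating the reduction (\Cref{prop:reduction}) with the $O(\log k)$-memory, $O(\sqrt{k})$-error quantile estimator (\Cref{thm:quantile-approx}) at $\alpha = 1/2$.

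Where your proposal drifts is in the sketch of the reduction itself. The single-pass ``learn a threshold on a prefix, then accept greedily'' scheme you describe first cannot reach a $1 - o(1)$ competitive ratio: whatever constant fraction of the stream you spend learning $\tau$, you forfeit that fraction of the top-$k$ elements in expectation, and no downstream accounting recovers it. You correctly flag this as the obstacle and point to the fix --- a recursive Kleinberg scheme where the learning half is \emph{also} used for acceptance with budget $\lfloor k/2 \rfloor$ --- and that is precisely what the paper does (\Cref{algo:k-secretary}). So the right move is to commit to the recursive algorithm from the outset, not treat it as a fallback.

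The second point you underspecify is the memory bound. A direct recursive implementation runs $\Theta(\log k)$ copies of the quantile estimator on nested prefixes $s_{1:n/2}, s_{1:n/4}, \ldots$, which overlap and must run in parallel, giving $O(m \log k)$ words rather than $m + O(1)$. The paper's fix (\Cref{lemma:only-read-second-half} and the proof of \Cref{prop:reduction}) is to replace each call ``find the $(k/2^i)$-th largest in $s_{1:n/2^i}$'' by ``find the $(k/2^{i+1})$-th largest in $s_{(n/2^{i+1}+1):n/2^i}$''; this perturbs the rank by only $O(\sqrt{k})$ but makes the $\log k$ estimator calls run on \emph{disjoint} intervals, so a single $m$-word workspace suffices. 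Without this trick your ``$M + O(1)$'' claim does not go through.
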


\paragraph{Reduction from $k$-secretary to quantile estimation.} Towards proving \Cref{thm:k-secretary}, we show that a quantile estimation algorithm leads to a competitive $k$-secretary algorithm with almost the same memory usage. Formally, the reduction applies to all \emph{comparison-based} quantile estimators. Roughly speaking, an algorithm is comparison-based if it only accesses the elements in the stream via pairwise comparisons. As a result, the output of a comparison-based algorithm is invariant (up to the renaming of elements) under any order-preserving (i.e., monotone) transformations. See \Cref{def:comparison-based} for a more formal definition.

\begin{restatable}{proposition}{reduction}\label{prop:reduction}
    Suppose that, for some $\alpha \in [1/2, 1]$, there is a comparison-based quantile estimation algorithm that uses $m$ words of memory and has an error of $O(k^{\alpha})$ in expectation. Then, there is a $k$-secretary algorithm that uses $m + O(1)$ words and achieves a competitive ratio of $1 -O(1/k^{1 - \alpha})$.
\end{restatable}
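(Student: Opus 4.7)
My plan is to design a two-phase observe-then-threshold algorithm using the quantile estimator, control the threshold error via concentration, and then remove the resulting $\tfrac{1}{2}$ loss through recursion.

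\textbf{The template.} In the base algorithm, the first $n/2$ elements are fed to the provided comparison-based quantile estimator, initialized to find the $\lfloor k/2\rfloor$-th largest in that prefix; this produces a threshold $\tau$ at time $n/2$. In the second half, each arriving element $s_t>\tau$ is accepted until the counter reaches $k$. At any moment only one copy of the estimator is active, and besides that we keep $\tau$ and a counter, so the memory is $m+O(1)$.

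\textbf{Analysis of the template.} The key quantity is $R := \rank_{1:n}(\tau)$. The random-order assumption makes the first half a uniformly random size-$n/2$ subset, so by a Hypergeometric concentration the first-half rank of the true $k$-th largest concentrates at $k/2\pm O(\sqrt{k})$; inverting gives $R = k \pm O(\sqrt{k})$ for a noiseless estimator, and adding the $O(k^\alpha)$ error (which dominates since $\alpha\ge 1/2$) yields $R = k\pm O(k^\alpha)$. The expected shortfall against $\sum_{i=1}^k x_i$ decomposes into (i) top-$k$ elements missed in the first half, (ii) misclassifications caused by the noisy threshold---at most $O(k^\alpha)$ elements, each worth at most $x_k\le\tfrac{1}{k}\sum_{i=1}^k x_i$, and (iii) overflow events where more than $k$ elements exceed $\tau$ in the second half, controlled by a Chernoff-type tail on the Hypergeometric count of top-$R$ elements in the second half. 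Terms (ii) and (iii) each contribute at most $O(1/k^{1-\alpha})$ to the competitive ratio.

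\textbf{Main obstacle and fix.} Term (i) is the real obstacle: half of the top-$k$ elements fall in the first half and are never accepted, limiting the template alone to a $\tfrac{1}{2}$-competitive ratio. To close this gap, I apply the reduction recursively on the first half, treating it as a $(k/2)$-secretary instance with its own inner quantile estimator and inner threshold. Writing $f(k)$ for the resulting competitive ratio, this gives
\[
f(k) \ \geq\ \tfrac{1}{2} + \tfrac{1}{2}\,f(k/2) - O\bigl(1/k^{1-\alpha}\bigr),
\]
which unrolls to $f(k)=1-O(1/k^{1-\alpha})$ because the per-level error terms $(k/2^j)^{\alpha-1}$ form a convergent geometric series dominated by the outermost level. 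The delicate part is the memory accounting: the recursive instances must be scheduled so that at most one quantile estimator is in its observation phase at any time---this lets the $m$-word estimator state be serially overwritten between levels---and a further bookkeeping argument must show that a single ``current'' threshold plus a single aggregate acceptance counter suffice across all $\log k$ levels, giving the claimed $m+O(1)$ memory. Formalizing this memory-reuse schedule, and verifying that quantile-estimator errors compose cleanly across the levels under the comparison-based assumption, is the main technical work.
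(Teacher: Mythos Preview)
Your overall approach---Kleinberg's recursion with the quantile estimator in place of exact selection, and the recurrence $f(k)\ge \tfrac12+\tfrac12 f(k/2)-O(1/k^{1-\alpha})$---matches the paper's, and your competitive-ratio analysis is essentially correct.

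The gap is in the memory argument. In the recursion you describe, the level-$j$ quantile estimator observes the prefix $s_{1:\lfloor n/2^j\rfloor}$ to produce the threshold used on $s_{(\lfloor n/2^j\rfloor+1):\lfloor n/2^{j-1}\rfloor}$. These observation windows are \emph{nested}, not disjoint: at time $t=1$, all $\Theta(\log k)$ estimators must already be running, since each one needs to see $s_1$. There is no scheduling of the algorithm as written that lets you serially reuse the $m$ words---you genuinely need $\Theta(m\log k)$ memory, which is exactly the weaker bound the paper first proves. Your proposal acknowledges the scheduling is ``the main technical work'' but does not supply the idea that makes it possible.

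The paper's fix is a separate reduction (their Lemma~10): given a comparison-based estimator $\A$ with $O(k^\alpha)$ error, build $\A'$ that \emph{ignores the first half of its input} and runs $\A$ on the second half with target $\lfloor k/2\rfloor$; a hypergeometric concentration shows $\A'$ still has $O(k^\alpha)$ error (this is where $\alpha\ge 1/2$ is used). Plugging $\A'$ into the recursion, the level-$j$ estimator now reads only $s_{(\lfloor n/2^{j+1}\rfloor+1):\lfloor n/2^j\rfloor}$, and these windows are pairwise disjoint. Now the estimators really do run one after another and can share the same $m$ words; together with a constant number of active thresholds and counters (and unrolling the recursion into a loop), this yields $m+O(1)$. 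You were missing this ``only-read-the-second-half'' transformation.

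A minor point: once you recurse on the first half, the second-half counter must cap at $\lfloor k/2\rfloor$, not $k$, to respect the overall budget; your ``single aggregate counter'' remark glosses over this, though it is easy to fix.
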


The proposition follows from a reduction that is essentially implicit in \cite{Kleinberg05}---Kleinberg's algorithm can be viewed as a special case in which the quantile estimator is exactly correct (and thus, we can take $\alpha = 1/2$). Our proof of \Cref{prop:reduction} shows that this reduction is actually robust to the inaccuracy in the quantile estimator, so the error bound in quantile estimation can be smoothly translated to the sub-optimality in $k$-secretary.

As we explain in \Cref{sec:overview-reduction}, directly following the reduction of~\cite{Kleinberg05} would increase the memory usage by a factor of $\log k$. We avoid this multiplicative increase by carefully modifying the reduction, so that the $k$-secretary algorithm would run $\log k$ copies of the quantile estimator \emph{sequentially}, rather than in parallel.

\paragraph{New results for quantile estimation.} In light of \Cref{prop:reduction}, the only missing piece towards proving \Cref{thm:k-secretary} is a memory-efficient quantile estimator with $O(\sqrt{k})$ error.

\begin{restatable}{theorem}{quantileapprox}\label{thm:quantile-approx}
    There is a quantile estimation algorithm that, when finding the $k$-th largest element, uses $O(\log k)$ words and incurs an $O(\sqrt{k})$ error in expectation over the randomness in both the algorithm and the order of the stream.
\end{restatable}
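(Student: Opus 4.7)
My plan is to design a phased algorithm that maintains a search interval $[L, U]$ containing $x_k$ with high probability and refines it over $T = O(\log k)$ sequential phases of the stream. The starting interval is $(-\infty, +\infty)$, and each phase shrinks the interval by a constant factor in the expected number of full-stream elements it contains. After $T$ phases this number is $O(\sqrt{k})$, so any element of the final interval serves as an output within $O(\sqrt{k})$ of rank $k$.

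Within phase $i$, the algorithm reads a contiguous segment of the stream and uses reservoir sampling to pick a single pivot $p$ uniformly from the segment elements that fall in the current interval. Running counters track the number of elements seen so far that lie in each ``zone'' determined by the past pivots. At the end of the phase, these counters together with the known target rank $k$ are used to decide whether $p > x_k$ or $p < x_k$ via a concentration argument: the observed count of elements above $p$ is compared to its expected value under the hypothesis $p = x_k$. We then update $U \gets p$ or $L \gets p$ accordingly.

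The analysis would establish two invariants. First, a correctness invariant: with probability $1 - 1/\poly(k)$, the decision at each phase is correct, which follows from Chernoff-type bounds on the hypergeometric-like counts that arise in the random-order model, combined with a union bound over the $O(\log k)$ phases. Second, a shrinkage invariant: the expected number of full-stream interval elements halves per phase, because the reservoir-sampled pivot has marginally uniform rank within the current interval. The $O(\log k)$ memory usage comes from storing the sequence of pivot thresholds encountered across phases and their associated per-zone counters, which are needed to maintain accurate rank estimates as the interval shrinks: when a fresh pivot subdivides the current interval, the counters for elements already processed can only be updated by re-bucketing against the retained pivots rather than by re-reading the stream.

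The hardest step will be calibrating the phase lengths so that the correctness invariant holds uniformly across all phases: the early phases have few processed elements and thus noisier counters, while later phases face smaller intervals in which the pivot's rank is close to $k$ and therefore harder to distinguish from counter noise. I would handle this by choosing phase lengths geometrically (e.g., phase $i$ processes roughly a $\Theta(2^{i-T})$ fraction of the stream) so that the concentration of the counts scales in tandem with the required resolution. Finally, since we need a bound on the \emph{expected} error rather than a high-probability one, I would argue that the low-probability bad event on which some pivot decision is wrong contributes only $O(\sqrt{k})$ to the expected error, thanks to the strong Chernoff tails controlling how badly the interval can drift.
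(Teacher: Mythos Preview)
Your proposal has a fundamental gap: the number of phases needed depends on $n$, not just $k$. You start with the interval $(-\infty,+\infty)$, which contains all $n$ stream elements, and shrink its element count by a constant factor per phase. Reaching an interval of size $O(\sqrt{k})$ therefore takes $\Theta(\log(n/\sqrt{k}))$ phases, which is $O(\log k)$ only when $n=\poly(k)$. In the general regime $n\gg k$ you would need $\Theta(\log n)$ phases, forcing $\Theta(\log n)$ stored pivots and counters and introducing $\polylog(n)$ factors into the error via the union bound---exactly the dependence the theorem is meant to avoid. The paper sidesteps this by never locating $x_k$ via interval bisection on the element space; instead each level of recursion reduces the \emph{target rank} from $k$ to some $k'\le C_0 k$ (by subsampling the first half at rate $\Theta(m/k)$ to find a threshold $a'$ with $\rank_{1:B}(a';a)\in[\lfloor k/2\rfloor-C_0 k,\lfloor k/2\rfloor-1]$, then recursing on the second half below $a'$). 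The recursion depth is then $O(\log k)$ regardless of $n$, and the $m=\Theta(\log k)$ memory is spent on the subsample array rather than on accumulated pivots.

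There is a secondary gap in your correctness invariant. A pivot $p$ sampled uniformly from the current interval will, with non-negligible probability, have rank within $O(1)$ of $k$; in that event no Chernoff bound on the segment counts can make the decision ``$p>x_k$ versus $p<x_k$'' correct with probability $1-1/\poly(k)$, since the expected count gap $|{\rank(p)-k}|\cdot n_i/n$ is swamped by the $\Theta(\sqrt{n_i})$ sampling noise. You would instead need to argue that wrong decisions near $x_k$ contribute little to the \emph{expected} error because the resulting interval drift is small, but that argument is not sketched, and getting it to close without a $\polylog(n)$ loss is precisely the hard part (compare the $O(\sqrt{k}\log^2 n)$ bound of Guha--McGregor cited in the paper). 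Finally, your stated reason for storing all past pivots does not achieve what you claim: a pivot $p_i$ chosen in phase $i$ cannot be retroactively compared against elements from phases $1,\ldots,i-1$ using stored zone counters, since those counters were never subdivided by $p_i$; so the retained pivots do not let you ``re-bucket'' earlier segments.
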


We note that both the memory usage and the error bound are independent of $n$. This is because we address the \emph{expected} error directly, rather than conditioning on a good event involving the entire length-$n$ stream, thus avoiding a potential $\polylog(n)$ dependence.

The key idea behind the algorithm for \Cref{thm:quantile-approx} is to reduce the problem of finding the $k$-th largest element to finding the $k'$-th largest in a subsequence, for some $k' \ll k$. By shrinking the value of $k$ fast enough, we ensure that the error incurred in the subproblems can be controlled. In \Cref{sec:overview-approx,sec:overview-details}, we sketch the algorithm and its analysis in more detail.

Our second algorithm finds the $k$-th largest element exactly with high probability, albeit with a looser memory bound of $O(\sqrt{k})$.

\begin{restatable}{theorem}{quantileexact}\label{thm:quantile-exact}
    For any $m \ge 1$, there is a quantile estimation algorithm that uses $O(m)$ words of memory and incurs zero error (i.e., returns the exactly $k$-th largest element) with probability at least
    \[
        1 - 12\lfloor\log_2k\rfloor\cdot\exp\left(-\frac{m}{12}\right) - 2\sum_{i=0}^{\lfloor\log_2k\rfloor-1}\exp\left(- \frac{m^2}{32(k/2^i)}\right).
    \]
\end{restatable}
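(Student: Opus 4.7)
The proof plan is an algorithm that proceeds in $L = \lfloor \log_2 k \rfloor$ phases, halving the ``density'' of a candidate interval bracketing $x_k$ at each phase. Throughout, it maintains an interval $(\ell_i, u_i)$ which (under a good event) contains $x_k$ and has approximately $k/2^i$ stream elements inside, a counter $c_i$ equal to the number of already-seen stream elements that are $\ge u_i$ (so that $x_k$ is the $(k-c_i)$-th largest in-interval element), and an $O(m)$-sized sketch $S$ summarizing in-interval observations. Initialize $(\ell_0, u_0) = (-\infty, +\infty)$ and $c_0 = 0$. Partition the stream into blocks $B_0, B_1, \ldots, B_L$; during block $i$, each element $x$ either increments $c_i$ (if $x \ge u_i$), is discarded (if $x \le \ell_i$), or is fed into $S$ (if $\ell_i < x < u_i$). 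At the end of phase $i$, the algorithm queries $S$ for an upper empirical quantile (defining $u_{i+1}$) and a lower empirical quantile (defining $\ell_{i+1}$), each with a safety margin of $\Theta(m)$ in population rank, so that the refined interval halves the expected candidate count while still containing $x_k$; the counter is updated as $c_{i+1} = c_i + |\{s \in S : s \ge u_{i+1}\}|$. After $L$ phases, the candidate count has shrunk to $O(m)$ and the sketch contains $x_k$, which the algorithm outputs as the $(k - c_L)$-th largest element of $S$.

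\textbf{Failure analysis.} The failure probability decomposes into two types of per-phase bad events, union-bounded over phases. (Type~A) The block-$i$ observation count in the candidate interval deviates by a constant factor from its $\Theta(m)$ expectation, causing the sketch to overflow or to lose reliable quantile information; since this count is a Bernoulli/hypergeometric sum with mean $\Theta(m)$, multiplicative Chernoff with slack $1/2$ gives a one-sided tail at most $\exp(-m/12)$, and a union bound over the $O(1)$ such events per phase and the $\lfloor \log_2 k \rfloor$ phases produces the $12 \lfloor \log_2 k \rfloor \exp(-m/12)$ term. (Type~B) The refined interval $(\ell_{i+1}, u_{i+1})$ fails to contain $x_k$, which requires the empirical count of in-interval elements below (or above) some threshold to deviate from its expectation by $\Omega(m)$; since this count is a sum of $\Theta(k/2^i)$ Bernoullis, Hoeffding yields a two-sided tail $2 \exp(-\Omega(m^2 / (k/2^i)))$, producing $2 \sum_{i=0}^{L-1} \exp(-m^2/(32 k/2^i))$ after summation.

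\textbf{Main obstacle.} The main difficulty is the interlocking calibration of block lengths, sketch behavior, and safety-margin widths so that simultaneously (i) the candidate population provably halves each phase, (ii) the total block length is at most $n$, (iii) the sketch stays within $O(m)$ memory throughout, and (iv) the Chernoff and Hoeffding constants match the precise exponents $1/12$ and $1/(32 k/2^i)$ in the statement. In particular, a naive ``top-$m$ of in-interval observations seen in block $i$'' sketch does not fit enough fresh samples to approximate the early intervals' quantiles with the required accuracy (the required sample size at phase~$0$ is $\Theta((k/m)^2)$, far exceeding $m$ when $m = \Theta(\sqrt{k})$). Hence the sketch must be \emph{reused} across phases: in-interval elements from earlier phases that survive the refinement continue to inform later quantile estimates, amortizing samples across the run. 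A secondary subtlety is the endgame---certifying that $x_k$ lies \emph{exactly} (and not merely approximately) in the final sketch---which requires tight tracking of $c_i$ and conservative safety margins so that the target element is never prematurely evicted or miscounted.
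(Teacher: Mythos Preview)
Your phased plan and the two-term error decomposition have the right shape, but the algorithmic core has a genuine gap—precisely the one you flag in your ``main obstacle'' paragraph—and your proposed fix does not close it.

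The problem is phase~$0$. Your interval starts at $(\ell_0,u_0)=(-\infty,+\infty)$, and to refine it so that the new interval both (i) contains $x_k$ and (ii) roughly halves the candidate population, you must pick thresholds whose \emph{population} ranks are accurate to within $\Theta(m)$; otherwise your Type~B bound $\exp(-\Omega(m^2/k))$ cannot hold. With an $O(m)$-memory sketch of in-interval observations from a single block, the population rank of any element you output carries error at least $\Omega(k/m)$ from the sketch resolution plus $\Omega(k/\sqrt{s})$ from sampling (where $s$ is the number of in-interval observations in that block). For both to be $O(m)$ you need $m\ge \sqrt{k}$ \emph{and} $s\ge (k/m)^2$; at $m=\Theta(\sqrt{k})$ the latter forces $s=\Theta(k)$, which a single block cannot supply without consuming essentially the whole stream. ``Reusing the sketch across phases'' cannot help phase~$0$, since there are no earlier samples to reuse. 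As written, your scheme appears to need $m=\Omega(k^{2/3})$, not $\Theta(\sqrt{k})$.

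The paper's algorithm sidesteps this via the Munro--Paterson mechanism, which you do not invoke: it stores a length-$3m$ array of \emph{consecutive} observed elements together with the \emph{exact} rank of the array's center among everything seen so far. The stages run in the direction opposite to yours: stage~$i$ has read a prefix of length $\approx n\cdot 2^i/k$ and targets rank $T_i\approx 2^i$, so both the observed prefix and the target rank \emph{double} each stage. Because the center's rank is tracked exactly, there is no quantile-estimation subproblem at all; when the prefix doubles, that rank roughly doubles with deviation $O(\sqrt{T_i})\le O(\sqrt{k})$ (this is the source of the $\exp(-m^2/(32\,k/2^i))$ term, via a Chernoff bound on a binomial with $\Theta(T_i)$ trials), so an $m=\Theta(\sqrt{k})$ buffer absorbs the shift and the array is then \emph{recentered exactly} on the new target rank. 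The $\exp(-m/12)$ term arises from separate events—too few new elements falling inside the window to refill the buffer after recentering (Conditions~2--4 of the paper's good event). This exact-rank tracking plus recentering is what prevents error from accumulating across stages, and it is the missing ingredient in your plan.
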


In particular, setting $m = O(\sqrt{k\log(1/\delta)} + \log(1/\delta))$ is sufficient for succeeding with probability $\ge 1 - \delta$.

\Cref{thm:quantile-exact} follows from the technique of \cite{MP80}: the algorithm maintains $m$ consecutive elements among the stream that has been observed so far, in the hope that the $k$-th largest element is among these $m$ elements at the end. \cite{MP80} showed that this strategy finds the median of a length-$n$ random-order stream (i.e., the $k = n/2$ case) using $O(\sqrt{n})$ memory, and proved a matching $\Omega(\sqrt{n})$ lower bound. Our result extends the result of~\cite{MP80} to general $k$. In \Cref{sec:overview-exact}, we introduce the technique of~\cite{MP80} in more detail, and sketch our strategy for handling biased quantiles (i.e., the $k \ll n$ case).

\subsection{Related Work}
\paragraph{The secretary problem and its variants.} 
The classical secretary problem is often attributed to~\cite{dynkin1963optimum}, though the exact origin of the problem is obscure~\cite{freeman1983secretary,Ferguson1989}; different versions of the problem were studied in the contemporary work of~\cite{Lindley61,CMRS64,GM66}. 

The natural extension of selecting up to $k$ elements (i.e., the $k$-secretary problem) was studied by~\cite{GM66,AMW01,Kleinberg05,BIKK07} under various objectives (e.g., the probability of selecting the $k$ largest elements exactly, or the largest element being among the chosen ones). We followed the formulation of~\cite{Kleinberg05} in terms of the competitive ratio. \cite{Kleinberg05} showed that the optimal competitive ratio is $1 - \Theta(1/\sqrt{k})$ as $k \to \infty$. \cite{BIKK07} subsequently gave $1/e$-competitive algorithms for all $k \ge 1$, improving the result of~\cite{Kleinberg05} in the small-$k$ regime. A more recent work of~\cite{AL21} also focused on the non-asymptotic regime, and gave a deterministic algorithm with a competitive ratio $> 1/e$ for all $k \ge 2$.

A further extension of the $k$-secretary problem considers selecting multiple elements subject to a more general combinatorial constraint. \cite{BIKK07} studied the knapsack secretary problem. \cite{BIK07,BIKK18} introduced the \emph{matroid secretary problem}, in which the player is asked to select elements that form an independent set of a given matroid. This problem has been extensively studied~\cite{chakraborty2012improved,soto2013matroid,jaillet2013advances,dinitz2014matroid,lachish2014log, feldman2014simple,huynh2020matroid,soto2021strong,AKKO23}, and it remains a long-standing open problem whether an $O(1)$-competitive algorithm exists in general. Other variants of the secretary problem consider alternative models for how the player accesses the elements and makes decisions, including models of interview costs~\cite{bartoszynski1978secretary}, shortlists~\cite{ASS19}, reservation costs~\cite{burjons2021secretary}, and a ``pen testing'' variant~\cite{QV23,ganesh2023combinatorial}.

\paragraph{Quantile estimation.} For the quantile estimation problem in random-order streams, \cite{MP80} gave an exact selection algorithm for the $k = \lfloor n/2\rfloor$ case (i.e., finding the median) with $O(\sqrt{n})$ memory, and proved a matching $\Omega(\sqrt{n})$ memory lower bound. Our \Cref{thm:quantile-exact} generalizes their result to general values of $k$, and their lower bound implies that the $O(\sqrt{k})$ memory usage cannot be improved in general (specifically, in the $n = 2k$ case).

For approximate selection, \cite{guha2009stream}  proposed an algorithm that uses $O(1)$ words of memory and finds the $k$-th largest element up to an error of $O(\sqrt{k}\log^2 n\cdot\log(1/\delta))$ with probability $1 - \delta$. In comparison, \Cref{thm:quantile-approx} bounds the expected error and avoids the extra $\polylog(n)$ factor, both of which are crucial for the optimality of the resulting $k$-secretary algorithm. A subsequent work of~\cite{MV12} solved median estimation with an $n^{1/3 + o(1)}$ error using $O(1)$ words of memory. In comparison, \Cref{thm:quantile-approx} applies to all values of $k$ and makes the error dependent only on $k$ and not $n$, at the cost of a larger exponent of $1/2$ and a logarithmic memory usage. We note that, even if we could improve the error to $O(k^{1/3})$, the reduction from $k$-secretary still introduces an $\Omega(\sqrt{k})$ error, which would dominate the quantile estimation error.

While we focus on the random-order setup of quantile estimation, many prior work also addressed the more challenging setup in which elements arrive in an arbitrary order~\cite{MP80,manku1998approximate,KLL16,masson2019ddsketch,gupta2024optimal}. The multi-pass setting, in which the algorithm may scan the stream multiple times, was also considered~\cite{MP80,guha2009stream}.

\paragraph{Learning and decision making under memory constraints.} More broadly, our work is part of the endeavor to understand the role of memory in learning, prediction, and decision-making. Prior work along this line studied the memory bounds for parity learning in a streaming setting~\cite{valiant2016information,steinhardt2016memory,kol2017time,raz2018fast, garg2018extractor}, for the experts problem in online learning~\cite{srinivas2022memory,PZ23,peng2023near}, as well as the fundamental problem of linear regression~\cite{sharan2019memory, marsden2022efficient, blanchard2023memory, blanchard2024gradient}.

\section{Proof Overview}
In this section, we give high-level sketches of our proofs and highlight some of the technical challenges. We recommend the readers to read this section before delving into the formal proofs in the appendix, as the technicalities might obscure some of the simple intuitions behind our algorithms.


We introduce our algorithm for \Cref{thm:quantile-approx} in \Cref{sec:overview-approx}. The key idea is to consider a sub-problem obtained by filtering out the elements that are above a certain threshold, which reduces the scale of the problem rapidly enough. In \Cref{sec:overview-details}, we discuss a few technical challenges in turning an idealized analysis of the algorithm into a formal proof.

In \Cref{sec:overview-exact}, we sketch the proof of \Cref{thm:quantile-exact}, which is based on a different idea of~\cite{MP80}. The algorithm maintains a block of $m$ consecutive elements in the stream that has been observed so far. Whenever a new element comes, the algorithm tries to ``drift'' the length-$m$ block, in the hope that the $k$-th largest element is always within the block. \cite{MP80} framed this as a random walk problem, for which we give a new solution for general $k$ and in the $m = \Omega(\sqrt{k})$ regime.

We end the section by sketching the proof of \Cref{prop:reduction} (in \Cref{sec:overview-reduction}), which reduces the $k$-secretary problem to quantile estimation using a variant of the algorithm of~\cite{Kleinberg05}.

\subsection{An Approximate Algorithm via Iterative Conditioning}\label{sec:overview-approx}
Now, we introduce our algorithm for \Cref{thm:quantile-approx}. The algorithm is a recursive one: we repeatedly reduce the problem of finding the $k$-th largest element to finding the $k'$-th largest (for some $k' \ll k$) in a subsequence of the stream. Before presenting the actual algorithm, we start by explaining why a na\"ive attempt fails, after which the key idea of our algorithm becomes more transparent.

\paragraph{A na\"ive reduction.} How should we solve quantile estimation for a specific value of $k$, if we already have an algorithm (denoted by $\A_{k/2}$) that solves the $k' = k/2$ case with an $O(\sqrt{k'}) = O(\sqrt{k})$ error? The answer is simple: we run $\A_{k/2}$ on the first $n/2$ elements of the sequence and return its answer. Assuming that the output of $\A_{k/2}$, denoted by $x^*$, is exactly the $(k/2)$-th largest element among $s_{1:(n/2)}$, a simple concentration argument would show that the rank of $x^*$ among $s_{1:n}$ is roughly $k \pm O(\sqrt{k})$. Now, let $l \coloneqq \rank_{1:(n/2)}(x^*)$ denote the actual rank of $x^*$ among the first half of the sequence. Our assumption on $\A_{k/2}$ implies that $l \approx k/2 \pm O(\sqrt{k})$, which further gives
\[
    \rank_{1:n}(x^*)
\approx 2l \pm O(\sqrt{l})
\approx k \pm O(\sqrt{k}).
\]

At first glance, the above seems to suggest an extremely simple solution to quantile estimation: we repeatedly apply the reduction above, and reduce the value of ``$k$'' to $k/2, k/4, \ldots, 1$. For the base case that ``$k$'' equals $1$, we can find the largest element exactly using $O(1)$ memory. However, this approach is equivalent to outputting the largest one among the first $\approx n/k$ elements, and its error can be easily shown to be $\Omega(k)$. What goes wrong here? The issue is that we ignored the blow-up in the error during the reduction. Define random variables $X_1, X_2, X_4, \ldots, X_{k/2}, X_k$ such that $X_{k'}$ denotes the error in the rank when we solve the instance with $k = k'$. Our discussion from the last paragraph shows that, for some universal constant $C > 0$, $(X_k)$ satisfies the dynamics
\begin{equation}\label{eq:dynamics-naive}
    X_1 = 0,
\quad
    X_k \approx 2X_{k/2} + C\sqrt{k}.
\end{equation}
Expanding the above gives
\[
    X_k
\approx C\sqrt{k} + 2C\sqrt{k/2} + 4C\sqrt{k/4} + \cdots + (k/2)\cdot C\sqrt{2}
=   \Theta(k),
\]
since the error incurred at $k' = O(1)$ would be doubled roughly $\log_2 k$ times through the iteration $X_k \approx 2X_{k/2} \pm C\sqrt{k}$, resulting in the dominant term of $\Omega(k)$ in the error.

\paragraph{The actual reduction.} We would get an $O(\sqrt{k})$ error bound if we could shrink the value of $k$ faster. Imagine that, instead of reducing to an instance with $k' = k/2$, we can reduce to an instance with $k' = k / 100$, while the error still propagates in the same way. Then, the recursion in \Cref{eq:dynamics-naive} would be replaced with $X_k \approx 2X_{k/100} + C\sqrt{k}$, which leads to
\[
    X_k \approx C\sqrt{k} + 2C\sqrt{k/100} + 2^2 C\sqrt{k/100^2} + \cdots + 2^{\log_{100} k}C\sqrt{1}
=   O(\sqrt{k}).
\]

Therefore, the key is to reduce the scale of the problem (measured by parameter $k$) at a slightly faster pace. Our algorithm does this in a fairly simple way. First, we divide the length-$n$ sequence $s$ into two halves $s_{1:B}$ and $s_{(B+1):n}$ for some $B \approx n / 2$. Then, we subsample a random $p$-fraction of the first half, where $p = \Theta(m/k)$, by drawing $B_1 \sim \Binomial(B, p)$ and taking the first $B_1$ elements. We find the top $m$ elements among $s_{1:B_1}$, denoted by $M[1], M[2], \ldots, M[m]$, using $O(m)$ memory (in the straightforward way). After reading the remaining $B - B_1$ elements in $s_{1:B}$, we can compute the rank $\rank_{1:B}(M[i])$ for every $i \in [m]$. Then, we find an index $i^*$ such that
\[
    \rank_{1:B}(M[i^*])
<   k/2
<   \rank_{1:B}(M[i^* + 1]).
\]
Let $a' \coloneqq M[i^*]$ and $k' \coloneqq k/2 - \rank_{1:B}(a')$. We know that the $(k/2)$-th largest element among $s_{1:B}$ is simply the $k'$-th largest element among $s_{1:B} \cap (-\infty, a')$. Therefore, we recursively find the $k'$-th largest element among $s_{(B+1):n} \cap (-\infty, a')$, in the hope that it will be a good approximation for the $k$-th largest element overall.

To make this work, we need the parameter $k'$ for the recursive call to be smaller than $k$ be a sufficiently large factor. This is indeed the case, since the gaps in the ranks of $M[1], M[2], \ldots, M[m]$ among $s_{1:B}$ roughly follow the geometric distribution $\Geo(p)$, so each of them has an expectation of $\le 1/p = O(k/m)$. By setting $m = \Omega(\log k)$, we can ensure that every gap is smaller than $C_0 \cdot k$ (for any small constant $C_0$), with probability $1 - 1/\poly(k)$.

The reduction sketched above actually reduces quantile estimation to a slightly different problem: instead of finding the $k$-th largest element among $s_{1:n}$, we find the $k$-th largest among $s_{1:n} \cap (-\infty, a')$. This does not pose a problem as the reduction works for this generalized version as well.

\subsection{A Few Technical Details}\label{sec:overview-details}

It should be noted that, while the intuition behind the algorithm is simple, it turns out to be highly non-trivial to state and analyze the algorithm rigorously.

\paragraph{Handling edge-cases.} Formally, the recursive algorithm takes parameters $n$, $k$, and $a$ as inputs, and aims to find the $k$-th largest element among $s_{1:n} \cap (-\infty, a)$, where $s_{1:n}$ is the upcoming random-order sequence of length $n$. The quantile estimation instance is invalid if $k > n' \coloneqq |s_{1:n} \cap (-\infty, a)|$. While the outermost call to the algorithm (where $a = +\infty$) is always valid, the algorithm might eventually lead to a recursive call in which the parameter $k$ becomes invalid. In this case, our algorithm always returns $-\infty$ as the answer. Then, in the previous level of recursion---which makes this invalid call---the algorithm translates this $-\infty$ to the smallest element in the stream.

The second edge case is that we fail to find a good choice of $a'$ that significantly reduces the value of $k$. This can, in turn, happen for three different reasons: (1) $|s_{1:B} \cap (-\infty, a)| < k/2$, in which the first half of the sequence does not have a $(k/2)$-th largest element; (2) the $(k/2)$-th largest element exists, but none of the elements with ranks in $[k/2 - C_0\cdot k, k/2]$ get subsampled into $s_{1:B_1}$, so none of them is present in the array $M$; (3) Some element with rank in $[k/2 - C_0\cdot k, k/2]$ gets subsampled into $s_{1:B_1}$, but more than $m$ elements with ranks in $[1, k/2]$ get subsampled, so that the one with a rank closest to $k/2$ is not kept in $M$.

For the latter two cases, we show that for some appropriate choice of $m = \Omega(\log k)$, their total probability is at most $1/\poly(k)$. Thus, if we detect that either of them happens, we can choose to return the largest element, which leads to a rank error of $k-1$ and becomes negligible after multiplying with the $1/\poly(k)$ probability. The first case is trickier since it \emph{can} happen with a constant probability.\footnote{Consider the case that $k \approx n' \coloneqq |s_{1:n} \cap (-\infty, a)|$, in which case $|s_{1:B} \cap (-\infty, a)|$ roughly follows $\Binomial(n', 1/2)$, and can be below $k/2$ with probability $\approx 1/2$.} In that case, we return the smallest element (with a rank error of $n' - k$), and it turns out that its contribution to the overall error is still under control.

\paragraph{Towards a rigorous analysis.} In addition to the need of handling the edge cases outlined above, the analysis of our algorithm is necessarily complicated due to the complex dependence between different parts of the algorithm.

For simplicity, we assume that the algorithm proceeds in a ``typical'' way, i.e., none of the edge cases happen. Furthermore, we analyze the outermost level of recursion, where the threshold parameter is $a = +\infty$ (i.e., no element gets ignored). Then, roughly speaking, our algorithm has the following three steps:
\begin{enumerate}
    \item[\textbf{Step 1.}] Compute a threshold $a' \in s_{1:B}$ and let $k' \coloneqq k/2 - \rank_{1:B}(a')$.
    \item[\textbf{Step 2.}] Run the algorithm recursively on $s_{(B+1):n}$ with parameters $n - B$, $k'$ and $a'$.
    \item[\textbf{Step 3.}] Return the output $x^*$ of the recursive call as the answer.
\end{enumerate}

Then, the straightforward approach to analyzing the above would be an inductive one: Let $l \coloneqq \rank_{(B+1):n}(x^*; a')$ denote the actual rank of $x^*$ among $s_{(B+1):n} \cap (-\infty, a')$. By the inductive hypothesis, $|l - k'| = O(\sqrt{k'})$ in expectation (denoted by $l \approx k' \pm O(\sqrt{k'})$ informally). If $B$ is sampled from $\Binomial(n, 1/2)$, $\{s_1, s_2, \ldots, s_B\}$ would be uniformly distributed among all subsets of $s_{1:n}$. Then, a concentration argument suggests
\begin{equation}\label{eq:overview-concentration}
    \rank_{1:B}(x^*) \approx (k/2 - k') + l \pm O(\sqrt{l})
\quad\text{and}\quad
    \rank_{(B+1):n}(x^*) \approx (k/2 - k') + l \pm O(\sqrt{k}).
\end{equation}
In total, we would obtain $\rank_{1:n}(x^*) = \rank_{1:B}(x^*) + \rank_{(B+1):n}(x^*) \approx k \pm O(\sqrt{k})$.

Unfortunately, the analysis above is technically incorrect. We were analyzing the \emph{conditional} concentration of $\rank_{1:B}(x^*)$ and $\rank_{(B+1):n}(x^*)$ given the values of $a'$, $k'$ and $l$. Since $(a', k', l)$ is determined by the value of $B$ as well as ordering of $s$, after the conditioning, $s_{1:B}$ is no longer uniformly distributed, which invalidates the concentration argument.

Towards a rigorous analysis, we need to carefully untangle the randomness in the three steps above. Crucially, we note that our algorithm is comparison-based, i.e., the behavior of the algorithm is unchanged if we replace $s$ with a different sequence $s'$ with the same ordering as $s$. This motivates the following order in which we ``realize'' the randomness:
\begin{itemize}
    \item First, we sample $B \sim \Binomial(n, 1/2)$.
    \item Then, towards analyzing Step~1, we realize the \emph{relative ordering} of the elements $s_1, s_2, \ldots, s_B$ (out of the $B!$ possibilities). This ordering is sufficient for determining $i \coloneqq \rank_{1:B}(a')$, the rank of $a'$ among $s_{1:B}$ as well as the value of $k'$.
    \item We also realize the value of $i_1 \coloneqq \rank_{1:n}(a')$. Conditioning on $(B, i)$, this rank $i_1$ is identically distributed as the $i$-th smallest element in a size-$B$ subset of $[n]$ chosen uniformly at random.
    \item After that, we realize the \emph{relative ordering} of the last $(n-B)$ elements $s_{(B+1):n}$ (out of the $(n-B)!$ possibilities). This, in turn, determines the rank of $x^*$ (the output of the recursive call) among $s_{(B+1):n} \cap (-\infty, a')$, denoted by $l \coloneqq \rank_{(B+1):n}(x^*; a')$.
    \item At this point, even after conditioning on the values of $(B, i, i_1, l)$, the subset $\{s_1, s_2, \ldots, s_B\}$ is still uniformly distributed among all size-$B$ subsets $S \subseteq \{s_1, s_2, \ldots, s_n\}$, subject to the constraint that the $i_1$-th largest element among $s_{1:n}$ (namely, $a'$) is the $i$-th largest element among $s_{1:B}$. 
\end{itemize}
The uniformity that we retain in the last step above allows us to rigorously prove bounds that are qualitatively similar to \Cref{eq:overview-concentration}.

\subsection{An Exact Algorithm via Maintaining Consecutive Elements}\label{sec:overview-exact}
We sketch our proof of~\Cref{thm:quantile-exact}, which is based on a very different algorithm. The key idea of the algorithm is to maintain \emph{consecutive} elements among the elements that have been observed so far, which was used by~\cite{MP80} to solve the median selection problem (i.e., the special case that $k = n / 2$).

\paragraph{The algorithm as a random walk (with limited control).} At any time $t$, after reading the first $t$ elements $s_1, s_2, \ldots, s_t$ in the random-order stream, the algorithm maintains $m$ consecutive elements out of them. Here, ``consecutive'' is with respect to the sorted order of the elements, rather than the order in which they arrive. Formally, suppose that the first $t$ elements of $s$, when sorted in decreasing order, are given by $x_1 > x_2 > \cdots > x_t$. The algorithm stores the elements $x_l, x_{l+1}, \ldots, x_r$ as well as the values of $l$ and $r$, where $r - l + 1 = m$.

Then, what happens when the next element $x' \coloneqq s_{t+1}$ arrives? By the random-order assumption, $x'$ is equally likely to fall into the $t+1$ intervals below:
\[
    (-\infty, x_t), (x_t, x_{t-1}), \ldots, (x_2, x_1), (x_1, +\infty).
\]
In particular, we have the three cases below:
\begin{enumerate}
    \item[\textbf{Case 1.}] $x' > x_l$. This happens with probability $l / (t + 1)$. In this case, we cannot add $x'$ to the array without breaking the invariant that the elements are consecutive. Therefore, we have to keep the $m$ stored elements unchanged, and the parameters $(l, r)$ become $(l', r') = (l + 1, r + 1)$ in the next step.
    \item[\textbf{Case 2.}] $x' < x_r$. Similarly, with probability $(t - r + 1) / (t + 1)$, the condition $x' < x_r$ holds. Again, we cannot update the $m$ elements, and the parameters in the next step remain unchanged, i.e., $(l', r') = (l, r)$.
    \item[\textbf{Case 3.}] $x' \in (x_r, x_l)$. The most interesting case is that $x'$ falls into one of the $r - l = m-1$ ``gaps'' among $x_l > x_{l+1} > \cdots > x_r$, which happens with probability $(m - 1) / (t + 1)$. In this case, we would obtain $m + 1$ consecutive elements among $s_{1:(t+1)}$. Then, we need to kick out one of the two elements at the ends. This gives us some freedom in deciding whether $(l', r') = (l, r)$ (by kicking out the smallest element) or $(l', r') = (l + 1, r + 1)$ (by kicking out the largest element).
\end{enumerate} 
At the end of the game, we successfully find the $k$-th largest element if $l \le k \le r$.

Following this strategy, the quantile estimation problem becomes a control problem: We start at time $t = m$ with $(l, r) = (1, m)$. At each of the following steps $t = m + 1, m + 2, \ldots, n$, the value of $(l, r)$ transitions according to the three cases above---either it transitions deterministically (in Cases 1~and~2), or the algorithm may specify whether $l$ and $r$ get incremented by $1$ in Case~3. The goal is to design our strategy in Case~3 (which may vary for different time step $t$ and the values of $(l, r)$ at that step), so that the probability of $k \in [l, r]$ is maximized at time $n$.

The above was exactly the idea in the work of~\cite{MP80}, who studied the special case of $k = n/2$. For that special case, their strategy for Case~3 is to choose $(l', r')$ such that $\left|\frac{l' + r'}{2} - \frac{t + 2}{2}\right|$ is minimized. Intuitively, this makes sure that the median of the stream that has been observed so far is as close to the center of the length-$m$ array as possible. They analyzed the resulting random walk, and showed that  a memory of $m = \Theta(\sqrt{n})$ is sufficient. For the general $k$ case, analyzing the random walk associated with the analogue of the algorithm of~\cite{MP80} becomes more difficult.\footnote{As~\cite{MP80} remarked in their work, this random walk is ``difficult to analyze exactly since the transition probabilities vary with [the value of $(l, r)$] and with time''.}

\paragraph{The actual algorithm.} Our algorithm behind \Cref{thm:quantile-exact} can be viewed as a \emph{staged} strategy for the control problem of~\cite{MP80}, which admits a slightly simpler analysis.

The algorithm is divided into $\approx \log_2 k$ stages. Roughly speaking, the goal of each stage~$i \in \{0, 1, \ldots, \log_2 k\}$ is to make sure that, after reading the first $(n/k)\cdot 2^i$ elements, we have $l \le 2^i \le r$. What happens when we go from stage~$i$ to stage~$i+1$? Suppose that, at the end of stage~$i$, the $2^i$-th largest element so far (denoted by $x^*$) is exactly in the middle of the array, i.e., $2^i = \frac{l + r}{2}$. As we read the additional elements in stage~$(i+1)$, we maintain the $m$ consecutive elements, such that $x^*$ is kept in the middle of the array.

By a concentration argument, after stage~$i+1$, the rank of $x^*$ is bounded between $2^{i+1} \pm O(\sqrt{2^i}) = 2^{i+1} \pm O(\sqrt{k})$. Then, as long as $m \gg \sqrt{k}$, the actual element with rank $2^{i+1}$, denoted by $\tilde x$, must be among the $m$ consecutive elements maintained by the algorithm. Then, our algorithm shortens the length-$m$ array that contains consecutive elements, so that $\tilde x$ becomes the center of the array. Here, we deviate from the plan outlined earlier, as the value of $r - l + 1$ can drop below $m$ from time to time. Fortunately, by another concentration argument, as we go from stage~$i+1$ to stage~$i+2$, we are going to ``absorb'' additional elements in to the array, so that the length of the array returns to $m$ before we shorten the array again.

The proof sketch above can be formalized into a high-probability guarantee, which states that, as long as $m = \Omega(\sqrt{k})$, the algorithm proceeds in the hoped-for manner except with a tiny probability.

\paragraph{Another perspective.} We remark that our algorithm can alternatively be viewed as a remedy of the ``na\"ive reduction'' from \Cref{sec:overview-approx}. In \Cref{eq:dynamics-naive}, by reducing to the $k' = k/2$ case, the error doubles and increases by an $O(\sqrt{k})$ amount, resulting in a trivial error bound of $O(k)$. In the algorithm outlined above, however, by utilizing the strategy of~\cite{MP80}, we can offset the error by $O(m)$ at each iteration. In particular, as long as $m \gg \sqrt{k}$, the errors accumulated in all stages can be canceled out. This leads to our exact selection guarantee.

\subsection{From Quantile Estimation to Secretary Problem}\label{sec:overview-reduction}
We sketch how an accurate quantile estimator leads to a competitive $k$-secretary algorithm via a reduction similar to the algorithm of~\cite{Kleinberg05}. Kleinberg's algorithm is a recursive one: to solve the $k$-secretary problem on a length-$n$ stream, we divide the stream into two halves, each of length $\approx n/2$. We run the algorithm recursively for the $(k/2)$-secretary instance formed by the first half. In parallel, we find the $(k/2)$-th largest element, denoted by $x^*$, among the first half. Then, we read the second half of the sequence, and accept every element that is larger than $x^*$, until at least $k/2$ elements among the second half have been accepted.

To gain some intuition about the $1 - O(1/\sqrt{k})$ competitive ratio, consider a $k$-secretary instance that consists of $k$ copies of $1$ and $n - k$ copies of $0$. To be exactly optimal, the algorithm needs to accept every single element of value $1$. When running Kleinberg's algorithm, however, it holds with probability $\Omega(1)$ that the second half of the stream only contains $k/2 - \Omega(\sqrt{k})$ copies of $1$, in which case we lose an $\Omega(1/\sqrt{k})$ term in the competitive ratio.

Towards proving \Cref{prop:reduction}, we apply Kleinberg's algorithm with the quantile estimation subroutine replaced by the (hypothetical) memory-efficient algorithm (denoted by $\A$) with an expected error of $O(k^{\alpha})$. Then, we revisit the instance considered before. For clarity, we break ties among the $k$ copies of $1$ by replacing them with $1 - \eps, 1 - 2\eps, \ldots, 1 - k\eps$ for some tiny value $\eps \ll 1/k$. Then, the first half of the sequence contains a random subset of $\{1 - i\eps: i \in [k]\}$. When we call the quantile estimator $\A$ to find the $k/2$-th largest element among the first half, it might hold with probability $\Omega(1)$ that $\A$ returns an element $x^* = 1 - i^*\eps$ for some $i^* = k - \Omega(k^{\alpha})$. In this case, the algorithm might reject $\Omega(k^{\alpha})$ non-zero elements among the second half by mistake, resulting in a competitive ratio of at best $1 - \Omega(1/k^{1-\alpha})$.

While the argument above suggests why \Cref{prop:reduction} gives the ``right'' dependence of the competitive ratio on $\alpha$, the actual proof is slightly more complicated, since we need to argue that the impact of the error in quantile estimator on the competitive ratio is smooth, so that an upper bound on the \emph{expected error} also translates into a competitive ratio (after considering the randomness in quantile estimator).

The reduction outlined above would not give the desired memory bound of $m + O(1)$ in \Cref{prop:reduction}. Kleinberg's algorithm involves running $\approx \log_2 k$ copies of the quantile estimator: roughly speaking, the $i$-th copy ($i \le \log_2 k$) is for finding the $k/2^i$-th largest element among $s_{1:(n/2^i)}$. Since these copies run on overlapping prefixes of the stream, they must run in parallel, and thus increasing the memory usage by a factor of $\log k$. The actual memory bound $m + O(1)$ is obtained from a slight modification to Kleinberg's algorithm: instead of finding the $(k/2)$-th largest element among $s_{1:(n/2)}$ and using it as the threshold $x^*$ for the second half, we choose $x^*$ as the $(k/4)$-th largest element among $s_{(n/4+1):(n/2)}$. This change only perturbs the rank of $x^*$ by $O(\sqrt{k})$ in expectation, and would be dominated by the $k^{\alpha}$ error in the quantile estimator $\A$. As a result, we would run $\log k$ copies of $\A$ on \emph{disjoint} intervals in $s$: $s_{(n/4+1):(n/2)}, s_{(n/8+1):(n/4)}, \ldots$, so it suffices to allocate $m$ words of memory for all the $\log k$ calls to $\A$, and we only use an $O(1)$ extra memory.

\section{Quantile Estimation in Logarithmic Memory}\label{section.2}
In this section, we prove \Cref{thm:quantile-approx} by giving an algorithm that uses $O(\log k)$ memory and finds the approximately $k$-th largest element in a random-order stream of $n$ elements, up to an $O(\sqrt{k})$ error in the rank in expectation.

\subsection{The Algorithm}

Our main algorithm (\Cref{algorithm.main}) calls a recursive procedure $\findkth$ (\Cref{algorithm.recurse}) with parameters $n$, $m$, $k$ and $+\infty$ to find the $k$-th largest element (among those that are smaller than $+\infty$) in the next $n$ elements. We sketch how \Cref{algorithm.recurse} works in the following, in the special case that $a = +\infty$ (i.e., the first level of the recursion):
\begin{itemize}
    \item First, it samples $B \sim \Binomial\left(n, 1/2\right)$ and divides the $n$ upcoming elements (denoted by $s_1, s_2, \ldots, s_n$) into two halves: $s_{1:B}$ and $s_{(B+1):n}$.
    \item Then, among the first half, it further sub-samples a small fraction of $B_1 \sim \Binomial\left(B, \frac{2m}{3k}\right)$ elements. The hope is to find an element $a'$ such that its rank among $s_{1:B}$ is slightly below $k/2$. (Concretely, $\rank_{1:B}(a'; a) = \lfloor k/2\rfloor - k'$ for some $k'$ between $1$ and $\delta \coloneqq C_0\cdot k$.)
    \item At this point, we know that the $(k/2)$-th largest element among $s_{1:B}$ is the $k'$-th largest element among $s_{1:B} \cap (-\infty, a')$.
    \item Finally, we call \Cref{algorithm.recurse} recursively with parameters $n' = n - B$, $k'$ and $a'$ to find the $k'$-th largest element among $s_{(B+1):n} \cap (-\infty, a')$, in the hope that it will be approximately the overall $k$-th largest among $s_{1:n}$.
\end{itemize}

In the more general case that $a \ne +\infty$, the algorithm essentially does the same thing, except that all elements that are larger than or equal to $a$ are ignored.

\begin{algorithm2e}
    \caption{Main Algorithm} \label{algorithm.main}
    \KwIn{Stream length $n$, memory size $m$, target rank $k$.}
    \KwOut{An approximately $k$-th largest element among the $n$ elements.}
    Call $\findkth(n, m, k, +\infty)$ in \Cref{algorithm.recurse}\;
\end{algorithm2e}

\begin{algorithm2e}
    \caption{$\findkth(n, m, k, a)$}
    \label{algorithm.recurse}
    \KwIn{Stream length $n$, memory size $m$, target rank $k$, element threshold $a$, and access to random-order sequence $s = (s_1, s_2, \ldots, s_n)$}
    \KwOut{The (approximately) $k$-th largest element among $s \cap (-\infty, a)$, or $-\infty$ if $|s \cap (-\infty, a)| < k$}
    Use $O(1)$ memory to keep track of: (1) the smallest element, denoted by $\underline{x} = \min\{s_1, s_2, \ldots, s_n\}$; (2) the number of elements that are strictly smaller than $a$, denoted by $n' = |s \cap (-\infty, a)|$\;
    \uIf{$k \le m$} {
        Use $O(m)$ memory to find the $k$ largest elements among $s_{1:n} \cap (-\infty, a)$\;\label{line:naive} 
        \Return the $k$-th largest element, or $-\infty$ if $n' < k$\;
    }
    Draw $B \sim \Binomial\left(n, \frac{1}{2}\right)$ and $B_1 \sim \Binomial\left(B, \frac{2m}{3k}\right)$\; \label{line:draw-B-1}
    Use a length-$m$ array $M$ to find the $m$ largest elements among $s_{1:B_1} \cap (-\infty, a)$\; \label{line:first-half}
    Read until the $B$-th element in the string to compute the value of $\rank_{1:B}(x; a)$ for each element $x$ in array $M$\;
    \uIf{$|s_{1:B} \cap (-\infty, a)| < \lfloor k/2\rfloor$}{
        Read the remaining $n - B$ elements to compute $\underline{x}$ and $n'$\;
        \Return $\underline{x}$, or $-\infty$ if $n' < k$\; \label{line.case1}
    }
    $\delta \leftarrow C_0\cdot k$, where $C_0$ is a sufficiently small constant parameter\;
    Find the largest element $a'$ in $M$ such that $\rank_{1:B}\left(a'; a\right) \in [\lfloor k/2 \rfloor - \delta, \lfloor k/2 \rfloor - 1]$\;\label{line:find-a'}
    \uIf{no such element $a'$ exists}{
        Read the remaining $n - B$ elements\;
        \Return the largest element in the array below $a$, or $-\infty$ if $n' < k$\;\label{line:no-a'}
    }
    $x \gets \findkth\left(n - B, m, \left\lfloor k / 2\right\rfloor - \rank_{1:B}\left(a'; a\right), a'\right)$\;\label{line:recursive-call}
    \lIf{$n' < k$}{\Return $-\infty$}
    \lIf{$x = -\infty$}{\Return $\underline{x}$}\label{line:translate-infinity-to-smallest}
    \Return $x$\;\label{line.end}
\end{algorithm2e}

\paragraph{Comparison-based algorithms.} We note that our algorithm is \emph{comparison-based} in the sense that it can be implemented such that the algorithm only accesses the elements in the sequence in the following three ways: (1) compare a pair of elements; (2) return an element as output; (3) pass an element as a parameter of a recursive call. One desirable property of such comparison-based algorithms is that, roughly speaking, the output of the algorithm only depends on the \emph{relative ordering} of the elements, rather than their exact identities. We formalize this property and use it as the definition of comparison-based algorithms in the following.

\begin{definition}[Comparison-based algorithms]\label{def:comparison-based}
    A quantile estimation algorithm $\A$ is comparison-based if, for any $n$ and $k$, when finding the $k$-th largest element in a random-order sequence $s$ of $n$ distinct elements, the distribution of
    \[
        \rank_{1:n}(\A(n, k, s))
    \]
    is the same regardless of the choice of the $n$ elements in $s$.

    More generally, suppose that the quantile estimation problem has an additional threshold parameter $a$. An algorithm $\A$ is comparison-based if, as long as $a \notin \{s_1, s_2, \ldots, s_n\}$, the distribution of
    \[
        \rank_{1:n}(\A(n, k, a, s); a)
    \]
    only depends on $n$, $k$, and $\rank_{1:n}(a)$, and does not depend on $a$ and the $n$ elements in $s$.
\end{definition}

\subsection{Overview of the Analysis}\label{sec:approx-analysis-overview}
The rest of this section is devoted to the analysis of \Cref{algorithm.recurse}. We start by observing the behavior of $\findkth$ in several different edge cases. We then sketch how we analyze the randomness in both the random-order stream and the algorithm, so that the randomness in different parts can be significantly decoupled.

\paragraph{Corner cases.} Let $n' \coloneqq |\{s_1, s_2, \ldots, s_n\} \cap (-\infty, a)|$ denote the number of elements that are strictly smaller than the threshold $a$. When $n' < k$, the quantile estimation problem is ill-defined, in which case $\findkth$ always returns $-\infty$. Note that the $n' < k$ case has the highest priority among all edge cases in the sense that, before the algorithm tries to output anything (possibly as a result of handling other corner cases), it makes sure to read through the end of the stream $s$ to check whether $n' < k$ (and outputs $-\infty$ if so).

On Line~\ref{line:naive}, we choose to use the straightforward algorithm for quantile estimation when $k \le m$. This serves as the boundary condition of the recursion, and also ensures that the parameter $\frac{2m}{3k}$ on Line~\ref{line:draw-B-1} is indeed in $[0, 1]$, and thus valid.

On Line~\ref{line.case1}, if it turns out that $|s_{1:B} \cap (-\infty, a)| < \lfloor k/2\rfloor$, we return the smallest element in the entire stream. To see why we do this, recall that there are $n'$ elements in $s_{1:n} \cap (-\infty, a)$. If we are not in the edge case that $n' < k$, we should expect that there are $\approx n' / 2 \ge k / 2$ elements in $s_{1:B} \cap (-\infty, a)$, since $s_{1:B}$, as a set, is uniformly distributed among all subsets of $\{s_1, s_2, \ldots, s_n\}$. Therefore, whenever the edge case $|s_{1:B} \cap (-\infty, a)| < \lfloor k/2\rfloor$ happens, we are sure that $n'$ only exceeds $k$ by a small amount, in which case outputting the $n'$-th largest element in $s \cap (-\infty, a)$ (namely, the smallest element in $s$) is accurate enough.

When the above does not happen, we aim to find an element $a'$ in the first half of the stream, $s_{1:B}$, so that $a'$ is close to the $\lfloor k/2\rfloor$-th largest among $s_{1:B} \cap (-\infty, a)$. On Line~\ref{line:no-a'}, if we fail to find such an $a'$, we return the largest element among $s_{1:n} \cap (-\infty, a)$, which has a rank of $1$. While doing so leads to an error of $k - 1$ in the rank, as we show in \Cref{lemma:prob-find-a'}, this edge case only happens with probability $1/\poly(k)$, so the contribution to the expected error is negligible.

If none of the edge cases above happen, $\findkth$ calls itself recursively and gets an output $x$. The final edge case is that this $x$ might take value $-\infty$, as a result of encountering the first edge case in the recursive call. If this happens, we translate $-\infty$ to the smallest element $\underline{x}$ in the stream.

\paragraph{Unravel the randomness.} We will prove the error bound of $\findkth$ by induction. As mentioned in \Cref{sec:overview-details}, a rigorous analysis is complicated because, for the inductive step, we need to show that the accuracy of the recursive call
\[
    x^* \gets \findkth(n - B, m, \lfloor k/2\rfloor - \rank_{1:B}(a'; a), a')
\]
(conditioning on all the parameters) implies the accuracy of the original procedure. For this purpose, we might hope that $s_{(B+1):n}$ (as a set) is uniformly distributed among all size-$(n-B)$ subsets of $s_{1:n}$, so that we can translate the rank of $x^*$ among $s_{(B+1):n}$ to its rank among the entire stream $s_{1:n}$. Unfortunately, this uniformity might not hold, since the conditioning on $a'$ and $\rank_{1:B}(a'; a)$ would bias the conditional distribution of $s_{(B+1):n}$.

To make the analysis valid, we need to ``realize'' the randomness of $s_{1:n}$ in a specific order. In the following, we sketch the analysis assuming that none of the edge cases above happen:
\begin{itemize}
    \item First, over the randomness in $B \sim \Binomial(n, 1/2)$, $s_{1:B}$ is uniformly distributed among all subsets of $s_{1:n}$ (\Cref{lemma:subsample}). In particular, $s_{1:B} \cap (-\infty, a)$ is a uniformly random subset of $s_{1:n} \cap (-\infty, a)$. It follows that: (1) $B' \coloneqq |s_{1:B} \cap (-\infty, a)|$ follows the distribution $\Binomial(n', 1/2)$, where $n' \coloneqq |s_{1:n} \cap (-\infty, a)|$; (2) Conditioning on the realization of $B'$, $s_{1:B} \cap (-\infty, a)$ is uniformly distributed among all size-$B'$ subsets of the size-$n'$ set $s_{1:n} \cap (-\infty, a)$. 
    \item We condition on the realization of $B'$. The algorithm finds an element $a' \in s_{1:B}$ on Line~\ref{line:find-a'}. Let $i \coloneqq \rank_{1:B}(a'; a)$ denote its rank among the first half (when only elements below $a$ are considered). Note that the distribution of $i$ is solely determined by the value of $B'$, since the algorithm is comparison-based (\Cref{def:comparison-based}).
    \item Conditioning on the realization of $(B', i)$, $a'$ is identically distributed as the $i$-th largest element in a uniformly random size-$B'$ subset of $s_{1:n} \cap (-\infty, a)$. Let $i_1 \coloneqq \rank_{1:n}(a'; a)$ denote the rank of $a'$. Later, we can analyze the concentration of $i_1 \mid B', i$ using \Cref{lemma:subset-rank-concentration}. 
    \item Conditioning on the realization of $(B', i, i_1)$, the algorithm makes a recursive call (on Line~\ref{line:recursive-call}) to find the $k'$-th largest among $s_{(B+1):n} \cap (-\infty, a')$, where $k' \coloneqq \lfloor k/2\rfloor - i$. Let $x^*$ denote the element returned by the recursive call, and $l \coloneqq \rank_{B+1:n}(x^*; a')$ be its actual rank. During the inductive proof, we will use the induction hypothesis that $l \mid B', i, i_1$ concentrates around $k'$.
    \item Conditioning on $(B', i, i_1, l)$, $x^*$ is the $l$-th largest element among $s_{(B+1):n} \cap (-\infty, a')$. Furthermore, we can verify that $s_{(B+1):n} \cap (-\infty, a')$ is uniformly distributed among all size-$(n - i_1 - (B - i))$ subsets of $s_{1:n} \cap (-\infty, a')$. Applying \Cref{lemma:subset-rank-concentration} again shows that $\rank_{1:n}(x^*; a')$ concentrates, which in turn implies the concentration of $\rank_{1:n}(x^*; a) = \rank_{1:n}(x^*; a') + i_1$.
    \item Finally, our goal is to show that the rank of $x^*$, $\rank_{1:n}(x^*; a)$, concentrates around $k$. For this purpose, we will start with the conditional concentration bound above, and take an expectation over the joint distribution of $(B', i, i_1, l)$.
\end{itemize}

\subsection{The Analysis}

We will frequently use the following simple fact, which we prove in \Cref{sec:technical-lemmas}: a random prefix of a random-order stream is distributed as an independently subsampled subset.
\begin{restatable}{lemma}{lemmasubsample}\label{lemma:subsample}
    Suppose that $s_1, s_2, \ldots, s_n$ is a uniformly random permutation of a size-$n$ set $S$, and $p \in [0, 1]$. Then, for $B$ independently drawn from $\Binomial\left(n, p\right)$, $\{s_1, s_2, \ldots, s_B\}$ is distributed as a random subset of $S$ that includes every element with probability $p$ independently.
\end{restatable}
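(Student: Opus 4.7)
The plan is to verify that the two distributions assign identical probability to every event of the form $\{s_1, s_2, \ldots, s_B\} = T$, where $T$ ranges over all subsets of $S$. Since both distributions are supported on subsets of $S$, matching these point masses suffices. I claim that for each $b \in \{0, 1, \ldots, n\}$ and each $T \subseteq S$ with $|T| = b$,
\[
\Pr\left[\{s_1, s_2, \ldots, s_B\} = T\right] = p^b (1 - p)^{n - b},
\]
which is exactly the probability that the independent Bernoulli-$p$ subsample of $S$ equals $T$.

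To establish the claim, I would first condition on the event $B = b$. Because $B$ is drawn independently of the permutation, the conditional distribution of $(s_1, s_2, \ldots, s_b)$ is still uniform over ordered $b$-tuples of distinct elements of $S$. By the standard symmetry argument, for any fixed size-$b$ subset $T \subseteq S$, exactly $b! \cdot (n - b)!$ of the $n!$ permutations of $S$ have $\{s_1, \ldots, s_b\} = T$, so
\[
\Pr\left[\{s_1, s_2, \ldots, s_B\} = T \,\middle|\, B = b\right] = \frac{b! \, (n-b)!}{n!} = \binom{n}{b}^{-1}.
\]
Multiplying by the marginal $\Pr[B = b] = \binom{n}{b} p^b (1 - p)^{n - b}$ then yields the desired $p^b (1 - p)^{n - b}$.

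There is no substantive obstacle here: the only thing to be careful about is the independence of $B$ from the permutation, which is exactly what lets us first fix $B = b$ and then invoke the uniform distribution of the prefix as a set. Once the point-mass computation above is done for every $T$, summing over $T$ reconstructs the full joint distribution and completes the proof.
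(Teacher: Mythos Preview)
Your proposal is correct and essentially identical to the paper's own proof: both fix an arbitrary subset $T$ of size $b$, condition on $B=b$ using independence to keep the permutation uniform, compute $\Pr[\{s_1,\ldots,s_b\}=T]=\binom{n}{b}^{-1}$, and multiply by $\Pr[B=b]$ to obtain $p^b(1-p)^{n-b}$.
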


Ideally, we want there to be an element $a'$ such that the rank of it among $x_{1:B'}$ is in the range of $\left[\left\lfloor k/2 \right\rfloor - \delta, \left\lfloor k/2 \right\rfloor - 1\right]$ to conduct the recursion call on Line~\ref{line:recursive-call}. Recall that $n' \coloneqq |s_{1:n} \cap (-\infty, a)|$ denotes the number of elements in $s_{1:n}$ that are strictly smaller than $a$. The following lemma shows that, in a call to procedure $\findkth(n, m, k, a)$ with $n' \ge k$, it holds with high probability that we successfully find an element $a'$ on Line~\ref{line:find-a'}, unless the first half of the sequence, $s_{1:B}$, contains strictly fewer than $\lfloor k/2 \rfloor$ elements below $a$.

\begin{lemma}\label{lemma:prob-find-a'}
    Consider a call to the procedure $\findkth(n, m, k, a)$ that satisfies $n' \ge k \ge 2$. With probability at least 
    \[
        1 - \exp\left(-\frac{2}{9}\cdot m\right) - \exp\left(1-\frac{2C_0}{3}\cdot m\right),
    \]
    one of the following two is true:
    \begin{enumerate}
        \item $B' \coloneqq |s_{1:B} \cap (-\infty, a)| < \lfloor k / 2\rfloor$.
        \item On Line~\ref{line:find-a'}, the algorithm successfully finds an element $a'$ in the length-$m$ array such that $\rank_{1:B}(a'; a) \in [\lfloor k /2 \rfloor - C_0k, \lfloor k /2 \rfloor - 1]$.
    \end{enumerate}
\end{lemma}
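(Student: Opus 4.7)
My plan is to handle the two cases separately: if $B' \coloneqq |s_{1:B} \cap (-\infty, a)| < \lfloor k/2 \rfloor$ then condition~1 of the lemma already holds, so it suffices to condition on $B' \ge \lfloor k/2\rfloor$ and show that, under this conditioning, the algorithm finds $a'$ on Line~\ref{line:find-a'} except with probability at most $\exp(-2m/9) + \exp(1 - 2C_0 m/3)$. I will first condition on $B$ and on $s_{1:B}$ as a set (so that $B'$ and the entire rank structure within $s_{1:B} \cap (-\infty, a)$ become fixed); the remaining randomness is the uniform ordering within $s_{1:B}$ together with $B_1 \sim \Binomial(B, 2m/(3k))$. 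By \Cref{lemma:subsample}, this is equivalent to forming $s_{1:B_1}$ by independently retaining each element of $s_{1:B}$ with probability $p \coloneqq 2m/(3k)$, which is the source of independence I will exploit.

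Next, I will partition the elements of $s_{1:B} \cap (-\infty, a)$ into a ``top'' set $T_{\text{top}}$ of those with rank in $[1, \lfloor k/2\rfloor - \delta - 1]$ (of size at most $k/2$) and a ``good'' set $T_{\text{good}}$ of those with rank in $[\lfloor k/2\rfloor - \delta, \lfloor k/2\rfloor - 1]$ (of size at least $C_0 k - 1$ after accounting for the integer rounding of $\delta = C_0 k$). Let $N_1$ and $N_2$ denote the numbers of elements of $T_{\text{top}}$ and $T_{\text{good}}$, respectively, that land in $s_{1:B_1}$. The key observation is that if $N_1 \le m - 1$ and $N_2 \ge 1$, then Line~\ref{line:find-a'} succeeds: any element of $T_{\text{good}} \cap s_{1:B_1}$ has at most $N_1 < m$ strictly larger elements in $s_{1:B_1} \cap (-\infty, a)$, so it lies among the top $m$ and hence is stored in $M$, providing a valid candidate for $a'$.

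Finally, I will bound the two failure probabilities using the independence obtained above. Since $\mathbb{E}[N_1] \le (k/2)\cdot p = m/3$, the multiplicative Chernoff bound $\Pr[X \ge (1+\delta')\mu] \le \exp(-\delta' \mu/3)$ for $\delta' \ge 1$, applied with $\delta' = 2$ and $\mu = m/3$, yields $\Pr[N_1 \ge m] \le \exp(-2m/9)$. For the second event, independence gives $\Pr[N_2 = 0] = (1-p)^{|T_{\text{good}}|} \le \exp(-p(C_0 k - 1)) = \exp(-2C_0 m/3 + p) \le \exp(1 - 2C_0 m/3)$, where the additive ``$+1$'' in the exponent absorbs the term $p \le 1$ coming from the rounding. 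A union bound then completes the proof. I do not anticipate a serious obstacle here; the only real care is in invoking \Cref{lemma:subsample} cleanly after conditioning on the set $s_{1:B}$, and in tracking the floors so that the additive constants come out exactly as stated.
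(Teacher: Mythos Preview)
Your proposal is correct and follows essentially the same approach as the paper: condition on $B' \ge \lfloor k/2\rfloor$, invoke \Cref{lemma:subsample} to get independent $\Bern(2m/(3k))$ inclusion, and then union-bound over the two failure modes---too many top-ranked elements subsampled, and no element of the target rank window subsampled---using a multiplicative Chernoff bound and a direct $(1-p)^{\lfloor C_0 k\rfloor}$ computation. One minor imprecision: your claim that \emph{any} element of $T_{\text{good}}\cap s_{1:B_1}$ has at most $N_1$ strictly larger subsampled elements below $a$ is not literally true (other elements of $T_{\text{good}}$ in the subsample could lie above it); what is true is that the \emph{largest} element of $T_{\text{good}}\cap s_{1:B_1}$ has this property, and that suffices. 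The paper avoids this wrinkle by defining its overflow event $\event_a$ as $\sum_{i\le \lfloor k/2\rfloor}Y_i > m$, i.e., counting all ranks up to $\lfloor k/2\rfloor$ rather than only $T_{\text{top}}$, so that indeed every subsampled element of $T_{\text{good}}$ lands in $M$ under $\neg\event_a$.
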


\begin{proof}
    It is sufficient to prove the following: Conditioning on the realization of $B' \ge \lfloor k/2 \rfloor$ and the set $s_{1:B'} \cap (-\infty, a)$ (but not their order), the second condition (that we successfully find $a'$) holds with probability at least
    \[
        1 - \exp\left(-\frac{2}{9}\cdot m\right) - \exp\left(1-\frac{2C_0}{3}\cdot m\right).
    \]
    The lemma would then follow from the law of total probability.

    Conditioning on the value of $B'$ and the size-$B'$ set $s_{1:B} \cap (-\infty, a)$, the ordering in which these $B'$ elements appear in $s$ is still uniformly distributed. Then, since $B_1$ is sampled independently from $\Binomial(B, \frac{2m}{3k})$, by \Cref{lemma:subsample}, the elements that we examine on Line~\ref{line:first-half} constitute a random subset of $s_{1:B} \cap (-\infty, a)$ that includes every element with probability $\frac{2m}{3k}$ independently.

    Formally, suppose that the $B'$ elements in $s_{1:B} \cap (-\infty, a)$, when sorted in descending orders, are $x_1 > x_2 > \cdots > x_{B'}$. For each $i \in [B']$, let binary random variable $Y_i \coloneqq \1{x_i \in \{s_1, s_2, \ldots, s_{B_1}\}}$ denote whether $x_i$ is among the first $B_1$ elements (and thus examined on Line~\ref{line:first-half}). Then, as we discussed above, $Y_1$ through $Y_{B'}$ are independently distributed as $\Bern(\frac{2m}{3k})$.

    Then, we re-write the event that the algorithm fails to find a good $a'$ (on Line~\ref{line:find-a'}) as a condition regarding $Y_1$ through $Y_{B'}$. We claim that this can happen only if at least one of the following two events happen:
    \begin{itemize}
        \item Event $\event_a$: $Y_1 + Y_2 + \cdots + Y_{\lfloor k/2\rfloor} > m$.
        \item Event $\event_b$: $Y_i = 0$ holds for every integer $i \in [\lfloor k/2\rfloor - C_0k, \lfloor k/2\rfloor - 1]$.
    \end{itemize}
    To see this, we first note that the negation of $\event_b$ implies that there exists $i \in [\lfloor k/2\rfloor - C_0k, \lfloor k/2\rfloor - 1]$ such that $x_i$ is among $s_{1:B_1}$. Then, by the negation of $\event_a$, $x_i$ must be among the top $m$ elements that get examined on Line~\ref{line:first-half}. Then, $x_i$ would be a valid choice for $a'$ on Line~\ref{line:find-a'}.

    Since $Y_1 + Y_2 + \cdots + Y_{\lfloor k/2 \rfloor}$ follows $\Binomial(\lfloor k/2 \rfloor, \frac{2m}{3k})$, a Chernoff bound (relative error form) gives
    \begin{align*}
            \pr{}{\event_a}
    &=      \pr{X\sim \Binomial\left(\lfloor k/2\rfloor, \frac{2m}{3k}\right)}{X \ge m}\\
    &\le    \pr{X\sim \Binomial\left(\lfloor k/2\rfloor, \frac{2m}{3k}\right)}{X \ge 3\Ex{}{X}}\\
    &\le    e^{-\Ex{}{X}}
    =       \exp\left(-\left\lfloor\frac{k}{2}\right\rfloor\cdot\frac{2m}{3k}\right)
    \le     e^{-2m/9}. \tag{$\lfloor k/2 \rfloor \ge k/3$ for all $k \ge 2$}
    \end{align*}
    For the other event $\event_b$, since there are $\lfloor C_0 k\rfloor$ integers in $[\lfloor k/2 \rfloor - C_0 k, \lfloor k/2 \rfloor - 1]$, we have
    \[
        \pr{}{\event_b}
    =   \left(1 - \frac{2m}{3k}\right)^{\lfloor C_0 k\rfloor}
    \le \exp\left(-\frac{2m}{3k}\cdot \lfloor C_0 k\rfloor\right)
    \le \exp\left(-\frac{2m}{3k}\cdot (C_0 k - 1)\right)
    \le e^{1-(2C_0 / 3)\cdot m}.
    \]
    The lemma then follows from a union bound.
\end{proof}
  
The following lemma, which we prove in \Cref{sec:technical-lemmas}, considers a random size-$k$ subset of $[n]$, and shows how well the $i$-th smallest element in the subset concentrates.

\begin{restatable}{lemma}{subsetrankconcentration}\label{lemma:subset-rank-concentration}
    Suppose that $n \ge k \ge i \ge 1$. Let $x$ be the $i$-th smallest element in a size-$k$ subset of $\{1, 2, \ldots, n\}$ chosen uniformly at random. Then,
    \[
        \Ex{}{\left|x - i\cdot \frac{n}{k}\right|} \le 2\cdot \sqrt{i\cdot \frac{n}{k}}\cdot \frac{n}{k}+\frac{n^2}{k^2}.
    \]
\end{restatable}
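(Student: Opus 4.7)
The plan is to bound the mean absolute deviation via the triangle inequality
$$\Ex{}{|x - in/k|} \le \Ex{}{\left|x - \Ex{}{x}\right|} + \left|\Ex{}{x} - in/k\right|,$$
and then use Jensen's inequality $\Ex{}{\left|x - \Ex{}{x}\right|} \le \sqrt{\Var{}{x}}$ to reduce the first term to the variance of $x$. This splits the task into controlling the bias $\Ex{}{x} - in/k$ and the variance $\Var{}{x}$ separately.

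For this, I would use the classical closed-form expressions for the mean and variance of the $i$-th smallest element of a uniformly random size-$k$ subset of $\{1, \ldots, n\}$, namely
$$\Ex{}{x} = \frac{i(n+1)}{k+1}, \qquad \Var{}{x} = \frac{i(k-i+1)(n+1)(n-k)}{(k+1)^2(k+2)}.$$
These identities can be derived by a short combinatorial calculation from the explicit PMF $\pr{}{x = j} = \binom{j-1}{i-1}\binom{n-j}{k-i}/\binom{n}{k}$, and I would include (or briefly recall) the derivation.

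Given the formulas, the bias rewrites as $\left|\Ex{}{x} - in/k\right| = i(n-k)/[k(k+1)]$, which is at most $in/k^2 \le n^2/k^2$ using $i \le k \le n$. For the variance, the elementary bounds $(k-i+1) \le k$, $(n+1) \le 2n$, $(n-k) \le n$, and $(k+1)^2(k+2) \ge k^3$ give $\Var{}{x} \le 2in^2/k^2$, and hence $\sqrt{\Var{}{x}} \le \sqrt{2i}\cdot (n/k)$. Finally, the assumption $n \ge k$ yields $\sqrt{2} \le 2\sqrt{n/k}$, so $\sqrt{2i} \le 2\sqrt{in/k}$ and we obtain $\sqrt{\Var{}{x}} \le 2\sqrt{in/k}\cdot (n/k)$, which combined with the bias bound delivers the desired inequality.

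The main obstacle is really just the algebraic bookkeeping needed to derive or simplify the variance formula; no deep step is required. An alternative route that sidesteps the explicit variance formula is to apply Chebyshev's inequality to the hypergeometric counts $Y_j = |S \cap \{1, \ldots, j\}|$ (using $\Var{}{Y_j} \le jk/n$) together with the identity $\pr{}{x > j} = \pr{}{Y_j < i}$, and then integrate the resulting tail bounds; I would fall back to this approach if the variance formula turned out to be inconvenient to establish cleanly.
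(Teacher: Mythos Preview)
Your proposal is correct and yields the stated bound with less effort than the paper's argument, but it follows a genuinely different route. The paper does \emph{not} compute the mean and variance of $x$ directly. Instead, it reduces to hypergeometric counts $X_1+\cdots+X_j$ (indicators of membership in the random subset), invokes the Hoeffding-type MGF comparison between sampling without and with replacement (their Lemma~\ref{lemma.cite}) to obtain Chernoff-style tail bounds of the form $\pr{}{x \le in/k - m\sqrt{in/k}} \le \exp(-2m^2k^2/n^2)$ and a two-term bound on the right tail, and then integrates these tails to recover the expectation bound. Your approach replaces all of this with the exact formulas $\Ex{}{x}=i(n+1)/(k+1)$ and $\Var{}{x}=i(k-i+1)(n+1)(n-k)/[(k+1)^2(k+2)]$ plus a one-line Jensen/triangle step; the algebraic relaxations you list are all valid (in particular $k-i+1\le k$ uses $i\ge1$, and $\sqrt{2}\le 2\sqrt{n/k}$ uses $n\ge k$). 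The gain of your route is brevity and elementarity; the paper's route has the side benefit of producing sub-Gaussian/sub-exponential tail estimates along the way, though those intermediate tail bounds are not reused elsewhere in the paper, so nothing is lost by your shortcut. The only thing to be careful about is that the variance identity is stated without proof; a brief derivation via the exchangeable gap decomposition $x_{(i)}=i+\sum_{j<i}G_j$ (or a citation to a standard order-statistics text) would make the argument self-contained.
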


Recall the overall plan of our analysis from \Cref{sec:approx-analysis-overview}. The following lemma shows that, conditioning on the values of $(B', i, i_1, l)$, the $l$-th largest element among $s_{(B+1):n} \cap (-\infty, a')$, denoted by $x^*$, satisfies a certain concentration of $\rank_{1:n}(x^*; a)$. To simplify the notation, the lemma is stated for the special case that $a = +\infty$, so that $n'$ and $B'$ coincide with $n$ and $B$.

\begin{lemma}\label{lemma:rank-in-second-half-to-overall}
    Suppose that $s = (s_1, s_2, \ldots, s_n)$ is a uniformly random permutation of elements $x_1 > x_2 > \cdots > x_n$. Then, the following holds for all integers $B$, $i$, and $i_1$ that satisfy $1 \le i \le B \le n$ and $i \le i_1 \le i + (n - B)$: Let $\event$ denote the event that $x_{i_1}$ is the $i$-th largest element among $s_{1:B}$. Conditioning on event $\event$, for any $l \in \{1, 2, \ldots, (n - B) - (i_1 - i)\}$, the $l$-th largest element among $s_{(B+1):n} \cap (-\infty, x_{i_1})$, denoted by $x^*$, satisfies
    \begin{align*}
        &~\Ex{}{\left|\rank_{{1:n}}(x^*) - \left(i_1+l\cdot \frac{n-i_1}{(n - i_1) - (B-i)}\right)\right|}\\
    \le &~2 \cdot \left(\frac{n-i_1}{n-i_1-(B-i)}\right)\sqrt{l\cdot \frac{n-i_1}{(n-i_1)-(B-i)}} + \left(\frac{n-i_1}{(n-i_1)-(B-i)}\right)^2,
    \end{align*}
    where the expectation is over the randomness in $s_1, s_2, \ldots, s_n$ after conditioning on $\event$.
\end{lemma}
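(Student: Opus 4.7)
The plan is to reduce the statement to a direct application of \Cref{lemma:subset-rank-concentration}, by carefully identifying the conditional distribution of the elements of $s_{(B+1):n}$ that lie below $x_{i_1}$, given event $\event$.

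First, I would unpack the combinatorial content of $\event$. The event says that $x_{i_1} \in s_{1:B}$ and that exactly $i-1$ of the elements larger than $x_{i_1}$ (i.e., from $\{x_1,\ldots,x_{i_1-1}\}$) lie in $s_{1:B}$. Consequently, the first half $s_{1:B}$ contains exactly $B - i$ elements from $\{x_{i_1+1},\ldots,x_n\}$, and the second half $s_{(B+1):n}$ contains exactly $K \coloneqq (n-i_1) - (B-i)$ of them. By the symmetry of the uniform permutation under swapping elements within each of the two value-groups (those larger than $x_{i_1}$ versus those smaller), a short counting argument shows that, conditioned on $\event$, the set $T \coloneqq s_{(B+1):n} \cap (-\infty, x_{i_1})$ is uniformly distributed over all size-$K$ subsets of the $N \coloneqq n - i_1$ elements $\{x_{i_1+1}, \ldots, x_n\}$.

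Second, I would reindex to invoke \Cref{lemma:subset-rank-concentration}. Relabel $x_{i_1 + j}$ by the integer $j \in \{1,\ldots,N\}$, so that larger $j$ corresponds to smaller value. Under this relabeling, $T$ becomes a uniformly random size-$K$ subset of $\{1,2,\ldots,N\}$, and the label of $x^*$ (the $l$-th largest element of $T$) is exactly the $l$-th smallest integer in $T$. Applying \Cref{lemma:subset-rank-concentration} with $(n,k,i)\leftarrow (N,K,l)$ yields
\[
    \Ex{}{\bigl|\mathrm{label}(x^*) - l\cdot N/K\bigr|}
\le 2\sqrt{l\cdot N/K}\cdot (N/K) + (N/K)^2.
\]
Since $\rank_{1:n}(x^*) = i_1 + \mathrm{label}(x^*)$ is a deterministic shift by $i_1$, substituting $N = n-i_1$ and $K = (n-i_1)-(B-i)$ recovers the claimed bound verbatim.

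The only real obstacle is justifying the uniform-subset claim for $T \mid \event$. I would handle it by the following one-line counting: for any fixed size-$K$ subset $S \subseteq \{x_{i_1+1},\ldots,x_n\}$, the number of permutations $s$ consistent with both $\event$ and the equation $s_{(B+1):n} \cap (-\infty,x_{i_1}) = S$ factors into (i) choosing which $B$ positions host $x_{i_1}$ and the $i-1$ larger elements and the $B-i$ smaller ones, and (ii) ordering each group within its allocated positions; none of these counts depend on the identity of $S$, only on $|S| = K$. After this, the remainder of the argument is essentially bookkeeping of the substitution $(N,K,l)\mapsto (n-i_1,\,(n-i_1)-(B-i),\,l)$ in the conclusion of \Cref{lemma:subset-rank-concentration}.
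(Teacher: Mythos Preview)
Your proposal is correct and follows essentially the same approach as the paper: identify that, conditioned on $\event$, the set $s_{(B+1):n}\cap(-\infty,x_{i_1})$ is a uniformly random size-$K$ subset of $\{x_{i_1+1},\ldots,x_n\}$ with $K=(n-i_1)-(B-i)$, apply \Cref{lemma:subset-rank-concentration} with parameters $(N,K,l)$, and shift by $i_1$. Your counting justification for the uniformity claim is slightly more explicit than the paper's, which simply asserts it, but the argument is otherwise identical.
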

\begin{proof}
    Let $a' \coloneqq x_{i_1}$. Since $a'$ is the $i$-th largest element in $s_{1:B}$, exactly $B - i$ elements in $s_{1:B}$ are strictly smaller than $a'$. Among $s_{1:n}$, exactly $n - i_1$ elements (namely, $x_{i_1 + 1}$ through $x_n$) are strictly smaller than $a'$. Therefore, we have
    \[
        |s_{(B+1):n} \cap (-\infty, a')|
    =   |s_{1:n} \cap (-\infty, a')| - |s_{1:B} \cap (-\infty, a')|
    =   (n - i_1) - (B - i).
    \]
    In other words, exactly $(n - i_1) - (B - i)$ elements in $s_{(B+1):n}$ are strictly smaller than $a'$.

    Without the conditioning on event $\event$, the elements in $s_{(B + 1): n}$ form a size-$(n-B)$ subset of $s_{1:n}$ chosen uniformly at random. Then, after conditioning on $\event$, $s_{(B+1):n} \cap (-\infty, a')$ is a uniformly random subset of $\{x_{i_1+1}, \ldots, x_n\}$ of size $(n - i_1) - (B - i)$. Applying \Cref{lemma:subset-rank-concentration} with parameters
    \[
        \tilde n = n - i_1, \quad \tilde k = (n - i_1) - (B - i), \quad \tilde i = l
    \]
    shows that
    \begin{align*}
        &~\Ex{}{\left|\rank_{{1:n}}(x^*; a') - l\cdot \frac{n-i_1}{(n-B)-(i_1-i)}\right|}\\
    \le &~2 \cdot \left(\frac{n-i_1}{n-i_1-(B-i)}\right)\sqrt{l\cdot \frac{n-i_1}{n-i_1-(B-i)}} + \left(\frac{n-i_1}{n-i_1-(B-i)}\right)^2,
    \end{align*}
    where $x^*$ is the unique element in $s_{(B+1):n}$ that satisfies $\rank_{B+1:n}(x^*; a') = l$ (namely, the $l$-th largest element among $s_{(B+1):n} \cap (-\infty, a)$).
   
    Finally, the lemma follows from the observation that, for any $x^* < a' = x_{i_1}$,
    \[
        \rank_{1:n}(x^*)
    =   \rank_{1:n}(a') + \rank_{1:n}(x^*; a')
    =   \rank_{1:n}(x^*; a') + i_1.
    \]
\end{proof}

The following technical lemma shows how \Cref{lemma:rank-in-second-half-to-overall} is going to be applied in the analysis of \Cref{algorithm.recurse}: Conditioning on any realization of $B' \in (n'/4, 3n'/4)$ and $i \coloneqq \rank_{1:B}(a'; a)$, over the remaining randomness in $i_1\coloneqq \rank_{1:n}(a'; a)$ and $l\coloneqq \rank_{B+1:n}(x^*; a')$, the outer procedure is accurate assuming that the recursive call is accurate enough. In particular, \Cref{lemma:rank-in-second-half-to-overall} is used to translate the concentration of $l$ into the concentration of $\rank_{1:n}(x^*; a)$.

\begin{lemma}\label{lemma:rank-in-second-half-to-overall-simplified}
    Consider a call to the procedure $\findkth(n, m, k, a)$ on a random-order stream $s_1, s_2, \ldots, s_n$. Let $n' \coloneqq |s_{1:n} \cap (-\infty, a)|$ and $B' \coloneqq |s_{1:B} \cap (-\infty, a)|$ denote the numbers of elements that are strictly smaller than the threshold $a$ among $s_{1:n}$ and $s_{1:B}$, respectively. Let $i \coloneqq \rank_{1:B}(a'; a)$ denote the rank of element $a'$ (defined on Line~\ref{line:find-a'}) among $s_{1:B} \cap (-\infty, a)$, and $i_1 \coloneqq \rank_{1:n}(a'; a)$ be its rank among $s_{1:n} \cap (-\infty, a)$.
    
    Suppose that: (1) $B' \in (n'/4, 3n'/4)$; (2) conditioning on the values of $B'$, $i$, and $i_1$, the element $x^* \in s_{(B+1):n}$ returned by the recursive call (on Line~\ref{line:recursive-call}) satisfies
    \[
        \Ex{}{\left|\rank_{{B+1:n}}(x^*; a')- k'\right|} \le C_2\sqrt{k'},
    \]
    where $k' \coloneqq \lfloor k/2\rfloor - i \ge 1$ and $C_2 = 262$. Then, we have
    \[
    \Ex{}{\left|\rank_{{1:n}}(x^*; a) - \left(i_1+l\cdot \frac{n'-i_1}{n'-i_1-(B'-i)}\right)\right|} \le C_1\cdot\sqrt{C_2}\cdot\sqrt{k'},
    \]
    where $C_1 = 914$, and the expectation is over the randomness in the ordering of $s_1, s_2, \ldots, s_n$ after the conditioning on $B'$ and $i$.
\end{lemma}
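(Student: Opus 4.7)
\textbf{The plan} is to peel off the randomness layer by layer, conditioning outward from $l$ to $i_1$ to $(B', i)$, and invoke \Cref{lemma:rank-in-second-half-to-overall} at the innermost step. First observe that, since \Cref{algorithm.recurse} is comparison-based and only processes elements below $a$, we can restrict attention to the length-$n'$ sub-stream consisting of the $n'$ elements of $s$ that lie in $(-\infty, a)$. By the random-order assumption, this sub-stream is itself a uniform random permutation of those $n'$ elements, and its first $B'$ entries are exactly $s_{1:B}\cap(-\infty, a)$ while the last $n'-B'$ are $s_{(B+1):n}\cap(-\infty, a)$. This puts us precisely in the setting of \Cref{lemma:rank-in-second-half-to-overall} with the substitutions $(n, B)\to(n', B')$.

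Fix $(B', i, i_1, l)$ and write $r \coloneqq \frac{n'-i_1}{n'-i_1-(B'-i)}$. Applied to the restricted sub-stream, \Cref{lemma:rank-in-second-half-to-overall} yields
\[
\Ex{}{\bigl|\rank_{1:n}(x^*; a) - (i_1 + lr)\bigr| \,\bigm|\, B', i, i_1, l} \le 2r^{3/2}\sqrt{l} + r^2.
\]
Averaging over $l$ given $(B', i, i_1)$ using the hypothesis $\Ex{}{l \mid B',i,i_1} \le k' + C_2\sqrt{k'} \le (1+C_2)k'$ and Jensen's inequality $\Ex{}{\sqrt{l}}\le\sqrt{\Ex{}{l}}$, and noting that $r$ is a function of $(B',i,i_1)$ alone, gives
\[
\Ex{}{\bigl|\rank_{1:n}(x^*; a) - (i_1 + lr)\bigr| \,\bigm|\, B', i, i_1} \le 2r^{3/2}\sqrt{(1+C_2)k'} + r^2.
\]

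It then remains to take expectation over $i_1$ with $(B', i)$ fixed, which reduces to bounding $\Ex{}{r^{3/2}\mid B', i}$ and $\Ex{}{r^2 \mid B', i}$ by absolute constants. Given $(B', i)$, the element $a'$ is distributed as the $i$-th largest in a uniformly random size-$B'$ subset of the $n'$ below-$a$ elements, so a symmetric application of \Cref{lemma:subset-rank-concentration} shows that $i_1$ concentrates around $1+(i-1)n'/B'$ with first-moment deviation $O(\sqrt{n'})$ (using that $n'/B' < 4$ from the assumption $B' \in (n'/4, 3n'/4)$). On the typical event where $i_1$ lies within $O(\sqrt{n'})$ of its mean, $r$ is approximately $n'/(n'-B') \in (4/3, 4)$, so $r^{3/2}, r^2 = O(1)$. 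On the complementary tail, the crude bound $r \le n'-i_1 \le n'$ is combined with a tail estimate for $i_1$ (via Markov on the first-moment bound, or a sharper hypergeometric Chernoff-type bound) to keep the contribution a constant. Plugging the resulting $O(1)$ bounds into the previous display and absorbing everything into the slack between the typical-case bound $2\cdot 4^{3/2}\sqrt{(1+C_2)k'}+16 \lesssim 300\sqrt{k'}$ and the target $C_1\sqrt{C_2}\sqrt{k'}\approx 14800\sqrt{k'}$ yields the claim.

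\textbf{The main obstacle} is precisely this last step. Although $r$ is $O(1)$ with high probability, the higher moments $r^{3/2}$ and $r^2$ can be polluted by the event that $n'-i_1-(B'-i)$ becomes tiny and $r$ blows up. A pure first-moment bound on the deviation of $i_1$ does not by itself control these higher moments, so the argument must either split carefully into typical and tail regimes of $i_1$ or invoke a sub-Gaussian/Bernstein-type tail for the hypergeometric distribution. The large multiplicative gap ($\approx 50\times$) between the natural typical-case estimate and the stated target constants $C_1 = 914$, $C_2 = 262$ indicates that the tail contribution is meant to be absorbed into that slack.
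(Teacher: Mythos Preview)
Your overall plan—apply \Cref{lemma:rank-in-second-half-to-overall} conditionally, average over $l$ using the hypothesis and Jensen, then control the moments of the blow-up ratio $r$ over the randomness in $i_1$—is exactly the paper's structure. You have also correctly identified that the whole difficulty is the tail of $r$, and that a first-moment bound on $i_1$ alone will not do.

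Where your sketch goes wrong is the execution of the tail step. First, the crude bound you propose, $r\le n'-i_1\le n'$, is too weak: when $j\coloneqq B'-i+1$ is small (it can be as small as $2$, since one only knows $B'\ge\lfloor k/2\rfloor$ and $i\le\lfloor k/2\rfloor-1$), the tail event has probability that decays only like $e^{-cj}$, and $(n')^2e^{-cj}$ is unbounded in $n'$. The right crude bound is $r\le j$, which follows from $r=1+\frac{j-1}{j_1-j}$ and $j_1-j\ge 1$; then the tail contribution is $j^2e^{-cj}=O(1)$ uniformly. Second, your split criterion ``$i_1$ within $O(\sqrt{n'})$ of its mean'' does not actually pin down $r$: what controls $r$ is the \emph{ratio} $j_1/j$, not the additive deviation of $i_1$. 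When $j$ is small, an $O(\sqrt{n'})$ additive deviation in $i_1$ can still send $r$ to infinity.

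The paper fixes both issues at once by reparameterizing via $j=B'-i+1$ and $j_1=n'-i_1+1$ and splitting on the event $j_1\ge(1+\eps)j$ with $\eps=1/16$. On this ``typical'' side one gets $r\le 1+1/\eps\le 2/\eps$ directly, and the bound from \Cref{lemma:rank-in-second-half-to-overall} integrates to $768\sqrt{C_2}\sqrt{k'}$. On the complement, the assumption $B'<3n'/4$ and a Hoeffding/Chernoff bound for sampling without replacement give $\Pr[j_1<(1+\eps)j]\le e^{-0.077j}$; combined with $r\le j$ this yields $(2\sqrt{2}j^{3/2}+j^2)e^{-0.077j}\cdot\sqrt{C_2}\sqrt{k'}\le 146\sqrt{C_2}\sqrt{k'}$. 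The sum is $914\sqrt{C_2}\sqrt{k'}$, which is where the constants $C_1=914$ come from. So the large slack you noticed is not there to absorb an unstructured tail; it is the honest output of this two-case split with $\eps=1/16$.
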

\begin{proof}
Recall that $B' \coloneqq |s_{1:B} \cap (-\infty, a)|$, $i \coloneqq \rank_{1:B}(a'; a)$ and $i_1 \coloneqq \rank_{1:n}(a'; a)$. In addition, we introduce the shorthands
\[
    j \coloneqq B'-i+1,
\quad
    j_1 \coloneqq n'-i_1 +1.
\]
Then, $a'$ is both the $j$-th smallest element in $s_{1:B} \cap (-\infty, a)$ and the $j_1$-th smallest element in $s_{1:n} \cap (-\infty, a)$. It follows that, conditioning on the values of $B'$ and $i$ (and thus $j$), the value of $j_1$ is identically distributed as the rank of the $j$-th smallest number in a uniformly random size-$B'$ subset of $\{1, 2, \ldots, n'\}$, and thus can be analyzed in a similar way to the analysis in \Cref{lemma:subset-rank-concentration}.

In the rest of the proof, we will upper bound the expectation by considering two cases separately: $j_1 \ge (1+\eps)\cdot j$ and $j_1 < (1+\eps)\cdot j$, where $\eps = 1/16$.
\paragraph{Case 1: $j_1 \ge (1+\eps)\cdot j$.}
In this case, we have
\[
    \frac{B'-i}{n'-i_1-(B'-i)} = \frac{j-1}{j_1 - j} \le \frac{j-1}{(1+\eps)\cdot j - j} \le \frac{1}{\eps}.
\]
It follows that
\[
    \frac{n' - i_1}{n' - i_1 - (B' - i)}
=   \frac{B' - i}{n' - i_1 - (B' - i)} + 1
\le \frac{1}{\eps} + 1
\le \frac{2}{\eps}.
\]
Then, by \Cref{lemma:rank-in-second-half-to-overall}, we have
 \begin{equation}\begin{split}\label{eq:case-1-intermediate-bound}
    &~\Ex{i_1, l}{\1{j_1 \ge (1+\eps)\cdot j}\cdot\left|\rank_{{1:n}}(x^*; a) - \left(i_1+l\cdot \frac{n'-i_1}{n'-i_1-(B'-i)}\right)\right|}\\
\le &~\Ex{i_1}{\1{j_1 \ge (1+\eps)\cdot j}\cdot\left(\frac{4\sqrt{2}}{\eps^{3/2}}\cdot \Ex{l}{\sqrt{l}\mid i_1} + \frac{4}{\eps^2}\right)}.
\end{split}\end{equation}
To deal with the term $\Ex{l}{\sqrt{l}\mid i_1}$, we note that, conditioning on the realization of $i_1$, 
\begin{equation}\label{eq:sqrt-l-bound}
    \Ex{l}{\sqrt{l} \mid i_1}
\le \sqrt{\Ex{}{l \mid i_1}}
\le \sqrt{k' + \Ex{}{|l - k'| \mid i_1}}
\le \sqrt{k' + C_2\sqrt{k'}}
\le \sqrt{2C_2}\cdot\sqrt{k'},
\end{equation}
where the first step applies Jensen's inequality, the second step applies the triangle inequality, and the third step follows from our assumption on the concentration of $l$.

Plugging the above into \Cref{eq:case-1-intermediate-bound} and setting $\eps = 1/16$ gives
\begin{equation}\begin{split}\label{eq:case-1-final-bound}
    &~\Ex{i_1, l}{\1{j_1 \ge (1+\eps)\cdot j}\cdot\left|\rank_{{1:n}}(x^*; a) - \left(i_1+l\cdot \frac{n'-i_1}{n'-i_1-(B'-i)}\right)\right|}\\
\le &~ \frac{4\sqrt{2}}{\eps^{3/2}}\cdot \sqrt{2C_2}\cdot\sqrt{k'} + \frac{4}{\eps^2}\\
=   &~512\sqrt{C_2}\cdot\sqrt{k'} + 1024
\le 768\sqrt{C_2}\cdot\sqrt{k'},
\end{split}\end{equation}
where the last step follows from $\sqrt{C_2} \cdot \sqrt{k'} \ge \sqrt{C_2} \ge 4$.

\paragraph{Case 2: $j_1 < (1+\eps)\cdot j$.} We use a similar method as in the proof of \Cref{lemma:subset-rank-concentration} to upper bound the probability of this case. As discussed earlier, this case corresponds to the event that, when a size-$B'$ subset $S \subseteq [n']$ is chosen uniformly at random, the $j$-th smallest element in $S$ is strictly smaller than $(1 + \eps)\cdot j$. This, in turn, implies the following event for $\tilde j \coloneqq \lceil (1 + \eps)\cdot j \rceil - 1 \le (1 + \eps)\cdot j$:
\begin{itemize}
    \item Event $\event_a$: $S$ contains at least $j$ elements among $1, 2, \ldots, \tilde j$.
\end{itemize}

To upper bound the probability of event $\event_a$, we use a similar method to the proof of \Cref{lemma:subset-rank-concentration}: Defining a sequence of binary random variables $X_1, \ldots, X_{n'}$ such that each $X_{t}$ takes value $1$ if and only if $t \in S$. Then, $X_1$ through $X_{n'}$ are sampled without replacement from a size-$n'$ population that consists of $B'$ copies of $1$ and $n' - B'$ copies of $0$. Using \Cref{lemma.cite}, we can relate the moment generating function of $\sum_t X_t$ to that of $\sum_t Y_t$, where each $Y_t$ is independently drawn from $\Bern(B'/n')$. It then follows from a Chernoff bound that
\begin{align*}
        \pr{}{\event_a}
&=      \pr{X}{\sum_{t=1}^{\tilde j}X_t \ge j}\\
&\le    \exp\left(-2\cdot\frac{\left(j - \tilde j\cdot\frac{B'}{n'}\right)^2}{\tilde j}\right)\\
&\le    \exp\left(-2\cdot\frac{\left(j - (1 + \eps)\cdot j\cdot\frac{B'}{n'}\right)^2}{(1 + \eps)\cdot j}\right).
\end{align*}
The third step above holds since $\tilde j \le (1 + \eps)j$ and the function $x \mapsto -\frac{(a - bx)^2}{x}$ is monotone increasing when $a \ge bx$; in this case, the assumption that $B' < 3n'/4$ indeed guarantees
\[
    a
=   j
\ge \frac{3}{4}\cdot (1 + 1/16)j
\ge \frac{B'}{n'}\cdot (1 + \eps)j
=   bx.
\]
Plugging $\eps = 1 / 16$ and $B' / n' < 3/4$ into the above gives the upper bound
\[
    \pr{}{\event_a}
\le \exp\left(-2\cdot\frac{\left(1 - (1 + 1/16)\cdot\frac{3}{4}\right)^2}{1 + 1/16}\cdot j\right)
\le \exp(-0.077\cdot j).
\]

Recall that $j_1 = n' - i_1 + 1$, $j = B' - i + 1$, and
\[
    \frac{n' - i_1}{n' - i_1 - (B' - i)}
=   \frac{B' - i}{n' - i_1 - (B' - i)} + 1
=   \frac{j - 1}{j_1 - j} + 1
\le j.
\]
By \Cref{lemma:rank-in-second-half-to-overall}, the contribution of this case to the overall expectation is
\begin{align*}
    &~\Ex{i_1, l}{\1{j_1 < (1+\eps)\cdot j}\cdot\left|\rank_{{1:n}}(x^*; a) - \left(i_1+l\cdot \frac{n'-i_1}{n'-i_1-(B'-i)}\right)\right|}\\
    \le &~\Ex{i_1}{\1{j_1 < (1+\eps)\cdot j}\cdot\left(2\cdot j^{3/2}\cdot \Ex{l}{\sqrt{l}\mid i_1} + j^2\right)}.
\end{align*}

Plugging the bound $\Ex{l}{\sqrt{l} \mid i_1} \le \sqrt{2C_2}\cdot\sqrt{k'}$ from \Cref{eq:sqrt-l-bound} into the above gives an upper bound of
\[
    (2\sqrt{2}\cdot j^{3/2}\cdot\sqrt{C_2}\cdot\sqrt{k'} + j^2)\cdot \pr{}{j_1 < (1 + \eps)\cdot j}
\le (2\sqrt{2}\cdot j^{3/2} + j^2)\cdot \pr{}{j_1 < (1 + \eps)\cdot j}\cdot \sqrt{C_2}\cdot\sqrt{k'}.
\]
Since we have shown that the $j_1 < (1 + \eps)\cdot j$ case happens with probability at most $e^{-0.077\cdot j}$, we conclude that
\begin{equation}\begin{split}\label{upper-bound.2}
    &~\Ex{i_1, l}{\1{j_1 < (1+\eps)\cdot j}\cdot\left|\rank_{{1:n}}(x^*; a) - \left(i_1+l\cdot \frac{n'-i_1}{n'-i_1-(B'-i)}\right)\right|}\\
\le &~(2\sqrt{2}\cdot j^{3/2} + j^2)\cdot e^{-0.077j}\cdot\sqrt{C_2}\cdot\sqrt{k'}\\
\le &~ 146\sqrt{C_2}\cdot\sqrt{k'},
\end{split}\end{equation}
where the last step follows from
\begin{align*}
    \max_{x \ge 0}\left(2\sqrt{2}\cdot x^{3/2} + x^2\right)\cdot e^{-0.077x}
&\le 2\sqrt{2}\cdot\max_{x\ge0}x^{3/2}e^{-0.077x} + \max_{x\ge0}x^2e^{-0.077x}\\
&=  2\sqrt{2}\cdot \left(\frac{3/2}{0.077}\right)^{3/2}\cdot e^{-3/2} + \left(\frac{2}{0.077}\right)^{2}\cdot e^{-2}
<   146.
\end{align*}

\paragraph{Combine two upper bounds.}
Finally we combine the contributions from both cases (\Cref{eq:case-1-final-bound,upper-bound.2}) and obtain
\begin{align*}
    &~\Ex{i_1, l}{\left|\rank_{{1:n}}(x^*; a) - \left(i_1+l\cdot \frac{n'-i_1}{n'-i_1-(B'-i)}\right)\right|}\\
=   &~\Ex{i_1, l}{\1{j_1 \ge (1+\eps)\cdot j}\cdot\left|\rank_{{1:n}}(x^*; a) - \left(i_1+l\cdot \frac{n'-i_1}{n'-i_1-(B'-i)}\right)\right|}\\
+   &~\Ex{i_1, l}{\1{j_1 < (1+\eps)\cdot j}\cdot\left|\rank_{{1:n}}(x^*; a) - \left(i_1+l\cdot \frac{n'-i_1}{n'-i_1-(B'-i)}\right)\right|}\\
\le &~768\sqrt{C_2}\cdot\sqrt{k'} + 146\sqrt{C_2}\cdot\sqrt{k'}
=   914\sqrt{C_2}\cdot\sqrt{k'}.
\end{align*}
\end{proof}

Recall that \Cref{lemma:rank-in-second-half-to-overall-simplified} analyzed the concentration of $\rank_{1:n}(x^*; a)$---the rank of the element $x^*$ returned by the recursive call on Line~\ref{line:recursive-call}. However, that concentration is conditional on the realization of $(B', i)$, and is around a quantity that depends on $B'$ and $i$. In the following lemma, we further analyze the randomness in $B'$ and $i$, and show a similar concentration bound around $k$, the rank that we aim for.

\begin{lemma}\label{lemma:inductive-step}
    Let $C_2 = 262$ be a universal constant. As in the setup of \Cref{lemma:rank-in-second-half-to-overall-simplified}, consider a call to the procedure $\findkth(n, m, k, a)$ on a random-order stream $s_1, s_2, \ldots, s_n$. Define $n' \coloneqq |s_{1:n} \cap (-\infty, a)|$, $B' \coloneqq |s_{1:B} \cap (-\infty, a)|$, $i \coloneqq \rank_{1:B}(a'; a)$ and $i_1 \coloneqq \rank_{1:n}(a'; a)$ in the same way. Suppose that, conditioning on the realization of $(B', i, i_1)$, the following is true for the recursive call on Line~\ref{line:recursive-call}: (1) If $|s_{(B+1):n} \cap (-\infty, a')| \ge k' \coloneqq \lfloor k/2 \rfloor - i$, it returns an element $x'$ that satisfies
    \[
        \Ex{}{\left|\rank_{B+1:n}(x'; a') - k'\right|} \le C_2 \cdot \sqrt{k'};
    \]
    (2) If $|s_{(B+1):n} \cap (-\infty, a')| < k'$, the recursive call returns $-\infty$.
    
    Then, as long as $n' \ge k$, the current call to $\findkth$ returns an element $x^*$ such that
    \[
        \Ex{}{\left|\rank_{{1:n}}(x^*; a) - k\right|\cdot \1{\goodevent}} \le (C_2-9)\cdot \sqrt{k},
    \]
    where the expectation is over the (unconditional) randomness in $s_1, s_2, \ldots, s_n$ (and thus in the realization of $(B', i, i_1)$ and the recursive call), and $\goodevent$ is the event that the following two both hold: (1) $B' \ge \lfloor k/2\rfloor$; (2) $a'$ is successfully found on Line~\ref{line:find-a'}.
\end{lemma}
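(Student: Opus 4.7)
The plan is to decompose the error via the triangle inequality around the ``ideal rank estimate'' $R := i_1 + l\tau$, with $\tau := (n' - i_1)/((n' - i_1) - (B' - i))$ (the scaling appearing in \Cref{lemma:rank-in-second-half-to-overall-simplified}):
\[
    |\rank_{1:n}(x^*;a) - k| \le |\rank_{1:n}(x^*;a) - R| + |R - k|.
\]
On the good event, I would distinguish Subcase~(A), where $D := (n'-i_1) - (B'-i) \ge k'$ (so the recursive call returns a valid element and the outer $x^*$ equals that element), from Subcase~(B), where $D < k'$ (the recursive call returns $-\infty$ and the outer returns $\underline{x}$, incurring error $n' - k$).

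For Subcase~(A), I would further restrict to the typical event $\{B' \in (n'/4, 3n'/4)\}$ so that \Cref{lemma:rank-in-second-half-to-overall-simplified} applies, yielding $\Ex{}{|\rank_{1:n}(x^*;a) - R|} \le C_1\sqrt{C_2}\sqrt{k'} \le C_1\sqrt{C_2 C_0}\sqrt{k}$, since $k' \le \delta = C_0 k$ is enforced on \Cref{line:find-a'}. Choosing $C_0$ sufficiently small makes this a small fraction of the $(C_2-9)\sqrt{k}$ budget. The atypical complement has probability at most $2\exp(-n'/8)$ by Hoeffding on $B' \sim \Binomial(n', 1/2)$, so even the trivial error bound $n'$ multiplies to an $O(1)$ contribution that is negligible.

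Next, I would bound $\Ex{}{|R - k|\cdot\1{\goodevent}}$ by splitting $R - k = (l - k')\tau + (i_1 + k'\tau - k)$. The first piece is $O(\tau \cdot C_2\sqrt{k'}) = O(C_2\sqrt{C_0}\cdot\sqrt{k})$ by the inductive hypothesis combined with $\tau = O(1)$ on the typical event. For the deterministic bias $i_1 + k'\tau - k$, setting $\beta := B' - n'/2$ and $\gamma := i_1 - 2i$, direct algebra (starting from $[i_1+k'\tau-k]\cdot D$ and expanding) gives
\[
    i_1 + k'\tau - k = (2\lfloor k/2\rfloor - k) + \gamma \cdot \frac{n'/2 + k/2 - 2i}{D} + \beta\cdot\frac{k - 2i}{D} - \frac{\gamma^2 + \gamma\beta}{D},
\]
where $|2\lfloor k/2\rfloor - k| \le 1$ and $|k - 2i| \le 2C_0 k + 1$. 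In the ``bulk regime'' $D = \Theta(n' - k)$, the principal term is $\gamma\cdot(n'/2)/D = O(|\gamma|)$; using the exact decomposition $\gamma = (i_1 - in'/B') - 2i\beta/B'$, \Cref{lemma:subset-rank-concentration} yields $\Ex{}{|i_1 - in'/B'|} = O(\sqrt{i}) = O(\sqrt{k})$, while Hoeffding gives $\Ex{}{|\beta|} = O(\sqrt{n'})$ and hence $\Ex{}{2i|\beta|/B'} = O(i/\sqrt{n'}) = O(\sqrt{k})$ since $n' \ge k$. The $\beta(k-2i)/D$ piece is similarly $O(\sqrt{k})$, and the $\gamma^2/D$ and $\gamma\beta/D$ pieces are lower-order.

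The main obstacle will be the ``small-$D$'' regime where $D$ dips well below its mean $\Theta(n' - k)$: there the factors $(n'/2)/D$ and $\gamma^2/D$ in the expansion blow up. I would carve out the subevent $\{D \le (n'/2 - i)/2\}$ and invoke Chernoff-style concentration of $(\beta, \gamma)$ to show its probability decays exponentially in $(n' - k)^2/n'$, so even the trivial error $n'$ scaled by this probability stays $O(1)$. The same concentration argument handles Subcase~(B), where $D < k'$ translates (after algebra) to $\beta + \gamma > (n' - k)/2 + O(1)$: either $(n' - k)/2 \gg \sqrt{n'}$ and the probability is exponentially small, or $n' - k = O(\sqrt{n'}) = O(\sqrt{k})$ and the error itself is already within budget. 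Collecting all contributions and tuning $C_0$ sufficiently small absorbs the various numerical constants within the gap of $9$ in $(C_2 - 9)\sqrt{k}$.
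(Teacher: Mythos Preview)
Your high-level plan (triangle inequality around $R = i_1 + l\tau$, restrict to $B'\in(n'/4,3n'/4)$, treat $D \ge k'$ and $D < k'$ separately) matches the paper's, but your \emph{inner} decomposition is genuinely different and less convenient than the paper's, and a couple of your one-line claims hide real work.

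The paper splits $R - k$ as $l(\tau - 2) + (i_1 + 2l - k)$, not your $(l-k')\tau + (i_1 + k'\tau - k)$. This matters for two reasons. First, in the $l(\tau - 2)$ term one has the identity
\[
    \tau - \frac{n'}{n'-B'} \;=\; \frac{1}{D}\cdot\frac{B'}{n'-B'}\cdot\Bigl|i_1 - i\cdot\tfrac{n'}{B'}\Bigr|,
\]
and the $1/D$ is killed by the trivial bound $l \le n'' = D$. Your $(l-k')\tau$ has no such cancellation, so you must argue $\Ex{}{\tau\cdot\1{\text{typical}}} = O(1)$; this is true but requires the $j_1 \gtrless (1+\eps)j$ split from \emph{inside} the proof of \Cref{lemma:rank-in-second-half-to-overall-simplified}, not merely the event $B'\in(n'/4,3n'/4)$ --- your assertion ``$\tau = O(1)$ on the typical event'' is not literally correct since $\tau$ depends on $i_1$. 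Second, the paper's $(i_1 + 2l - k)$ term is free of $\tau$, hence free of any $1/D$; your bias $i_1 + k'\tau - k$ keeps a $1/D$, which forces the separate small-$D$ carve-out and the claim that $\gamma^2/D$ and $\gamma\beta/D$ are ``lower-order''. They are, but only because $\gamma$ and $D$ are correlated through $\beta$ (roughly $\gamma \approx -2\beta + O(\sqrt{k})$ and $D \approx (n'/2 - i) - \beta$), not because $\Ex{}{\gamma^2}/D_{\min}$ is small --- indeed $\Ex{}{\gamma^2} = \Theta(k)$ while $D$ can be $O(1)$.

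The more striking difference is Subcase~(B). The paper does \emph{not} argue ``probability small, error bounded''; instead it writes $n' - k = i_1 + D\tau - k$, uses $D < k'$ to replace $D$ by $k'$, and obtains the \emph{same} linear combination of $\bigl|\tfrac{n'}{B'}-2\bigr|$, $\bigl|\tfrac{n'}{n'-B'}-2\bigr|$, and $\bigl|i_1 - i\tfrac{n'}{B'}\bigr|$ as in Subcase~(A). Since the two subcase indicators are complementary, the pieces add and a single appeal to \Cref{lemma:subset-rank-concentration} and \Cref{lemma.integral1} closes everything out at $253\sqrt{k}$. Your tail-bound route for Subcase~(B) can be pushed through (the observation $n' - k < 2(\beta + \gamma)$ on that event is the right one, and the resulting $(n'-k)e^{-c(n'-k)^2/n'}$ is indeed $O(\sqrt{k})$ uniformly in $n'$), but it needs sub-Gaussian \emph{tail} bounds on $i_1 - in'/B'$ rather than just the expectation bound of \Cref{lemma:subset-rank-concentration}, and the bookkeeping across the regimes $n' - k \lessgtr \sqrt{k}$ is heavier than your sketch indicates.
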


The proof uses the following lemma, which we prove in \Cref{sec:technical-lemmas}: If $B$ follows $\Binomial(n, 1/2)$, the ratio $n/B$ concentrates around $2$ up to an error of $O(1/\sqrt{n})$ in expectation (ignoring the pathological case of $B = 0$).
\begin{restatable}{lemma}{expectednoverb}\label{lemma.integral1}
    It holds for every integer $n \ge 1$ that
    \[
        \Ex{B \sim \Binomial(n, 1/2)}{\left|\frac{n}{B} - 2\right|\cdot\1{B \ne 0}}  \le \frac{14}{\sqrt{n}},
    \]
    assuming that $\left|n/B - 2\right|\cdot\1{B \ne 0}$ evaluates to $0$ when $B = 0$.
\end{restatable}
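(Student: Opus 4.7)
The plan is to split the expectation according to whether $B$ is in a constant-factor neighborhood of its mean $n/2$ or in the left tail, and to use the identity $|n/B - 2| = |n - 2B|/B$ to separate the two sources of fluctuation.

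First I would handle the typical range $B \ge n/4$. On this event, $1/B \le 4/n$, so $|n/B - 2| \le (4/n)\cdot |n - 2B|$. Taking expectations and using Jensen (or equivalently Cauchy--Schwarz) together with $\Var[2B] = n$, I obtain
\[
    \Ex{B \sim \Binomial(n, 1/2)}{|n/B - 2|\cdot\1{B \ge n/4}}
\le \frac{4}{n}\cdot \sqrt{\Ex{}{(n-2B)^2}}
=   \frac{4}{\sqrt{n}}.
\]

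Next I would handle the atypical range $1 \le B < n/4$ using a tail bound. On this event the best crude bound is $|n/B - 2| \le n/B \le n$, so it suffices to multiply $n$ by $\Pr[B < n/4]$. By Hoeffding's inequality for $\Binomial(n, 1/2)$, $\Pr[B \le n/2 - t] \le \exp(-2t^2/n)$; choosing $t = n/4$ yields $\Pr[B \le n/4] \le e^{-n/8}$, so the contribution is at most $n\cdot e^{-n/8}$. Combining the two ranges gives
\[
    \Ex{}{|n/B - 2|\cdot\1{B \ne 0}}
\le \frac{4}{\sqrt{n}} + n\cdot e^{-n/8}.
\]

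To conclude, I need $n e^{-n/8} \le 10/\sqrt{n}$ for every integer $n \ge 1$, equivalently $f(n) \coloneqq n^{3/2} e^{-n/8} \le 10$. A routine calculus check shows $f$ is maximized at $n = 12$ with $f(12) = 12^{3/2} e^{-3/2} < 10$, which closes the argument with the claimed constant $14$. The main (and only mildly annoying) obstacle is matching the explicit constant in the statement: one has to pick the threshold for the split (I chose $n/4$, paired with Hoeffding) so that the tail term, when measured against $1/\sqrt{n}$, comes in below $14 - 4 = 10$. Other splits (e.g.\ $n/3$ with the multiplicative Chernoff bound) work qualitatively but produce a worse numerical constant, so the choice of threshold is what drives the specific value $14$.
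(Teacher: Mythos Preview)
Your proposal is correct and is essentially identical to the paper's own proof: the paper also splits at the threshold $B = n/4$, bounds the typical part via $|n/B-2|\le (4/n)|n-2B|$ and Jensen to get $4/\sqrt{n}$, bounds the tail part by $n\cdot e^{-n/8}$ via the same Chernoff/Hoeffding estimate, and then closes with the inequality $x^{3/2}e^{-x/8}\le 10$.
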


\begin{proof}[Proof of \Cref{lemma:inductive-step}]
Let $n'' \coloneqq |s_{(B+1):n} \cap (-\infty, a')| = (n' - i_1) - (B' - i)$ denote the number of elements below $a'$ in the second half of the stream, which plays the role of ``$n'$'' in the recursive call. We upper bound the expectation of interest---$\Ex{}{\left|\rank_{{1:n}}(x^*; a) - k\right|\cdot \1{\goodevent}}$---by separately controlling the contribution from the case that $B' \in (n'/4, 3n'/4)$, and that from the $B' \notin (n'/4, 3n'/4)$ case. The former case is better-behaved. The latter case is unlikely to happen, so its contribution will be negligible. We further divide the former case into two sub-cases, depending on whether $n'' < k'$ (so that the recursive call returns $-\infty$) or $n'' \ge k'$ (in which case the recursive call is accurate by our assumptions).

\paragraph{Case 1: $n'/4<B'<3n'/4$ and $n'' < k'$.} In this case, the recursive call on Line~\ref{line:recursive-call} tries to find the $k'$-th largest element among $n''$ elements. By our second assumption, the recursive call always returns $-\infty$. It follows that the output of the current recursion, $x^*$, is simply the smallest element in $s_{1:n}$, i.e., $\rank_{1:n}(x^*; a) = n'$. 

Note that we can re-write $n' - k$ as follows:
\begin{align*} 
    n' - k
=   &~i_1+\left(n'-i_1-(B'-i)\right)\cdot \frac{n'-i_1}{n'-i_1-(B'-i)}  - k\\
\le &~i_1+\left(n'-i_1-(B'-i)\right)\cdot \left(2 + \left|\frac{n'}{n'-B'}-2\right|\right)\\
&\quad + \left(n'-i_1-(B'-i)\right)\cdot \left|\frac{n'-i_1}{n'-i_1-(B'-i)} - \frac{n'}{n'-B'}\right|  - k\\
\le &~i_1+\left(\left\lfloor\frac{k}{2}\right\rfloor - i\right)\cdot \left(2 + \left|\frac{n'}{n'-B'}-2\right|\right)\\
&\quad + \left(n'-i_1-(B'-i)\right)\cdot \left|\frac{n'-i_1}{n'-i_1-(B'-i)} - \frac{n'}{n'-B'}\right|  - k\\
\le &~i_1 - 2i + (k/2)\cdot \left|\frac{n'}{n'-B'}-2\right| + \left|i_1 - i\cdot \frac{n'}{B'}\right|\cdot \left| \frac{B'}{n'-B'}\right|,
\end{align*}
where the second step applies the triangle inequality
\[
    \frac{n'-i_1}{n'-i_1-(B'-i)}
\le \left|\frac{n'-i_1}{n'-i_1-(B'-i)} - \frac{n'}{n'-B'}\right| + \left|\frac{n'}{n'-B'}-2\right| + 2,
\]
the third step holds since 
\[
    n'-i_1-(B'-i)
=   n''
<   k'
=   \left\lfloor\frac{k}{2}\right\rfloor - i,
\]
and the last step follows from $\left\lfloor\frac{k}{2}\right\rfloor - i \le k/2$, as well as
\begin{align*}
    \left|\frac{n'-i_1}{n'-i_1-(B'-i)} - \frac{n'}{n'-B'}\right|
=   &~\left|\frac{n'-i_1 - [n'-i_1-(B'-i)]\cdot \frac{n'}{n'-B'}}{n'-i_1-(B'-i)}\right| \\
=   &~\left|\frac{-i_1 + (i_1-i)\cdot \frac{n'}{n'-B'}}{n'-i_1-(B'-i)}\right| \\
=   &~ \frac{1}{|n'-i_1-(B'-i)|}\cdot\left| \frac{B'}{n'-B'}\right|\cdot \left|i_1 - \frac{n'}{B'}\cdot i\right|.
\end{align*}
Then, after multiplying the two indicators
\begin{align*}
   \mathbbm{1}_{b}&\coloneqq \1{\goodevent \wedge B' \in (n'/4, 3n'/4)},\\
\mathbbm{1}_{i_1}& \coloneqq  \1{n'' < k'}
\end{align*}
and taking an expectation, we obtain:
\begin{align*}
            A_1
&\coloneqq  \Ex{}{\mathbbm{1}_{b}\mathbbm{1}_{i_1}|\rank_{1:n}(x^*; a) - k|}\\
&\le        \Ex{}{\mathbbm{1}_{b}\mathbbm{1}_{i_1}(i_1 - 2i)} + (k/2)\cdot \Ex{}{\mathbbm{1}_{b}\mathbbm{1}_{i_1}\left|\frac{n'}{n'-B'}-2\right|} + \Ex{}{\mathbbm{1}_{b}\mathbbm{1}_{i_1}\left|i_1 - i\cdot \frac{n'}{B'}\right|\cdot \left| \frac{B'}{n'-B'}\right|}\\
&\le        \Ex{}{\mathbbm{1}_{b}\mathbbm{1}_{i_1}(i_1 - 2i)} + (k/2)\cdot \Ex{}{\mathbbm{1}_{b}\mathbbm{1}_{i_1}\left|\frac{n'}{n'-B'}-2\right|} + 3\cdot \Ex{}{\mathbbm{1}_{b}\mathbbm{1}_{i_1}\left|i_1 - i\cdot \frac{n'}{B'}\right|
}\\
&\le        4\Ex{}{\mathbbm{1}_{b}\mathbbm{1}_{i_1}\left|i_1 - i\cdot \frac{n'}{B'}\right|} + (k/2)\cdot \Ex{}{\mathbbm{1}_{b}\mathbbm{1}_{i_1}\left|\frac{n'}{n'-B'}-2\right|} +  \Ex{}{\mathbbm{1}_{b}\mathbbm{1}_{i_1}i\cdot \left|2 - \frac{n'}{B'}\right|}\\
&\le        4\Ex{}{\mathbbm{1}_{b}\mathbbm{1}_{i_1}\left|i_1 - i\cdot \frac{n'}{B'}\right|} + (k/2)\cdot \Ex{}{\mathbbm{1}_{b}\mathbbm{1}_{i_1}\left|\frac{n'}{n'-B'}-2\right|}+ (k/2)\cdot \Ex{}{ \mathbbm{1}_{b}\mathbbm{1}_{i_1}\left|2 - \frac{n'}{B'}\right|},
\end{align*}
The third step holds since $B' \in (n'/4, 3n'/4)$ implies $B' / (n' - B') \le 3$. The fourth step applies the triangle inequality again: $i_1 - 2i \le |i_1 - i\cdot(n' / B')| + |2i - i\cdot(n' / B')|$. The last step applies $i \le \lfloor k / 2\rfloor \le k /2$, which holds whenever the algorithm successfully finds a valid element $a'$ on Line~\ref{line:find-a'}, which is in turn guaranteed by the event $\goodevent$.

\paragraph{Case 2: $n' / 4 < B' < 3n'/4$ and $n'' \ge k'$.} Define
    \[l \coloneqq\rank_{B+1:n}(x'; a').\]
By our assumptions, conditioning on any realization of $(B', i, i_1)$ that leads to the recursive call with $n'' \ge k'$, we have
\begin{equation}\label{eq:lemma-assumption}
    \Ex{l}{\left|l - k'\right|} \le C_2 \cdot \sqrt{k'}. 
\end{equation}
In the following, we write $\mathbbm{1}_{i_1}' \coloneqq \1{n'' \ge k'}$ as the complement of $\mathbbm{1}_{i_1}$.

Recall that our goal is to control the expectation
\[
    \Ex{}{\mathbbm{1}_b\mathbbm{1}'_{i_1}|\rank_{1:n}(x^*; a) - k|}.
\]
By the triangle inequality, $|\rank_{1:n}(x^*; a) - k|$ is upper bounded by
\begin{equation}\label{eq:case-2-three-terms}
    \left|l\cdot \frac{n'-i_1}{n'-i_1 - (B'-i)} - 2l\right|
+   |i_1 + 2l - k|
+   \left|\rank_{1:n}(x^*; a) - \left(i_1+l\cdot \frac{n'-i_1}{n'-i_1-(B'-i)}\right)\right|.
\end{equation}
Let $A_{2,1}$, $A_{2,2}$ and $A_{2,3}$ denote the expectation of the three terms above, after multiplying the indicators $\mathbbm{1}_b\mathbbm{1}'_{i_1}$. We will upper bound these three terms separately in the following.

We start with $A_{2,1}$. Conditioning on the realization of $B' \ge \lfloor k/2 \rfloor$, we have
\begin{align*}
    &~\Ex{i_1}{\mathbbm{1}_{i_1}'\Ex{l}{\left|l\cdot \frac{n'-i_1}{n'-i_1 - (B'-i)} - 2l\right|}}\\
\le &~\Ex{i_1}{\mathbbm{1}_{i_1}'\Ex{l}{\left|l\cdot \frac{n'-i_1}{n'-i_1 - (B'-i)} - l\cdot \frac{n'}{n'-B'}\right| + \left|2l- l\cdot \frac{n'}{n'-B'}\right|}} \tag{triangle inequality}\\
=   &~\Ex{i_1}{\mathbbm{1}_{i_1}' \Ex{l}{l\cdot\frac{\left|i_1 - i\cdot \frac{n'}{B'}\right|\cdot \left|\frac{B'}{n'-B'}\right|}{n''}}} + \Ex{}{l}\cdot \left|\frac{n'}{n'-B'} - 2\right|\\
\le &~\Ex{i_1 }{\mathbbm{1}_{i_1}'\left|i_1 - i\cdot \frac{n'}{B'}\right|\cdot \left|\frac{B'}{n'-B'}\right|} +\left(k' + \Ex{}{\left|l - k'\right|}\right)\cdot \left|\frac{n'}{n'-B'} - 2\right|\tag{$l \le n''$, $l \le k' + |l - k'|$} \\
\le &~\Ex{i_1}{\mathbbm{1}_{i_1}'\left|i_1 - i\cdot \frac{n'}{B'}\right|\cdot \left|\frac{B'}{n'-B'}\right|} +\left(k' + C_2\sqrt{k'}\right)\cdot \left|\frac{n'}{n'-B'} - 2\right| \tag{\Cref{eq:lemma-assumption}}\\
\le &~\Ex{i_1}{\mathbbm{1}_{i_1}' \left|i_1 - i\cdot \frac{n'}{B'}\right|\cdot \left|\frac{B'}{n'-B'}\right|} +\left(\delta + C_2\sqrt{\delta}\right)\cdot \left|\frac{n'}{n'-B'} - 2\right|. \tag{$k' \le \delta$}
\end{align*}

Then taking an expectation over the randomness in $B'$ gives:
\begin{align*}
    A_{2,1}
&\coloneqq  \Ex{}{\mathbbm{1}_b\mathbbm{1}_{i_1}'\left|l\cdot \frac{n'-i_1}{n'-i_1 - (B'-i)} - 2l\right|}\\
&\le        \left(\delta + C_2\sqrt{\delta}\right)\cdot \Ex{}{\mathbbm{1}_b\mathbbm{1}_{i_1}'\left|\frac{n'}{n'-B'} - 2\right|}+ \Ex{}{\mathbbm{1}_b\mathbbm{1}_{i_1}' \left|i_1 - i\cdot \frac{n'}{B'}\right|\cdot \left|\frac{B'}{n'-B'}\right|}\\
&\le        \left(\delta + C_2\sqrt{\delta}\right)\cdot  \Ex{}{\mathbbm{1}_b\mathbbm{1}_{i_1}'\left|\frac{n'}{n'-B'} - 2\right|} +  3\cdot \Ex{}{ \mathbbm{1}_b\mathbbm{1}_{i_1}'\left|i_1 - i\cdot \frac{n'}{B'}\right|},
\end{align*}
where the third step holds since $\mathbbm{1}_b \ne 0 \implies B' \in (n'/4, 3n'/4) \implies B' / (n' - B') \le 3$.

Now we turn to the $A_{2,2}$ term, i.e., the expectation of the second term $|i_1 + 2l - k|$ in \Cref{eq:case-2-three-terms}. Conditioning on the value of $B'$ and taking an expectation over $(i, i_1, l)$ gives
\begin{align*}
    &~\Ex{}{\mathbbm{1}_{i_1}'\left|(i_1 + 2l)-k\right|}\\
\le &~\Ex{}{\mathbbm{1}_{i_1}'\left|i_1 + 2k' - k\right|} + 2\Ex{}{\left|l - k'\right|} \tag{triangle inequality}\\
\le &~\Ex{}{\mathbbm{1}_{i_1}'\left|i_1 - 2i + 2\lfloor k/2\rfloor -k\right|} + 2C_2 \cdot \sqrt{k'} \tag{$k' = \lfloor k/2\rfloor - i$, \Cref{eq:lemma-assumption}}\\
\le &~\Ex{}{\mathbbm{1}_{i_1}'|i_1 - 2i|} + 2C_2\sqrt{\delta}+ 1. \tag{$|2\lfloor k/2\rfloor - k| \le 1$, $k' \le \delta$}
\end{align*}
Multiplying with the indicator $\mathbbm{1}_b$ and taking another expectation over $B'$ gives
\begin{align*}
    A_{2,2}
&       \coloneqq\Ex{}{\mathbbm{1}_b\mathbbm{1}_{i_1}'\left|(i_1 + 2l)-k\right|}\\
&\le    \Ex{}{\mathbbm{1}_b\mathbbm{1}_{i_1}'|i_1 - 2i|} + 2C_2\sqrt{\delta}+ 1\\
&\le    \Ex{}{\mathbbm{1}_b\mathbbm{1}_{i_1}'\left(\left|i_1 - \frac{n'}{B'}\cdot i\right| + i\cdot\ \left|\frac{n'}{B'} - 2\right|\right)} + 2C_2\sqrt{\delta}+ 1 \tag{triangle inequality}\\
&\le    \Ex{}{\mathbbm{1}_b\mathbbm{1}_{i_1}'\left(\left|i_1 - \frac{n'}{B'}\cdot i\right| + \frac{k}{2}\cdot\ \left|\frac{n'}{B'} - 2\right|\right)} + 2C_2\sqrt{\delta} + 1,
\end{align*}
where the last step holds since, whenever $\mathbbm{1}_b \ne 0$, event $\goodevent$ happens, which guarantees $i \le \lfloor k/2\rfloor \le k/2$.

Finally, for the last term $A_{2,3}$, \Cref{lemma:rank-in-second-half-to-overall-simplified} implies that, conditioning on fixed $B'$ and $i$, we have
\[
    \Ex{}{\mathbbm{1}_{i_1}'\left|\rank_{1:n}(x^*; a) - \left(i_1+l\cdot \frac{n'-i_1}{n'-i_1-(B'-i)}\right)\right|} \le 914\sqrt{C_2}\cdot\sqrt{k'},
\]
where $k' = \lfloor k/2\rfloor - i \le \delta$ for all possible values of $i$. Then, taking an expectation over $B'$ and $i$ gives
\begin{align*}
            A_{2,3}
&\coloneqq  \Ex{}{\mathbbm{1}_b\mathbbm{1}_{i_1}'\left|\rank_{1:n}(x^*; a) - \left(i_1+l\cdot \frac{n'-i_1}{n'-i_1-(B'-i)}\right)\right|}\\
&\le       914\sqrt{C_2}\cdot\sqrt{\delta}.
\end{align*}

\paragraph{Case 3: Either $B' \le n'/4$ or $B' \ge 3n'/4$.} By a Chernoff bound,
\[
    \pr{B' \sim \Binomial\left(n', 1/2\right)}{B' \le \frac{n'}{4} \vee B' \ge \frac{3n'}{4}}  \le 2\cdot\exp\left(-2\cdot n' \cdot (1/4)^2\right) = 2e^{-n'/8}.
\]
Note that, regardless of the choice of $x^* \in s_{1:n} \cap (-\infty, a)$, its rank $\rank_{1:n}(x^*; a)$ is always between $1$ and $n'$. Thus, we have the upper bound
\[
    A_3 \coloneqq \Ex{}{\1{B' \notin (n'/4, 3n'/4)}\cdot\left|\rank_{1:n}(x^*; a) - k\right|}
\le 2n'\cdot \exp\left(-n'/8\right) \le 6,
\]
where the last step follows from $\max_{x \ge 0}2xe^{-x/8} = 16 / e \le 6$.

\paragraph{Put everything together.} Now, we upper bound the total contribution from the three cases above, namely, 
\[
    \Ex{}{\left|\rank_{1:n}(x^*; a) - k\right| \cdot \1{\goodevent}}
\le A_1 + A_{2,1} + A_{2,2} + A_{2,3} + A_3.
\]

Recall that $\mathbbm{1}_{b} = \1{\goodevent \wedge B' \in (n'/4, 3n'/4)}$ is the indicator for the following three to hold simultaneously: (1) $B' \ge \lfloor k/2\rfloor$; (2) $\findkth$ successfully finds $a'$ on Line~\ref{line:find-a'}; (3) $B' \in (n'/4, 3n'/4)$. For brevity, we shorthand the terms that frequently appear in these upper bounds: 
\begin{align*}
    T_1& \coloneqq \Ex{}{\mathbbm{1}_b|n'/B' - 2|},\\
    T_2 &\coloneqq \Ex{}{\mathbbm{1}_b|n'/(n' - B') - 2|},\\
    T_3 & \coloneqq \Ex{}{\mathbbm{1}_b\left|i_1 - i\cdot \frac{n'}{B'}\right|}.
\end{align*}
Then, by dropping the indicators $\mathbbm{1}_{i_1}$ and $\mathbbm{1}'_{i_1}$, the upper bounds that we have obtained so far can be simplified into:
\begin{align*}
    A_1 &\le \frac{k}{2}(T_1 + T_2) + 4T_3,\\
    A_{2, 1} & \le \left(\delta + C_2\sqrt{\delta}\right)\cdot T_2 + 3T_3,\\
    A_{2, 2} & \le \frac{k}{2}\cdot T_1 + T_3 + \left(2C_2\sqrt{\delta} + 1\right),\\
    A_{2, 3} & \le  914\sqrt{C_2}\cdot\sqrt{\delta}\\
    A_3 & \le 6.
\end{align*}
Taking a sum gives an overall upper bound of
\begin{equation}\label{eq:overall-bound-unsimplified}
    kT_1 + \left(\delta + C_2\sqrt{\delta} + \frac{k}{2}\right)\cdot T_2 + 8T_3 + \left(914\sqrt{C_2}\cdot\sqrt{\delta} + 2C_2\sqrt{\delta} + 7\right).
\end{equation}

Now we upper bound the terms $T_1$, $T_2$ and $T_3$. Recall that $\mathbbm{1}_b \ne 0$ implies $B' \in (n'/4, 3n'/4)$, which further implies $B' \ne 0$ and $n' - B' \ne 0$. Thus, by \Cref{lemma.integral1},
\[
    T_1, T_2
\le \frac{14}{\sqrt{n'}}
\le \frac{14}{\sqrt{k}}.
\]
For $T_3$, we apply \Cref{lemma:subset-rank-concentration}, $n' / B' < 4$ and $i \le \lfloor k/2\rfloor \le k/2$ to obtain
\[
    T_3
\le \Ex{}{\mathbbm{1}_b\left(2\cdot\frac{n'}{B'}\cdot\sqrt{i\cdot \frac{n'}{B'}} + \left(\frac{n'}{B'}\right)^2\right)}
\le 16\sqrt{k/2} + 16 
\le 28\sqrt{k}.
\]

Then, \Cref{eq:overall-bound-unsimplified} can be further simplified into
\begin{align*}
    &~14\sqrt{k} + (\delta + C_2\sqrt{\delta})\cdot\frac{14}{\sqrt{k}} + 7\sqrt{k} + 224\sqrt{k} + (914\sqrt{C_2}\cdot\sqrt{\delta} + 2C_2\sqrt{\delta} + 7)\\
\le &~252\sqrt{k} + \left[(\delta + C_2\sqrt{\delta})\cdot\frac{14}{\sqrt{k}} + 914\sqrt{C_2}\cdot\sqrt{\delta} + 2C_2\sqrt{\delta}\right].
\end{align*}
Recall that $\delta$ is set to $C_0\cdot k$ in \Cref{algorithm.recurse}. When $C_0$ is sufficiently small, all the terms that depend on $\delta$ in the above can be made smaller than $\sqrt{k}$ in total. It follows that
\[
    \Ex{}{\left|\rank_{1:n}(x^*; a) - k\right| \cdot \1{\goodevent}}
\le 253\sqrt{k}
=   (C_2 - 9)\cdot\sqrt{k}.
\]
\end{proof}

Finally, we prove our main theorem below.

\quantileapprox*

\begin{proof}
We prove the following statement regarding the $\findkth$ procedure (\Cref{algorithm.recurse}) by induction on $k$: 
\paragraph{Induction Hypothesis:} The procedure $\findkth(n, m, k, a)$, when running on a random-order sequence $s_1, s_2, \ldots, s_n$ that satisfies $n' \coloneqq |s_{1:n} \cap (-\infty, a)| \ge k$, returns an element $x^*$ that satisfies
\[
    \Ex{}{\left|\rank_{1:n}(x^*; a) - k\right|} \le C_2\sqrt{k},
\]
where $C_2 = 262$ and the expectation is over the randomness both in the ordering of the $n$ elements and in the algorithm.

The induction hypothesis above, when applied to the call to $\findkth(n, m, k, +\infty)$ in \Cref{algorithm.main}, directly proves the theorem. Also note that the proof by induction is valid since, whenever $\findkth$ is called recursively (on Line~\ref{line:recursive-call}), the parameter ``$k$'' for the recursive call is set to $\lfloor k/2 \rfloor - i$, which is strictly smaller than $k$.

\paragraph{The base case.} We start with the base case that $k \in \{1, 2, \ldots, m\}$. Since $k \le m$, $\findkth$ would always choose to use the straightforward method on Line~\ref{line:naive}, and find the $k$-th largest element exactly. This ensures that $\Ex{}{|\rank_{1:n}(x^*; a) - k|} = 0 \le C_2\sqrt{k}$ holds in the base case.

\paragraph{The inductive step.} Now we deal with the inductive step for $k > m$, in which case \Cref{algorithm.recurse} does not use the straightforward solution on Line~\ref{line:naive}. Note that we may also assume that $n' \ge k$; otherwise, the induction hypothesis would be vacuously true.

We upper bound the expectation by considering the following three cases:
\begin{itemize}
    \item \textbf{Case 1:} $s_{1:B}$ contains strictly fewer than $\lfloor k/2 \rfloor$ elements that are strictly smaller than $a$. In other words, $B' \coloneqq |s_{1:B} \cap (-\infty, a)| < \lfloor k/2 \rfloor$.
    \item \textbf{Case 2:} $B' \ge \lfloor k/2 \rfloor$, but we fail to find $a'$ on Line~\ref{line:find-a'}.
    \item \textbf{Case 3:} $B' \ge \lfloor k/2 \rfloor$ and we successfully find $a'$ on Line~\ref{line:find-a'}. 
\end{itemize}
Formally, we can decompose the expectation in question into
\[
    \Ex{}{\left|\rank_{1:n}(x^*; a) - k\right|} = A_1 + A_2 + A_3,
\]
where for each $i \in \{1, 2, 3\}$,
\[
    A_i \coloneqq \Ex{}{\left|\rank_{1:n}(x^*; a) - k\right|\cdot \1{\text{Case $i$}}}.
\]
In the rest of the proof, we upper bound the three terms separately.

\paragraph{Case 1: $B' < \lfloor k/2 \rfloor$.} In this case, $\findkth$ would set $x \gets -\infty$ on Line~\ref{line.case1}. Then, by Line~\ref{line:translate-infinity-to-smallest}, the algorithm would eventually output $x^* = \underline{x}$, the smallest element among $s_1, s_2, \ldots, s_n$. In other words, we always have $\rank_{1:n}(x^*; a) = |s_{1:n} \cap (-\infty, a)| = n'$ in this case. Recall that $B'$ follows the distribution $\Binomial(n', 1/2)$. Therefore, the contribution of this case, denoted by $A_1 \coloneqq \Ex{}{|\rank_{1:n}(x^*; a) - k| \cdot \1{\text{Case 1}}}$, is given by
\[
    A_1
=   \Ex{B' \sim \Binomial(n', 1/2)}{(n' - k) \cdot \1{B' < \lfloor k/2 \rfloor}}.
\]
For any $B' < \lfloor k / 2 \rfloor$, it holds that
$n' - k \le n' - 2B' \le |n' - 2B'|$, so we have the upper bound
\[
    A_1
\le \Ex{B'}{\left|n' - 2B'\right| \cdot \1{B'< \lfloor k/2\rfloor}}.
\]
Note that $B' = 0$ happens with probability $2^{-n'}$, so this case contributes at most $2^{-n'}\cdot n' \le 1$ to the right-hand side above. In other words,
\[
    A_1 \le \Ex{B'}{\left|n' - 2B'\right| \cdot \1{ 1\le B'< \lfloor k/2\rfloor}\}} + 1.
\]
For any $B' \in [1, \lfloor k/2 \rfloor)$, we have $|n' - 2B'|
=   B'\cdot|n' / B' - 2|
\le \frac{k}{2}\cdot |n' / B' - 2|$. Therefore,
\begin{align*}
        \Ex{B'}{\left|n' - 2B'\right| \cdot \1{1 \le B'< \lfloor k/2\rfloor}}
&\le    \frac{k}{2} \cdot \Ex{B'}{\left|\frac{n'}{B'}-2\right| \cdot \1{ 1\le B'< \lfloor k/2\rfloor}}\\
&\le    \frac{k}{2}\cdot\Ex{B'}{\left|\frac{n'}{B'}-2\right| \cdot \1{B' \ne 0}}\\
&\le    \frac{k}{2}\cdot \frac{14}{\sqrt{n'}} \tag{\Cref{lemma.integral1}}\\
&\le    7\sqrt{k}. \tag{$n' \ge k$}
\end{align*}

Therefore, we conclude that $A_1
\le 7\sqrt{k} + 1 \le 8\sqrt{k}$.
     
\paragraph{Case 2: $B' \ge \lfloor k/2\rfloor$ but $a'$ cannot be found.} In this case, by Line~\ref{line:no-a'} of the algorithm, the output $x^*$ is set to the largest element in the sequence that is strictly smaller than $a$. In other words, we always have $\rank_{1:n}(x^*; a) = 1$, which gives $|\rank_{1:n}(x^*; a) - k| = k - 1 \le k$. Furthermore, by \Cref{lemma:prob-find-a'}, this case happens with probability at most
\[
    \exp\left(-\frac{2}{9}\cdot m\right) + \exp\left(1-\frac{2C_0}{3}\cdot m\right).
\]
Therefore, the contribution of this case to $\Ex{}{|\rank_{1:n}(x^*; a) - k|}$ is at most
\begin{align*}
    A_2
&=  \Ex{}{|\rank_{1:n}(x^*; a) - k| \cdot \1{\text{Case 2}}}\\
&\le k\cdot \pr{}{\text{Case 2}}\\
&\le k\cdot\left[\exp\left(-\frac{2}{9}\cdot m\right) + \exp\left(1-\frac{2C_0}{3}\cdot m\right)\right]\\
&\le \sqrt{k},
\end{align*}
where the last step holds for all sufficiently large $m \ge \Omega(\log k)$.
     
\paragraph{Case 3: $B' \ge \lfloor k/2 \rfloor$ and $a'$ is found.} In this case, we make a recursive call on Line~\ref{line:recursive-call}, and use its result as the output $x^*$. By \Cref{lemma:inductive-step}, we have
\[
    A_3 = \Ex{}{\left|\rank_{1:n}(x^*; a) - k\right|\cdot\1{\text{Case 3}}} \le (C_2 - 9)\sqrt{k}.
\]       

\paragraph{Put everything together.} Combining the three cases, we have
\[
    \Ex{}{\left| \rank_{1:n}(x^*; a) - k\right|}
\le A_1 + A_2 + A_3
\le 8\sqrt{k} + \sqrt{k} + (C_2 - 9)\sqrt{k}
=   C_2\sqrt{k}.
\]
This completes the inductive step and finishes the proof.
\end{proof}

\section{Quantile Estimation: Exact Selection in Sublinear Memory}
In this section, we give an algorithm (\Cref{alg:exact-selection}) that, given a random-order string of $n$ elements, outputs the exact $k$-th largest element with high probability.

\begin{algorithm2e}
    \caption{Exact Selection}\label{alg:exact-selection}
    \KwIn{Stream length $n$, memory size $m$, target rank $k$, and access to random-order stream $s = (s_1, s_2, \ldots, s_n)$}
    $B_{ \lfloor \log_2 k \rfloor + 1} \leftarrow n$; $B_0 \leftarrow 0$\;
    \lFor{$i = \lfloor \log_2 k \rfloor, \ldots, 2, 1$}{sample $B_i \sim \Binomial\left(B_{i+1}, 1/2\right)$}
    \lFor{$i = 0, 1, \ldots, \lfloor \log_2 k\rfloor$} {$T_i \leftarrow \min\left\{B_{i+1},\left\lfloor\frac{k}{2^{ \lfloor\log_2 k\rfloor  - i}}\right\rfloor\right\}$}
    $i_0 \gets \min\{i \in \{0, 1, \ldots, \lfloor \log_2 k\rfloor\}: T_{i_0} > m/2$\}\;
    Let $M = (+\infty, -\infty, -\infty, \ldots, -\infty)$ be an array of length $3m$\;
    Read the first $B_{i_0 + 1}$ elements; store the largest $\min\{B_{i_0+1}, 3m-1\}$ elements among them in $M[2]$ through $M[\min\{B_{i_0+1}, 3m-1\}+1]$ in decreasing order\;\label{line:base-case}
    $M.\text{move}(3m/2 - T_{i_0} - 1)$\;\label{line:base-case-shift} 
    $\rank \gets T_{i_0}$\;
    \For{$i = i_0+1, \ldots, \lfloor \log_2 k\rfloor$} {
        \For{$j = B_i +1, B_{i} + 2, \ldots, B_{i+1}$} {
            Read the $j$-th element $s_j$\;
            \lIf{$s_j > M[3m/2]$}{$\rank \gets \rank + 1$}
            $M.\text{insert}(s_j)$\;
        }
        $M.\text{move}(\rank - T_i)$\; \label{line:inductive-step-shift}
        $\rank \gets T_i$\;
    }
    \Return $M[3m/2]$\;
\end{algorithm2e}

\begin{algorithm2e}
    \caption{$M.\text{insert}(q)$}
    Find the smallest $i^*$ such that $M[i^*] < q$\;
    \lIf{no such $i^*$ exists or $M[i^*-1] = \bot$}{\Return}
    \uIf{$i^* \le 3m/2$}{
        \lFor {$ i = 1, 2, \ldots, i^* - 2$} {$M[i] \gets M[i + 1]$}
        $M[i^* - 1]\gets q$\;
    } \Else {
        \lFor {$i = 3m, 3m-1, \ldots, i^* +1$} {$M[i] \gets M[i - 1]$}
        $M[i^*]\gets q$\;
    }
    \label{algorithm.insert}
\end{algorithm2e}

\begin{algorithm2e}
    \caption{$M$.move($d$)}
    \uIf{$d < 0$} {
        \lFor{$i = 1,  \ldots, 3m - |d|$}{$M[i] \leftarrow M\left[i + |d|\right]$}
        \For{$i =  3m - |d| + 1, 3m - |d| + 2, \ldots, 3m$}{
            \lIf{$M[3m - |d|] = -\infty$}{$M[i] \leftarrow -\infty$}
            \lElse {$M[i]\leftarrow \bot$}
        }
    } \Else {
        \lFor{$i = 3m, 3m-1, \ldots, d+1$}{$M[i] \leftarrow M\left[i - d\right]$}
        \For{$i = d, d-1, \ldots, 1$}{
            \lIf{$M[d+1] = +\infty$}{$M[i] \leftarrow +\infty$}
            \lElse{$M[i] \leftarrow \bot$}
        }
    }
  \label{algorithm.move}
\end{algorithm2e}

The algorithm maintains an array $M$ of length $3m$, divided into three blocks of length $m$. The first and the last $m$ entries are called \emph{buffers}. The algorithm tries to keep the $k$-th largest element in the middle $m$ entries of the array. In order to do so: (1) The array of length $3m$ always contains consecutive elements (among the elements that have been observed so far) in decreasing order; (2) We keep track of the rank of the middle element in the array (i.e., $M[3m/2]$) among the elements seen. Note that, by the first condition, we would know the rank of every element in the array.
 
The algorithm runs in $O(\log k)$ \emph{stages}. At the beginning of each stage $i$, the algorithm samples a string of length $B_i \sim \Binomial\left(B_{i+1}, \frac{1}{2}\right)$, while $B_{i+1}$ is the number of elements seen at the end of that stage, and tries to find the $2^i$-th largest element among the first $B_{i+1}$ elements seen.

In a typical execution of the algorithm, at the end of each stage $i$, the $\left\lfloor\frac{k}{2^{\lfloor\log_2 k\rfloor -i}}\right\rfloor$-th ($\approx 2^i$-th) largest element (among the elements that have been encountered so far) stays in the middle $m$ elements of the array. In this case, we shift the entries of the array to ensure that the $\left\lfloor\frac{k}{2^{\lfloor\log_2 k\rfloor -i}}\right\rfloor$-th largest element is at the center of the array (i.e., with index $3m/2$). All the other elements are shifted together, so that the array still stores consecutive elements in decreasing order. Doing so might cause one of the buffers to be half empty. We expect the algorithm, after reading the elements in the next stage, will be able to fill this half-full buffer.
 
We start with an easy observation: at the beginning of stage~$i$, the $B_i$ elements that have already been seen constitute a uniformly random subset of the first $B_{i+1}$ elements in the random-order stream.
\begin{lemma}\label{lemma:uniform-subset}
    Conditioning on the value of $B_{i+1}$ and the set $S$ of the first $B_{i+1}$ elements, the subset of the first $B_i$ elements is uniformly distributed over all subsets of $S$.
\end{lemma}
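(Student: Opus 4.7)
The plan is to reduce the claim to \Cref{lemma:subsample}, which states that a random-length prefix of a random-order stream, with prefix length drawn from $\Binomial(n,p)$, is distributed as an independent $p$-subsample of the underlying set. Here we will apply it with $n$ replaced by $B_{i+1}$ and $p = 1/2$.

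First, I would invoke exchangeability of the uniformly random permutation $s = (s_1, \ldots, s_n)$: conditioning on the value of $B_{i+1}$ and on the unordered set $S$ of elements occupying positions $1, 2, \ldots, B_{i+1}$, the induced relative ordering of the elements of $S$ within those positions is itself a uniformly random permutation of $S$. This is the standard fact that a uniformly random permutation, conditioned on the unordered content of any prefix, yields a uniformly random permutation of that prefix. Moreover, since $B_i \sim \Binomial(B_{i+1}, 1/2)$ is drawn by the algorithm independently of the stream, $B_i$ remains distributed as $\Binomial(B_{i+1}, 1/2)$ after conditioning on $(B_{i+1}, S)$, and is independent of the relative ordering of the first $B_{i+1}$ positions.

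Second, I would apply \Cref{lemma:subsample} to the ground set $S$ of size $B_{i+1}$, with the uniformly random permutation being the induced ordering of $S$ on positions $1, \ldots, B_{i+1}$, with sample size $B_i \sim \Binomial(B_{i+1}, 1/2)$, and with $p = 1/2$. The lemma then asserts that $\{s_1, \ldots, s_{B_i}\}$ is distributed as a random subset of $S$ that includes each element independently with probability $1/2$. Such a product-Bernoulli subset places mass exactly $2^{-|S|}$ on every subset of $S$, which is precisely the uniform distribution over all subsets of $S$ that the lemma claims.

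I do not expect any real obstacle here: the lemma is essentially an immediate corollary of \Cref{lemma:subsample} combined with standard exchangeability. The only point that deserves minor care is confirming that the independence of $B_i$ from the stream ordering is preserved under conditioning on $(B_{i+1}, S)$, which is straightforward from how $B_i$ is sampled in \Cref{alg:exact-selection}.
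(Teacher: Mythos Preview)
Your proposal is correct and takes essentially the same approach as the paper: the paper's proof is the single line ``This is a special case of \Cref{lemma:subsample} when $p = 1/2$,'' and your argument is exactly this application, with the exchangeability and independence observations spelled out explicitly rather than left implicit.
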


\begin{proof}
    This is a special case of \Cref{lemma:subsample} when $p = 1/2$.
\end{proof}

Next, we define a ``good event'' over the randomness in the ordering of the $n$ elements as well as the values $B_1$ through $B_{\lfloor \log_2 k\rfloor}$. Later, we will show that this good event happens with high probability, and implies the correctness of \Cref{alg:exact-selection}. Recall from \Cref{alg:exact-selection} that we define $T_i = \min\left\{B_{i+1},\left\lfloor\frac{k}{2^{ \lfloor\log_2 k\rfloor  - i}}\right\rfloor\right\}.$
\begin{definition}[Good event]\label{def.egood}
    Let $s_1, s_2, \ldots, s_n$ denote the $n$ elements in the order in which they arrive. Define $\goodevent$ as the event that the following conditions hold simultaneously:
    \begin{enumerate}
        \item For each stage $i$ such that $T_i \ge m / 2$, the $T_{i-1}$-th largest element among $s_{1:B_i}$ has a rank between $T_{i} - m/2$ and $T_i + m/2$ (inclusive) among $s_{1:B_{i+1}}$.
       \item For each stage $i\ge 2$, 
           either $T_{i-1} \le m$ or at least $m/2$ elements in $s_{B_i +1:B_{i+1}}$ are between the $(T_{i-1}-m+1)$-th and the $ T_{i-1}$-th largest elements in $s_{1:B_i}$.
        \item For each stage $i \ge 2$, 
         either $T_{i-1}  + m> B_i$ or at least $m / 2$ elements in $s_{B_i +1:B_{i+1}}$ are between the $T_{i-1}$-th and the $(T_{i-1} + m)$-th largest elements in $s_{1:B_i}$.
         \item For stage $i_0 \coloneqq \min\left\{i \in \{0, 1, \ldots, \lfloor \log_2 k \rfloor \}: T_{i} > \frac{m}{2} \right\}$, we have $T_{i_0} < 2m$.
    \end{enumerate}
\end{definition}

\begin{lemma}\label{lemma:exact-cond-1}
   For any $m \ge 4$,
Condition 1 in \Cref{def.egood} holds with probability at least
    \[
        1-   \lfloor\log_2 k \rfloor\cdot2^{-m/2} - 2\sum_{i=0}^{\lfloor\log_2k\rfloor-1}\exp\left(- \frac{m^2}{32(k/2^i)}\right).
    \]
\end{lemma}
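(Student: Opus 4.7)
The approach is a union bound over the at most $\lfloor\log_2 k\rfloor$ stages $i$ with $T_i \ge m/2$, where each stage's failure probability is split into a Hoeffding contribution (which will aggregate to the $2\sum \exp(-m^2/(32(k/2^j)))$ term in the lemma) and a boundary contribution (which will aggregate to $\lfloor\log_2 k\rfloor \cdot 2^{-m/2}$).

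The key preliminary reduction re-expresses $s_{1:B_i}$ as an independent subsample. Fix a stage $i$ and condition on $B_{i+1}$ and on the set $s_{1:B_{i+1}}$. Combining \Cref{lemma:uniform-subset} with the fact that $B_i \sim \Binomial(B_{i+1}, 1/2)$ is drawn independently of the ordering shows that each element of $s_{1:B_{i+1}}$ is placed into $s_{1:B_i}$ independently with probability $1/2$. Enumerating $s_{1:B_{i+1}}$ in decreasing order as $x_1 > \cdots > x_{B_{i+1}}$ and setting $\xi_j = \1{x_j \in s_{1:B_i}}$ yields iid $\Bern(1/2)$ variables. Writing $N_j = \sum_{l \le j}\xi_l$, one has $B_i = N_{B_{i+1}}$, $T_{i-1} = \min(B_i, k_{i-1})$ with $k_{i-1} \coloneqq \lfloor k/2^{\lfloor\log_2 k\rfloor - i + 1}\rfloor$, and the rank $J \coloneqq \rank_{1:B_{i+1}}(x^*)$ equals $\min\{j : N_j = T_{i-1}\}$. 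The failure event $|J - T_i| > m/2$ then splits into an upper tail $\{N_{\lfloor T_i + m/2\rfloor} < T_{i-1}\}$ (vacuous when $T_i = B_{i+1}$) and a lower tail $\{N_{\lceil T_i - m/2\rceil - 1} \ge T_{i-1}\}$.

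I next perform a case analysis on whether each of the two $\min$ operations binds. In the \emph{interior} case $T_{i-1} = k_{i-1}$, the target $T_{i-1} \le k_i/2$ differs from $\mathbb{E}[N_j] = j/2$ by at least $m/4 - O(1)$, so Hoeffding on the $\Binomial(j, 1/2)$ variable $N_j$ (with $j \le T_i + m/2 \le 2 T_i$ since $T_i \ge m/2$) gives a per-tail bound of $\exp(-m^2/(16 T_i)) \le \exp(-m^2/(32 k_i))$. In the \emph{boundary} case $T_{i-1} = B_i$, the element $x^*$ is the smallest of $s_{1:B_i}$, and each tail reduces to the event ``$\xi_j = 0$ throughout an interval of length $\ge m/2$,'' of probability at most $2^{-m/2}$. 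The most delicate sub-case is $T_i = k_i$ with $T_{i-1} = B_i$ in the upper tail, where $J$ is the maximum index of a $\Bern(1/2)$ flip in $[1, B_{i+1}]$ and has non-negligible marginal mass near $B_{i+1}$; I handle this by conditioning on $j^\star \coloneqq \max\{j : \xi_j = 1\}$, noting that $\xi_1, \ldots, \xi_{j^\star - 1}$ remain iid $\Bern(1/2)$, and observing that the joint event forces a Hoeffding lower-tail deviation of order $m$ in a binomial of size $\ge k_i + m/2$, from which summing the geometric in $j^\star$ preserves an $\exp(-\Omega(m^2/k_i))$ bound. Combining the per-stage contributions by a union bound and re-indexing $j = \lfloor\log_2 k\rfloor - i$ in the Hoeffding terms produces the claimed probability. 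The principal obstacle is precisely this last sub-case, where direct Hoeffding on $J$ fails and one must exploit the simultaneous binomial-deviation constraint implicit in $T_{i-1} = B_i$.
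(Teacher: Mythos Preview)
Your plan is sound and closely parallels the paper's proof: fix a stage, condition on $B_{i+1}$ and the set $s_{1:B_{i+1}}$, model membership in $s_{1:B_i}$ by i.i.d.\ $\Bern(1/2)$ indicators $\xi_j$, and split the failure $|J - T_i| > m/2$ into an upper and a lower tail. For the lower tail, your case split on whether $T_{i-1} = k_{i-1}$ (Hoeffding on $N_{T_i - m/2}$) or $T_{i-1} = B_i$ (all-zeros on an interval of length $\ge m/2$) is exactly what the paper does, and yields the same $2^{-m/2} + \exp(-m^2/(32k_i))$ per stage.

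The one place you diverge is the upper tail. Your ``delicate sub-case'' (conditioning on $j^\star = \max\{j:\xi_j=1\}$ and showing that the simultaneous constraint $B_i < k_{i-1}$ forces a Hoeffding deviation whose square grows faster than $j^\star$) does go through once you check that $g(s) = (s/2 - k_{i-1}+1)^2/s$ is increasing for $s \ge k_i + m/2$, but it is unnecessary. The paper's observation is simpler: since $T_{i-1} = \min\{B_i, k_{i-1}\} \le k_{i-1}$ holds unconditionally, the upper-tail event $\{N_{\lfloor T_i + m/2\rfloor} < T_{i-1}\}$ is contained in the non-random-threshold event $\{N_{k_i + m/2} < k_{i-1}\}$, and one Hoeffding bound on $\Binomial(k_i + m/2, 1/2)$ gives $\exp(-m^2/(16 k_i))$ directly, with no case split on $T_{i-1}$ at all (and when $k_i + m/2 \ge B_{i+1}$ the upper tail is vacuous). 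Besides being cleaner, this avoids a bookkeeping hazard in your route: if you bound the interior and boundary upper-tail sub-cases separately and then add, you incur an extra $\exp$ term per stage and end up with a factor $3$ rather than $2$ in front of the sum in the lemma. To recover the stated constant you would have to observe that both of your upper-tail sub-events are already contained in $\{N_{k_i+m/2} < k_{i-1}\}$ --- at which point the $j^\star$ machinery is moot.
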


\begin{proof}
We fix a stage $i$, and condition on the value of $B_{i+1}$ as well as the set $\{s_1, s_2, \ldots, s_{B_{i+1}}\}$ of the first $B_{i+1}$ elements in the stream (but not the exact order in which they arrive). We will upper bound the probability for Condition~1 to be violated in Stage~$i$. Note that after conditioning on $B_{i+1}$, the value of $T_i$ is determined, and we may assume that $T_i \ge m / 2$; otherwise, Condition~1 would hold vacuously.

Let $x_i^*$ denote the $T_{i-1}$-th largest element among $s_{1:B_i}$. Recall that Condition~1 is the intersection of the following two events:
\begin{enumerate}
    \item Event $E_1$: The rank of $x^*_i$ among $s_{1:B_{i+1}}$ is at least $T_i - m/2$.
    \item Event $E_2$: The rank of $x^*_i$ among $s_{1:B_{i+1}}$ is at most $T_i + m/2$.
\end{enumerate}
In the following, we lower bound the probability of each of the two events above.

\paragraph{Lower bound $\pr{}{E_1}$.} We note that the following is a sufficient condition for $E_1$:
\begin{itemize}
    \item Event $E'_1$: Among the largest $T_i - m/2$ elements in $s_{1:B_{i+1}}$, strictly fewer than $T_{i-1}$ of them appear in $s_{1:B_i}$.
\end{itemize}
To see why $E'_1$ implies $E_1$, we sort the elements $s_1, s_2, \ldots, s_{B_{i+1}}$ in decreasing order: $x_1 > x_2 > \cdots > x_{B_{i+1}}$. When $E'_1$ happens, strictly fewer than $T_{i-1}$ of the elements $x_1, x_2, \ldots, x_{T_i - m/2}$ appear in $s_{1:B_i}$. Then, $x_i^*$---being the $T_{i-1}$-th largest among $s_{1:B_i}$---must be smaller than $x_{T_i - m/2}$. In other words, the rank of $x^*_i$ among $s_{1:B_{i+1}}$ must be strictly higher than $T_i - m/2$, which implies event $E_1$. 

Next, we note that if event $E'_1$ does not happen, at least one of the following two must be true:
\begin{itemize}
    \item Event $E_{1, a}$: None of the $m/2$ elements $x_{B_{i+1} - m/2 + 1}, x_{B_{i+1} - m/2 + 2}, \ldots, x_{B_{i+1}}$ appear in $s_{1:B_i}$. (This is well-defined, since $B_{i+1} \ge T_i \ge m/2$.)
    \item Event $E_{1, b}$: At least $\lfloor k / 2^{\lfloor \log_2 k\rfloor - (i-1)} \rfloor$ elements among $x_1, x_2, \ldots, x_{T_i - m /2}$ appear in $s_{1:B_i}$.
\end{itemize}
To see this, recall that $T_{i-1} = \min\left\{B_i, \lfloor k/2^{\lfloor\log_2 k\rfloor - (i-1)}\rfloor\right\}$. If $B_i \ge \lfloor k/2^{\lfloor\log_2 k\rfloor - (i-1)}\rfloor$, we have $T_{i-1} = \lfloor k / 2^{\lfloor \log_2 k\rfloor - (i-1)} \rfloor$, so the negation of event $E'_1$ would exactly be event $E_{1,b}$. If $B_{i} < \lfloor k/2^{\lfloor\log_2 k\rfloor - (i-1)}\rfloor$, we have $T_{i-1} = B_i$, in which case event $E'_1$ is violated only if exactly $T_{i-1} = B_i$ elements among $x_1, \ldots, x_{T_i - m/2}$ appear in $s_{1:B_i}$. For this to happen, none of the elements $x_{T_i - m/2 + 1}, \ldots, x_{B_{i+1}}$ may appear in $s_{1:B_i}$. Since $B_{i+1} \ge T_i$, this implies the event $E_{1,a}$.

Now we lower bound $\pr{}{E'_1}$ by upper bounding $\pr{}{E_{1,a}}$ and $\pr{}{E_{1,b}}$. By \Cref{lemma:uniform-subset}, each of $x_{T_i - m/2+1}, \ldots, x_{B_{i+1}}$ appears in $s_{1:B_i}$ with probability $1/2$ independently. It follows that
\[
    \pr{}{E_{1,a}} \le 2^{-m/2}.
\]
To control $\pr{}{E_{1,b}}$, we note that, for any $j \in [B_{i+1}]$, \Cref{lemma:uniform-subset} implies that $s_{1:B_i} \cap \{x_1, \ldots , x_j\}$ is uniformly distributed among the $2^{j}$ subsets of $\{x_1, x_2, \ldots, x_j\}$. In particular, for $j = T_i - m / 2$, $\left|s_{1:B_i} \cap \{x_1, \ldots, x_{T_{i}-m/2}\}\right|$ follows the binomial distribution $\Binomial\left(T_{i}-m/2, 1/2\right)$. Since $\lfloor k/2^{\lfloor\log_2 k\rfloor - (i-1)}\rfloor \ge \frac{1}{2}\cdot (\lfloor k/2^{\lfloor\log_2 k\rfloor - i}\rfloor-1) \ge \frac{1}{2}\cdot (T_i-1)$, by a Chernoff bound, we have
\begin{align*}
    \pr{}{E_{1,b}}
&=  \pr{X \sim \Binomial(T_i - m/2, 1/2)}{X \ge \lfloor k / 2^{\lfloor \log_2 k\rfloor - (i-1)}\rfloor}\\
&\le\pr{X \sim \Binomial(T_i - m/2, 1/2)}{X \ge (T_i-1) / 2}\\
&=  \pr{X \sim \Binomial(T_i - m/2, 1/2)}{X - \left(T_i / 2 - m / 4\right) \ge m / 4 - 1/2}\\
&\le\exp\left(-\frac{2\cdot(m / 4 - 1/2)^2}{T_i - m/2}\right)\\
&\le \exp\left(-\frac{(m-2)^2}{8\cdot T_{i}}\right) \tag{$T_i - m / 2 \le T_i$} \\
&\le \exp\left(-\frac{m^2}{32k/2^{\lfloor\log_2k\rfloor-i}}\right). \tag{$m \ge 4$, $T_i \le k/2^{\lfloor\log_2k\rfloor-i}$}
\end{align*}

Therefore, we have
\[
    \pr{}{E_1} 
\ge \pr{}{E'_1}
\ge 1 - \pr{}{E_{1,a}} - \pr{}{E_{1, b}}
\ge 1 - \left[\exp\left(-\frac{m^2}{32k/2^{\lfloor\log_2k\rfloor-i}}\right) + 2^{-m/2}\right].
\]

\paragraph{Lower bound $\pr{}{E_2}$.} The analysis for $\pr{}{E_2}$ is almost symmetric. First, we note that the following is a sufficient condition for $E_2$:
\begin{itemize}
    \item Event $E_{2}'$: Among the largest $\min\{\left\lfloor k/2^{\lfloor\log_2 k\rfloor - i}\right\rfloor + m/2, B_{i+1}\}$ elements in $s_{1:B_{i+1}}$, at least $T_{i-1}$ of them appear in $s_{1:B_i}$.
\end{itemize}
To see why $E'_2$ implies $E_2$, again, sort the elements $s_1, s_2, \ldots, s_{B_{i+1}}$ in decreasing order: $x_1 > x_2 > \cdots > x_{B_{i+1}}$. When $E'_2$ happens, at least $T_{i-1}$ of the elements $x_1, x_2, \ldots, x_{\min\{\left\lfloor k/2^{\lfloor\log_2 k\rfloor - i}\right\rfloor + m/2, B_{i+1}\}}$ appear in $s_{1:B_i}$. Then, $x_i^*$---being the $T_{i-1}$-th largest among $s_{1:B_i}$---must be larger than or equal to $x_{\min\{\left\lfloor k/2^{\lfloor\log_2 k\rfloor - i}\right\rfloor + m/2, B_{i+1}\}}$. In other words, the rank of $x^*_i$ among $s_{1:B_{i+1}}$ must be at most
\[
    \min\left\{\left\lfloor k/2^{\lfloor\log_2 k\rfloor - i}\right\rfloor + m/2, B_{i+1}\right\}
\le \min\left\{\left\lfloor k/2^{\lfloor\log_2 k\rfloor - i}\right\rfloor, B_{i+1}\right\} + m/2
=   T_i + m/2.
\]
This implies event $E_2$. 

To violate event $E_2'$, the following must be true:
\begin{itemize}
    \item Event $E_{2, a}$: $B_{i+1} > \left\lfloor k/2^{\lfloor\log_2 k\rfloor - i}\right\rfloor + m/2$, and  strictly less than $T_{i-1}$ elements among $x_1, x_2, \ldots, x_{\left\lfloor k/2^{\lfloor\log_2 k\rfloor - i}\right\rfloor + m/2}$ appear in $s_{1:B_i}$.
\end{itemize}
To see this, if $B_{i+1} \le \left\lfloor k/2^{\lfloor\log_2 k\rfloor - i}\right\rfloor + m/2$, event $E'_2$ would (vacuously) happen: among the largest $B_{i+1}$ elements in $s_{1:B_{i+1}}$, at least $T_{i-1}$ of them appear in $s_{1:B_i}$, which follows from $T_{i-1} \le B_i$. In the remaining case that $B_{i+1} > \left\lfloor k/2^{\lfloor\log_2 k\rfloor - i}\right\rfloor + m/2$, the second part of $E_{2, a}$ is exactly the negation of $E_{2}'$.

Now, we upper bound $\pr{}{E_{2, a}}$. Again, \Cref{lemma:uniform-subset} implies that
\[
    \left|s_{1:B_i} \cap \left\{x_1, \ldots, x_{\left\lfloor k/2^{\lfloor\log_2 k\rfloor - i}\right\rfloor + m/2}\right\}\right|
\]
follows the binomial distribution $\Binomial\left(\left\lfloor k/2^{\lfloor\log_2 k\rfloor - i}\right\rfloor + m/2, 1/2\right)$.  By a Chernoff bound, we have
\begin{align*}
  \pr{}{E_{2, a}}
&=  \pr{X \sim \Binomial(\left\lfloor k/2^{\lfloor\log_2 k\rfloor - i}\right\rfloor + m/2, 1/2)}{X < T_{i-1} }\\
&\le\pr{X \sim \Binomial(\left\lfloor k/2^{\lfloor\log_2 k\rfloor - i}\right\rfloor + m/2, 1/2)}{X  \le \left\lfloor k/2^{\lfloor\log_2 k\rfloor - i}\right\rfloor / 2}\\
&=  \pr{X \sim \Binomial(\left\lfloor k/2^{\lfloor\log_2 k\rfloor - i}\right\rfloor + m/2, 1/2)}{\left(\left\lfloor k/2^{\lfloor\log_2 k\rfloor - i}\right\rfloor / 2 + m / 4\right) - X \ge m / 4 }\\
&\le\exp\left(-\frac{2\cdot (m / 4)^2}{\left\lfloor k/2^{\lfloor\log_2 k\rfloor - i}\right\rfloor+ m / 2}\right)\\
&\le \exp\left(-\frac{m^2 / 8}{2 \left\lfloor k/2^{\lfloor\log_2 k\rfloor - i}\right\rfloor}\right) \notag \\
&\le \exp\left(-\frac{m^2}{16k/2^{\lfloor\log_2k\rfloor-i}}\right),
\end{align*}
where the first inequality is due to $  \left\lfloor k/2^{\lfloor\log_2 k\rfloor - i}\right\rfloor  \ge 2\cdot\left\lfloor k/2^{\lfloor\log_2 k\rfloor - (i-1)}\right\rfloor \ge 2\cdot T_{i-1}$, and the third inequality is due to $m/2 \le T_i \le \left\lfloor k/2^{\lfloor\log_2 k\rfloor - i}\right\rfloor$.

Therefore, we have
\[
    \pr{}{E_2}
\ge \pr{}{E'_2}
\ge 1 - \pr{}{E_{2,a}}
\ge 1 - \exp\left(-\frac{m^2}{16k/2^{\lfloor\log_2k\rfloor-i}}\right).
\]

\paragraph{Put everything together.} By the union bound, for each stage $i$, the probability that  $x_i^*$'s rank among $s_{1:B_{i+1}}$ is outside the range $[T_i - m/2, T_i + m/2]$ is at most
\[
    (1 - \pr{}{\event_1}) + (1 - \pr{}{\event_2})
\le 2\exp\left(-\frac{m^2}{32k/2^{\lfloor\log_2k\rfloor-i}}\right) + 2^{-m/2}.
\]
Applying another union bound again over all stages $i \in \{1, 2, \ldots, \lfloor\log_2k\rfloor\}$ proves the lemma.
\end{proof}

\begin{lemma}\label{lemma:exact-cond-23}
    Condition~2 in \Cref{def.egood} holds with probability at least
    \[
        1 - \lfloor\log_2 k \rfloor\cdot 4e^{2/3}\cdot e^{-m/12},
    \]
   and Condition~3 holds with probability at least
    \[
        1 - \lfloor\log_2 k \rfloor\cdot 2e^{-m/12}.
    \]
\end{lemma}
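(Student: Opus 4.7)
The plan is to fix each stage $i \in \{2, \ldots, \lfloor \log_2 k\rfloor\}$ in isolation, bound the per-stage failure probability of each condition by $O(e^{-m/12})$, and union bound over stages. Fix $i$ and condition on the value of $B_{i+1}$ together with the unordered set $\{s_1, \ldots, s_{B_{i+1}}\}$. By \Cref{lemma:uniform-subset}, each element of this set is independently placed into $\{s_1, \ldots, s_{B_i}\}$ with probability $1/2$. Writing the elements in decreasing order as $y_1 > y_2 > \cdots > y_{B_{i+1}}$ and defining $X_j \coloneqq \1{y_j \in \{s_1, \ldots, s_{B_i}\}}$, the $X_j$'s are i.i.d.\ $\Bern(1/2)$. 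Let $J_1 < J_2 < \cdots < J_{B_i}$ be the positions with $X_j = 1$; then $y_{J_r}$ is the $r$-th largest element of $s_{1:B_i}$, and $T_{i-1} = \min\{B_i, t\}$ with $t \coloneqq \lfloor k/2^{\lfloor \log_2 k\rfloor - (i-1)}\rfloor$.

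For Condition~2 (assuming $T_{i-1} > m$), the number of elements of $s_{B_i+1:B_{i+1}}$ whose rank in $s_{1:B_i}$ lies in $[T_{i-1}-m+1, T_{i-1}]$ equals the number of zero-entries strictly between positions $J_{T_{i-1}-m+1}$ and $J_{T_{i-1}}$, namely $J_{T_{i-1}} - J_{T_{i-1}-m+1} - (m-1)$. So we need $J_{T_{i-1}} - J_{T_{i-1}-m+1} \ge 3m/2 - 1$. Split into \textbf{Case A} ($B_i \ge t$, so $T_{i-1} = t$ is deterministic) and \textbf{Case B} ($m < B_i < t$, so $T_{i-1} = B_i$). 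In Case~A, the memoryless property of the Bernoulli sequence makes $J_t - J_{t-m+1}$ a sum of $m-1$ i.i.d.\ $\Geo(1/2)$ variables. In Case~B, I would invoke time reversal: the sequence $\tilde X_j \coloneqq X_{B_{i+1}+1-j}$ is also i.i.d.\ $\Bern(1/2)$, and under the correspondence $\tilde J_r = B_{i+1} + 1 - J_{B_i - r + 1}$, the gap $J_{B_i} - J_{B_i-m+1}$ becomes $\tilde J_m - \tilde J_1$, again a sum of $m-1$ i.i.d.\ $\Geo(1/2)$ variables. In either case, the event $J_r - J_{r-m+1} \le 3m/2 - 2$ is equivalent to $\Binomial(3m/2 - 2, 1/2) \ge m-1$, and Hoeffding's inequality at deviation $m/4$ gives $\exp(-2(m/4)^2/(3m/2-2)) \le e^{-m/12}$. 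Summing over the two sub-cases and unioning over stages recovers the claimed bound (with constants slightly tighter than the stated $4e^{2/3}$).

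For Condition~3, Case~B is vacuously satisfied because $T_{i-1} + m = B_i + m > B_i$, so only Case~A (with $B_i \ge t + m$) matters. There we need $J_{t+m} - J_t \ge 3m/2$, which is a sum of $m$ i.i.d.\ $\Geo(1/2)$ variables, and the same binomial-tail Hoeffding bound yields $e^{-m/12}$ per stage; union bounding over stages completes the claim. The only delicate step I anticipate is the Case~B analysis of Condition~2: because $B_i$ depends on the entire Bernoulli sequence, one cannot directly invoke memorylessness to identify $J_{B_i} - J_{B_i-m+1}$ with a clean geometric sum. The time-reversal trick (or equivalently the exchangeability of inter-success spacings $D_s \coloneqq J_{s+1} - J_s - 1$ in a uniformly random size-$B_i$ subset of $[B_{i+1}]$) circumvents this and reduces the computation to the same tail bound.
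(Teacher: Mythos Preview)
Your proof is correct and follows essentially the same strategy as the paper: both split according to whether $T_{i-1}=t\coloneqq\lfloor k/2^{\lfloor\log_2 k\rfloor-(i-1)}\rfloor$ or $T_{i-1}=B_i$, use \Cref{lemma:uniform-subset} to get i.i.d.\ $\Bern(1/2)$ indicators, and finish with a Hoeffding bound on a $\Binomial(\approx 3m/2,1/2)$ tail. The only mechanical difference is that the paper conditions on the position of a specific order statistic (the $(t-m+1)$-th largest, or the minimum, of $s_{1:B_i}$) and then splits into two further sub-cases depending on whether $3m/2$ positions remain before the boundary, whereas you phrase the same computation via geometric inter-success gaps and handle Case~B by time reversal; your formulation avoids those boundary sub-cases and hence gives slightly sharper constants than the stated $4e^{2/3}$ and $2$. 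One small point to make precise when you write it up: in a \emph{finite} Bernoulli sequence the gap $J_t-J_{t-m+1}$ is not literally a sum of $\Geo(1/2)$ variables, so you should either couple to an infinite sequence (so that on the event $B_i\ge t$ the finite and infinite order statistics agree) or, equivalently, condition on $J_{t-m+1}$ as the paper does---the same remark applies to $\tilde J_m-\tilde J_1$ after your time reversal.
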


\begin{proof}
We start with the claim regarding Condition~2 in \Cref{def.egood}. For Condition~2 to be violated, there must be a stage $i$ such that $T_{i-1} > m$, and strictly less than $m/2$ elements in $s_{B_i +1:B_{i+1}}$ are between the $(T_{i-1}-m+1)$-th and the $ T_{i-1}$-th largest elements in $s_{1:B_i}$. We fix a stage~$i$ and upper bound the probability of the event that Condition~$2$ is violated in stage~$i$. Note that we may assume that $\left\lfloor k/2^{\lfloor\log_2 k\rfloor - (i-1)}\right\rfloor > m$: since $T_{i-1} \le \left\lfloor k/2^{\lfloor\log_2 k\rfloor - (i-1)}\right\rfloor$, for $T_{i-1} > m$ to hold, we must have $\left\lfloor k/2^{\lfloor\log_2 k\rfloor - (i-1)}\right\rfloor > m$.

Recall that $T_{i-1}$ is the smaller value between $\left\lfloor k/2^{\lfloor\log_2 k\rfloor - (i-1)}\right\rfloor$ and $B_i$. For Condition~2 in \Cref{def.egood} to be violated at stage~$i$, at least one of the following two events must happen:
\begin{itemize}
    \item Event $E_a$: $B_i \ge \left\lfloor k/2^{\lfloor\log_2 k\rfloor - (i-1)}\right\rfloor$, and there are strictly less than $m/2$ elements in $s_{B_{i} + 1: B_{i+1}}$ between the $\left(\left\lfloor k/2^{\lfloor\log_2 k\rfloor - (i-1)}\right\rfloor - m+1\right)$-th largest  and the $\left\lfloor k/2^{\lfloor\log_2 k\rfloor - (i-1)}\right\rfloor$-th largest elements in $s_{1:B_i}$.
    \item Event $E_b$: $B_i > m$, and there are strictly less than $m/2$ elements in $s_{B_i + 1: B_{i+1}}$ between the $(B_i - m+1)$-th largest and the $B_i$-th largest elements in $s_{1:B_i}$.
\end{itemize}
In the following, we control the probabilities of events $E_a$ and $E_b$.

\paragraph{Upper bound $\pr{}{E_a}$.}
List $s_{1:B_{i+1}}$ in decreasing order:  $x_1> x_2 > \cdots> x_{B_{i+1}}$. For each $i_1 \in [B_{i+1}]$, conditioning on that $x_{i_1}$ is the $\left(\left\lfloor k/2^{\lfloor\log_2 k\rfloor - (i-1)}\right\rfloor - m+1\right)$-th largest element in $s_{1:B_i}$, event $E_a$ implies at least one of the following two:
\begin{itemize}
    \item Event $E_{a,1}$: $B_{i+1} - i_1 \ge \frac{3m}{2}$, and at least $(m-1)$ elements among $ x_{i_1 + 1}, \ldots, x_{i_1 + 3m/2}$ are in $s_{1:B_i}$.
    \item Event $E_{a,2}$: $B_{i+1} - i_1 < \frac{3m}{2}$, and at least $(m-1)$ elements among $ x_{i_1 + 1}, \ldots, x_{B_{i+1}}$ are in $s_{1:B_i}$.
\end{itemize}
To see this, let $x_{i_2}$ be the $\left\lfloor k/2^{\lfloor\log_2 k\rfloor - (i-1)}\right\rfloor$-th largest element among $s_{1:B_i}$; such an element must exist, since event $E_a$ requires $B_i \ge \left\lfloor k/2^{\lfloor\log_2 k\rfloor - (i-1)}\right\rfloor$. Among $x_{i_1}, x_{i_1 + 1}, \ldots, x_{i_2}$, there are exactly $m$ elements in $s_{1:B_i}$ along with $(i_2 - i_1 + 1) - m = i_2 - i_1 - m+1$ elements in $s_{B_i + 1: B_{i+1}}$. Event $E_a$ would then imply that $i_2 - i_1 - (m-1) < m/2$, which is equivalent to $i_2 < i_1 + 3m/2-1$. It follows that, among $x_{i_1 + 1}, x_{i_1 + 2}, \ldots, x_{\min\{i_1 + 3m/2, B_{i+1}\}}$, at least $m$ of them must be in $s_{1:B_i}$, which further implies either $E_{a,1}$ or $E_{a,2}$.

We start with the probability of event $E_{a,1}$. By \Cref{lemma:uniform-subset}, conditioning on that $x_{i_1}$ is the $\left(\left\lfloor k/2^{\lfloor\log_2 k\rfloor - (i-1)}\right\rfloor - m+1\right)$-th largest element in $s_{1:B_i}$, each element among $ x_{i_1 + 1}, \ldots, x_{i_1 + 3m/2}$ still independently appears in $s_{1:B_i}$ with probability $1/2$. It then follows from a Chernoff bound that
\begin{align*}
  \pr{}{E_{a,1}}
&=  \pr{X \sim \Binomial(3m/2, 1/2)}{X \ge m-1}\\
&\le\pr{X \sim \Binomial(3m/2, 1/2)}{X -3m/4 \ge m/4-1}\\
&\le\exp\left(-\frac{2\cdot(m / 4 -1)^2}{3m/2}\right)
\le e^{2/3}\cdot   e^{-m/12}.
\end{align*}
Similarly, to upper bound the probability of event $E_{a,2}$, we note that \Cref{lemma:uniform-subset} implies that, conditioning on that $x_{i_1}$ is the $\left(\left\lfloor k/2^{\lfloor\log_2 k\rfloor - (i-1)}\right\rfloor - m+1\right)$-th largest element in $s_{1:B_i}$, each element among $ x_{i_1 + 1}, \ldots, x_{B_{i+1}}$ independently appears in $s_{1:B_i}$ with probability $1/2$. It again follows from a Chernoff bound that
\begin{align*}
  \pr{}{E_{a,2}}
&=  \pr{X \sim \Binomial(B_{i+1}-i_1, 1/2)}{X \ge m-1}\\
&\le\pr{X \sim \Binomial(3m/2, 1/2)}{X -3m/4 \ge m/4-1} \tag{$B_{i+1} - i_1 < 3m/2$}\\
&\le\exp\left(-\frac{2\cdot(m / 4 -1)^2}{3m/2}\right)
\le e^{2/3}\cdot e^{-m/12}.
\end{align*}
By the union bound, we have $\pr{}{E_a} \le \pr{}{E_{a,1}} + \pr{}{E_{a,2}} \le 2e^{2/3}\cdot e^{-m/12}$.

\paragraph{Upper bound $\pr{}{E_b}$.}
List $s_{1:B_{i+1}}$ in decreasing order:  $x_1> x_2 > \cdots> x_{B_{i+1}}$.
Conditioning on that $x_{i_2}$ is the smallest element in $s_{1:B_i}$, for event $E_b$ to happen, at least one of the following two must be true:
\begin{itemize}
    \item Event $E_{b,1}$: $i_2 > \frac{3m}{2}$, and at least $m-1$ elements among $x_{i_2- 3m/2}, x_{i_2- 3m/2+1},  \ldots, x_{i_2-1}$ are in $s_{1:B_i}$.
    \item Event $E_{b,2}$: $i_2 \le \frac{3m}{2}$, and at least $m-1$ elements among $x_{1}, x_2, \ldots, x_{i_2-1}$ are in $s_{1:B_i}$.
\end{itemize}

To see the above argument, let $x_{i_1}$ be the $\left(B_i - m+1\right)$-th largest element among $s_{1:B_i}$; such an element must exist, since event $E_b$ requires $B_i > m$. Among $x_{i_1}, x_{i_1 + 1}, \ldots, x_{i_2}$, there are exactly $m $ elements in $s_{1:B_i}$ along with $(i_2 - i_1 + 1) - m = i_2 - i_1 - m+1$ elements in $s_{B_i + 1: B_{i+1}}$. Event $E_b$ would then imply that $i_2 - i_1 - m +1< m/2$, which is equivalent to $i_1 > i_2 - 3m/2 + 1$. It follows that, among $x_{i_2 -1}, x_{i_2-2}, \ldots, x_{\max\{i_2 - 3m/2, 1\}}$, at least $m$ of them must be in $s_{1:B_i}$, which further implies either $E_{b,1}$ or $E_{b,2}$.

Conditioning on that $x_{i_2}$ is the smallest element in $s_{1:B_i}$, by a similar application of \Cref{lemma:uniform-subset}, each element among $x_{i_2 - 1}, x_{i_2 - 2}, \ldots, x_1$ independently appears in $s_{1:B_i}$. It then follows from a Chernoff bound that both $\pr{}{E_{b,1}}$ and $\pr{}{E_{b,2}}$ are upper bounded by $e^{2/3}\cdot e^{-m/12}$.

Finally, applying the union bound over all the stages proves the claim for Condition~$2$.

\paragraph{Condition~3.} The analysis for Condition~3 is similar and simpler. For Condition~$3$ in \Cref{def.egood} to be violated, there must be a stage $i$ such that $T_{i-1} + m \le B_i$ and strictly less than $m/2$ elements in $s_{B_i+1:B_{i+1}}$ are between the $T_{i-1}$-th and the $(T_{i-1} + m)$-th largest element in $s_{1:B_i}$. Recall that $T_{i-1}$ is the smaller value between $B_i$ and $\left\lfloor k/2^{\lfloor\log_2 k\rfloor - (i-1)}\right\rfloor$. When $T_{i-1} = B_i$, the above cannot hold, since we would have $T_{i-1} + m = B_i + m > B_i$. Thus, we will focus on the case that $T_{i-1} = \left\lfloor k/2^{\lfloor\log_2 k\rfloor - (i-1)}\right\rfloor$ in the following.

For Condition~3 in \Cref{def.egood} to be violated at stage~$i$, the following event must happen:
\begin{itemize}
    \item Event $E_a'$: $B_i\ge \left\lfloor k/2^{\lfloor\log_2 k\rfloor - (i-1)}\right\rfloor + m$, and there are strictly less than $m/2$ elements in $s_{B_i + 1: B_{i+1}}$ between the $\left\lfloor k/2^{\lfloor\log_2 k\rfloor - (i-1)}\right\rfloor $-th largest  and the $\left(\left\lfloor k/2^{\lfloor\log_2 k\rfloor - (i-1)}\right\rfloor + m\right) $-th largest elements in $s_{1:B_i}$.
\end{itemize}
To upper bound $\pr{}{E'_a}$, list $s_{1:B_{i+1}}$ in decreasing order:  $x_1> x_2 > \cdots> x_{B_{i+1}}$. Conditioning on that $x_{i_1}$ is the $\left\lfloor k/2^{\lfloor\log_2 k\rfloor - (i-1)}\right\rfloor $-th largest element in $s_{1:B_i}$, event $E_{a'}$ implies that at least one of the following two events happens:
\begin{itemize}
    \item Event $E_{a,1}'$: $B_{i+1} - i_1 \ge \frac{3m}{2}$, and at least $m$ elements among $ x_{i_1 + 1}, \ldots, x_{i_1 + 3m/2}$ are in $s_{1:B_i}$.
    \item Event $E_{a,2}'$: $B_{i+1} - i_1 < \frac{3m}{2}$, and at least $m$ elements among $ x_{i_1 + 1}, \ldots, x_{B_{i+1}}$ are in $s_{1:B_i}$.
\end{itemize}
This follows from an argument identical to the one for proving $E_a \implies E_{a,1} \vee E_{a,2}$ for Condition~2. Then, by similar applications of Chernoff bounds,
\[
    \pr{}{E'_a} \le \pr{}{E'_{a,1}} + \pr{}{E'_{a,2}} \le 2e^{-m/12}.
\]
Applying the union bound over all stages proves the claim for Condition~3.
\end{proof}

\begin{lemma}\label{lemma:exact-cond-4}
    Condition 4 in \Cref{def.egood} holds with probability at least \[
    1 - \lfloor \log_2 k\rfloor \cdot e^{-m/4}.
    \]
\end{lemma}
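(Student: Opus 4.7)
The plan is to argue that Condition~4 can fail only via a low-probability deviation of the binomials $(B_i)$. Once reduced to a pure statement about these random variables, a short Hoeffding bound combined with a union bound over the at most $\lfloor\log_2 k\rfloor$ candidate values of $i_0$ delivers the claim.

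First I would make the deterministic reduction: if $i_0$ is well-defined and $T_{i_0} > 2m$, then necessarily $B_{i_0} \le m/2$ and $B_{i_0+1} > 2m$. Indeed, the cap $T_{i_0} \le B_{i_0+1}$ forces $B_{i_0+1} > 2m$. Moreover, applying $\lfloor 2x \rfloor \le 2\lfloor x \rfloor + 1$ with $x = k/2^{\lfloor\log_2 k\rfloor - i_0 + 1}$ gives
\[
\left\lfloor \frac{k}{2^{\lfloor\log_2 k\rfloor - i_0}}\right\rfloor
\le 2\left\lfloor \frac{k}{2^{\lfloor\log_2 k\rfloor - i_0 + 1}}\right\rfloor + 1,
\]
so $T_{i_0} > 2m$ also forces $\lfloor k/2^{\lfloor\log_2 k\rfloor - i_0 + 1}\rfloor > m/2$ (otherwise the right-hand side is $\le m+1 \le 2m$, contradicting $T_{i_0}>2m$). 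Since $T_{i_0-1} \le m/2$ by minimality of $i_0$, the only way for the min defining $T_{i_0-1}$ to drop below $m/2$ is $B_{i_0} \le m/2$. The case $i_0 = 0$ is excluded because $T_0 = \min\{B_1,1\} \le 1 \le m/2$ for $m \ge 2$.

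Next, for each fixed $i \in \{1,\ldots,\lfloor\log_2 k\rfloor\}$, I would bound $\Pr[B_i \le m/2 \text{ and } B_{i+1} > 2m]$ by conditioning on $B_{i+1}$ and applying Hoeffding's inequality to $B_i \mid B_{i+1} \sim \Binomial(B_{i+1},1/2)$: for any integer $N \ge 2m$,
\[
\Pr[B_i \le m/2 \mid B_{i+1} = N]
\le \exp\!\left(-\frac{(N-m)^2}{2N}\right)
\le e^{-m/4},
\]
using that $N \mapsto (N-m)^2/(2N)$ is nondecreasing for $N \ge m$ and equals $m/4$ precisely at $N = 2m$. Taking expectations over $B_{i+1}$ and union bounding over the $\lfloor\log_2 k\rfloor$ stages yields the claimed $\lfloor\log_2 k\rfloor \cdot e^{-m/4}$ failure probability.

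The main thing to check carefully is the deterministic reduction (handling the two caps inside $T_i$ and the floor-doubling inequality); the probabilistic step is a routine Hoeffding tail which happens to match the target constant exactly at the threshold $B_{i+1} = 2m$.
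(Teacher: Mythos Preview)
Your proposal is correct and follows essentially the same approach as the paper's proof: both reduce the failure of Condition~4 to the event $B_i \le m/2$ and $B_{i+1} > 2m$ for some $i \in \{1,\ldots,\lfloor\log_2 k\rfloor\}$ via the floor-doubling inequality, then bound each such event by $e^{-m/4}$ using Hoeffding on $B_i \mid B_{i+1} \sim \Binomial(B_{i+1},1/2)$ and finish with a union bound. The only cosmetic difference is that the paper reduces to $N = 2m$ via stochastic domination before applying Hoeffding, whereas you apply Hoeffding for general $N$ and then use monotonicity of $(N-m)^2/(2N)$; these give the same constant.
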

\begin{proof}
    For Condition 4 to be false, there must exist a stage $i = i_0 $ such that $T_i \ge 2m$ and $T_{i-1}\le \frac{m}{2}$. Fix $i \in \{1, 2, \ldots, \lfloor \log_2k \rfloor\}$. We will upper bound the probability of the event that both $T_i \ge 2m$ and $T_{i-1} \le m/2$ hold.
    
    Recall that $T_i = \min\left\{\left\lfloor k/2^{\lfloor\log_2 k\rfloor -i}\right\rfloor, B_{i+1}\right\}$. For $T_i \ge 2m$ to hold, we must have $B_{i+1} \ge 2m$ and $\left\lfloor k/2^{\lfloor\log_2 k\rfloor -i}\right\rfloor \ge 2m$. It follows that
    \[
        \left\lfloor k/2^{\lfloor\log_2 k\rfloor -(i-1)}\right\rfloor \ge \frac{1}{2}\cdot \left(\left\lfloor k/2^{\lfloor\log_2 k\rfloor -i}\right\rfloor - 1\right) \ge \frac{1}{2}\cdot 2m = m.
    \]
    Then, we must have $B_i \le m / 2$; otherwise, we would have
    \[
        T_{i-1}
    =   \min\left\{\left\lfloor k/2^{\lfloor\log_2 k\rfloor -(i-1)}\right\rfloor, B_i\right\}
    \ge \min\{m, m/2 + 1\}
    =   m/2+1,
    \]
    which contradicts $T_{i-1} \le m/2$.

    So far, we have proved that $T_{i-1} \le m / 2$ and $T_i > 2m$ together imply $B_i \le m/2$ and $B_{i+1} \ge 2m$. Recall that, conditioning on the value of $B_{i+1}$, $B_i$ follows the binomial distribution $\Binomial(B_{i+1}, 1/2)$. Therefore, we have
    \begin{align*}
        \pr{}{T_{i-1} \le m/2 \wedge T_i \ge 2m}
    &\le\pr{}{B_i \le m/2 \wedge B_{i + 1} \ge 2m}\\
    &\le\pr{X \sim \Binomial(2m, 1/2)}{X \le m/2} \\
    &\le\exp\left(-\frac{2\cdot (m/2)^2}{2m}\right) 
    =   e^{-m/4}. \tag{Chernoff bound}
    \end{align*}
    Applying the union bound over $i = 1, 2, \ldots, \lfloor \log_2 k\rfloor$ proves the lemma.
\end{proof}

\begin{lemma}\label{lemma:exact-correctness}
    When $\goodevent$ happens, \Cref{alg:exact-selection} outputs the $k$-th largest element among $s_1, s_2, \ldots, s_n$.
\end{lemma}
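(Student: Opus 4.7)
The plan is to prove by induction on the stage index $i = i_0, i_0+1, \ldots, \lfloor \log_2 k \rfloor$ the following invariant: at the end of stage~$i$ (immediately after the call to $M.\text{move}$ on Line~\ref{line:inductive-step-shift}, or Line~\ref{line:base-case-shift} in the base case), $M[3m/2]$ equals the $T_i$-th largest element of $s_{1:B_{i+1}}$, and a window of radius at least $m$ around the center holds the corresponding consecutive real entries in strictly decreasing order. Since $B_{\lfloor\log_2 k\rfloor+1}=n$ and $T_{\lfloor\log_2 k\rfloor}=\min\{n,k\}=k$, the invariant at the final stage yields exactly the statement of the lemma.

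For the base case (stage $i_0$), Line~\ref{line:base-case} writes the top $\min\{B_{i_0+1},3m-1\}$ elements of $s_{1:B_{i_0+1}}$ into positions $2,\ldots,\min\{B_{i_0+1},3m-1\}+1$ in decreasing order. By Condition~4 of $\goodevent$, $T_{i_0}\le 2m$, so the $T_{i_0}$-th largest element is placed at position $T_{i_0}+1$. The subsequent move by $3m/2-T_{i_0}-1$, whose magnitude is at most $m$ by the definition of $i_0$ and Condition~4, translates this element into $M[3m/2]$ while preserving the relative order of the surrounding real entries, and the flanking $+\infty$ and $-\infty$ markers guarantee that every position in the window of radius $m$ holds a defined value.

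For the inductive step at stage $i$, a direct inspection of the routines shows: (a) an insert of $q>M[3m/2]$ modifies only positions strictly below $3m/2$ (and symmetrically for $q<M[3m/2]$), so $M[3m/2]$ is unchanged throughout the inner loop of the stage; (b) each successful insert preserves the property that real entries form a block of consecutive ranks in the stream processed so far. Consequently, just before the move on Line~\ref{line:inductive-step-shift}, $M[3m/2]$ still equals the previous stage's center, and $\rank$ equals its rank in $s_{1:B_{i+1}}$. By Condition~1 of $\goodevent$, $|\rank-T_i|\le m/2$, so the shift of $\rank-T_i$ would reposition the $T_i$-th largest element of $s_{1:B_{i+1}}$ into $M[3m/2]$---provided that the source position of the shift is a real entry rather than a $\bot$ placeholder left behind by the previous move.

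The main obstacle, and the heart of the induction, is verifying that this source position is real. The previous stage's move of magnitude at most $m/2$ can leave at most $m/2$ leading (or trailing) $\bot$ entries. Conditions~2 and~3 of $\goodevent$ supply at least $m/2$ new elements falling into the top-half window (ranks in $[T_{i-1}-m+1,T_{i-1}]$ in $s_{1:B_i}$) and the bottom-half window (ranks in $[T_{i-1},T_{i-1}+m]$), respectively; each such insert shifts one stored entry outward, progressively overwriting the $\bot$ placeholders. The edge cases $T_{i-1}\le m$ and $T_{i-1}+m>B_i$ in these conditions correspond exactly to the situations where the array already extends to $+\infty$ or $-\infty$ on the relevant side, so no replenishment is needed. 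A careful case analysis of the positions visited by the insert function, combined with these replenishment guarantees, then shows that after all stage-$i$ inserts the window of radius $m/2+1$ around the center contains only real entries with consecutive ranks, so the subsequent move of magnitude at most $m/2$ lands on a valid entry and the invariant propagates to stage $i$.
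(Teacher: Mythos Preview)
Your overall architecture matches the paper's proof: an induction over stages, with Condition~4 handling the base case, Condition~1 bounding the displacement $|\rank-T_i|\le m/2$, and Conditions~2 and~3 replenishing the array so that the shift lands on an actual element. You also correctly observe that $M[3m/2]$ is never overwritten by \texttt{insert}.

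There is, however, a genuine gap in the invariant you chose. You assert that ``a window of radius at least $m$ around the center holds the corresponding consecutive real entries,'' but this is neither established in the base case nor re-established in the inductive step. In the base case, a left shift on Line~\ref{line:base-case-shift} can introduce up to $T_{i_0}+1-3m/2\le m/2+1$ copies of $\bot$ on the right, so position $5m/2$ may already be $\bot$; your appeal to ``flanking $\pm\infty$ markers'' does not cover this. In the inductive step you only argue that a window of radius $m/2+1$ is real before the move; after a move of magnitude up to $m/2$ this yields only radius $\approx 1$, not radius $m$, so the induction does not close.

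The paper fixes this by using a different second invariant: \emph{at most $m/2$ copies of $\bot$ on each side} after each move. From that hypothesis, Conditions~2 and~3 are used to show something stronger than you claim, namely that \emph{the entire array} contains no $\bot$ at all just before the move on Line~\ref{line:inductive-step-shift} (each side is either completely filled with finite entries or padded with $\pm\infty$). Then a move of magnitude at most $m/2$ trivially re-introduces at most $m/2$ copies of $\bot$, closing the loop. Replacing your radius-$m$ invariant with this ``$\le m/2$ copies of $\bot$'' invariant, and strengthening your replenishment conclusion from ``radius $m/2+1$'' to ``no $\bot$ anywhere,'' makes the argument go through.
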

\begin{proof}
  We first note that the following invariants hold for the array $M$ throughout \Cref{alg:exact-selection}:
    \begin{enumerate}
        \item[(1)] Finite entries (i.e., everything except $\pm\infty$ and $\bot$) form a contiguous subsequence in $M$ (denoted by $M[l], M[l + 1],  \ldots, M[r]$).
        \item[(2)] Finite entries are in decreasing order, and contain every element between $M[l]$ and $M[r]$ that has appeared in the stream so far.
        \item[(3)] On either side (i.e., among $M[1], M[2], \ldots, M[l-1]$ and $M[r+1], M[r+2], \ldots, M[3m]$), there might be multiple copies of $\bot$, $+\infty$, or $-\infty$, but only one of the three on each side.
        \item[(4)] There are $+\infty$ on the left side only if $M[l]$ is the largest element so far.
        \item[(5)] There are $-\infty$ on the right side only if $M[r]$ is the smallest element so far.
    \end{enumerate}

    In the following, we will prove by induction that, under event $\goodevent$, the following two additional invariants hold at the end of each stage $i \ge i_0 = \min\{i \in \{0, 1, \ldots, \lfloor \log_2k\rfloor\}: T_i > m/2\}$:
    \begin{itemize}
        \item $M[3m/2]$ holds the $T_i$-th largest element among $s_{1:B_{i+1}}$.
        \item On either side of $M$, if ``$\bot$'' appears, there are at most $m/2$ copies.
    \end{itemize}
    Note that the above implies that, at the end of the last stage $i = \lfloor \log_2 k \rfloor$, $M[3m/2]$ holds the $T_i$-th largest element among the entire stream, where $T_i = \min\{B_{i+1}, \lfloor k / 2^{\lfloor\log_2 k\rfloor - i}\rfloor\} = \min\{n, k\} = k$. Therefore, the algorithm outputs the correct answer.
    
    \paragraph{The base case.} We start with the base case that $i = i_0$. In Line~\ref{line:base-case}, before the start of stage~$i_0+1$, \Cref{alg:exact-selection} stores the largest $\min\{B_{i_0 + 1}, 3m-1\}$ elements in decreasing order in array $M$ (starting from $M[2]$). By Condition~4 in \Cref{def.egood}, we have $T_{i_0} < 2m$, which implies $T_{i_0} + 1 \le 2m \le 3m$. Thus, $M[T_{i_0} + 1]$ holds the $T_{i_0}$-th largest element among $s_{1:B_{i_0+1}}$. Then, in Line~\ref{line:base-case-shift}, the algorithm shifts the array by $3m/2 - T_{i_0}-1$, so that the $T_{i_0}$-th largest element is moved to $M[3m/2]$. This verifies the first invariant.
    
    For the second invariant, we consider the direction of the shifting in Line~\ref{line:base-case-shift}. If the shifting is to the right (i.e., $T_{i_0} + 1 < 3m/2$), since $M[1] = +\infty$, the left side of $M$ would be filled with more copies of $+\infty$. Furthermore, the right side of $M$ either contains finite entries exclusively, or consists of both finite entries and one or more copies of $-\infty$. In either case, the second invariant is satisfied.

    If the shifting is to the left (i.e., $T_{i_0} + 1 > 3m/2$), the left side of $M$ would only contain finite entries.  For the right side, if $M[3m] = -\infty$ before the shifting, the shifting would only introduce more copies of $-\infty$. If $M[3m]$ held a finite element before the shifting, at most $T_{i_0} + 1 - 3m/2 \le 2m - 3m/2 = m/2$ copies of ``$\bot$'' are introduced during the shifting. In either case, we would have the second invariant.
    
    \paragraph{The inductive step.} We prove the inductive step in the following way: First, we apply Conditions 2~and~3 of the good event $\goodevent$ to show that, before the shifting in Line~\ref{line:inductive-step-shift} is conducted, both sides of $M$ are filled with either finite entries exclusively, or both finite entries and copies of $\pm\infty$. In other words, $M$ does not contain the the empty entry ``$\bot$''. Then, we use Condition~1 of $\goodevent$ to show that, before the shifting, the rank of $M[3m/2]$ among $s_{1:B_{i+1}}$ is within $T_i \pm m/2$. Finally, we show that we have both invariants after the shifting is performed.
    \begin{enumerate}
        \item[(1)]  \textbf{The left side either is completely filled or contains $+\infty$.} Consider the state of the left side of $M$ at the beginning of stage~$i$. If it contained one or more $+\infty$ elements, after reading the elements in stage~$i$, it still only contains finite entries and $+\infty$ (but not $\bot$).
    
        If the left side did not have $+\infty$ at the beginning of stage~$i$, by the inductive hypothesis, it might have contained at most $m/2$ copies of $\bot$. Then, $M[m/2+1], M[m/2+2], \ldots, M[3m/2]$ must be consecutive entries among $s_{1:B_i}$, with the $T_{i-1}$-th largest element among $s_{1:B_i}$ at index $3m/2$. Thus, we have $T_{i-1} \ge 3m/2 - (m/2+1) + 1 = m$. By Condition~2 of $\goodevent$ (\Cref{def.egood}), there are at least $m/2$ elements at stage $i$ that are within the range of the $[T_{i-1}-m+1, T_{i-1}]$-th largest element at the end of stage $i-1$, so the left buffer is filled to full. 
    \item[(2)] \textbf{The right side either is completely filled or contains $-\infty$.}  Similarly, consider the state of the right side of $M$ at the beginning of stage~$i$. If it contained $-\infty$ elements, it would still only contain finite elements and $-\infty$ after reading the elements in stage~$i$.
    
    If the right side of $M$ did not contain $-\infty$ at the beginning, by the inductive hypothesis, it might have contained $\le m/2$ copies of ``$\bot$''. Then, $M[3m/2], M[3m/2+1], \ldots, M[5m/2]$ must be consecutive entries among $s_{1:B_i}$, with the $T_{i-1}$-th largest element at index $3m/2$. It follows that $T_{i-1} + m \le B_i$. By Condition~3 of $\goodevent$ (\Cref{def.egood}),  there are at least $m/2$ elements in $s_{B_i+1:B_{i+1}}$ that are between the $T_{i-1}$-th and the $(T_{i-1}+m)$-th largest elements among $s_{1:B_i}$. It follows that, after reading $s_{B_i+1}$ through $s_{B_{i+1}}$, the right side of $M$ is filled with finite elements. 
    \item[(3)] \textbf{The rank of $M[3m/2]$ among $s_{1:B_{i+1}}$ is in $[T_i - m/2, T_i + m/2]$.} By the induction hypothesis, at the beginning of stage~$i$, $M[3m/2]$ holds the $T_{i-1}$-th largest element among $s_{1:B_i}$. By Condition~1 in $\goodevent$ (\Cref{def.egood}), this element has a rank between $T_i - m/2$ and  $T_i + m/2$ (inclusive) among $s_{1:B_{i+1}}$. Therefore, after reading the $B_{i+1} - B_i$ new elements in stage~$i$, the value of the variable $\rank$ is in $[T_i - m/2, T_i + m/2]$. 
    \item[(4)] \textbf{The first invariant holds after shifting.} Now, we verify the invariants at the end of stage~$i$. Before the shifting in Line~\ref{line:inductive-step-shift}, the rank of $M[3m/2]$ among $s_{1:B_{i+1}}$ is tracked by $\rank \in [T_i - m/2, T_i + m/2]$. We claim that the $T_i$-th largest element must be held inside the array $M$. Assuming this claim, after the shifting in Line~\ref{line:inductive-step-shift}, that element will be moved to $M[3m/2]$. This proves the first invariant.
    
    To prove the claim, suppose that $\rank < T_i$; the case that $\rank > T_i$ is almost symmetric. Since $3m/2 + (T_i - \rank) \le 3m/2 + m/2 = 2m$, $3m/2 + (T_i - \rank)$ is a valid index for array $M$. Furthermore, $M[3m/2 + (T_i - \rank)]$ cannot be $\bot$; otherwise the right side of $M$ would contain at least $m + 1 > m/2$ copies of $\bot$. $M[3m/2 + (T_i - \rank)]$ cannot be $-\infty$ either; otherwise, the largest element among $s_{1:B_{i+1}}$ must be stored in $M[r]$ for some $r \in [3m/2, 3m/2 + (T_i - \rank))$. It follows that
    \[
        T_i - \rank
    =   3m/2 + (T_i - \rank) - 3m/2
    >   r - 3m/2
    =   B_{i+1} - \rank
    \ge T_i - \rank,
    \]
    a contradiction. Therefore, $M[3m/2 + (T_i - \rank)]$ must be a finite element, and its rank among $s_{1:B_{i+1}}$ is given by $\rank + (T_i - \rank) = T_i$. This proves the claim that the $T_i$-th largest element must be stored in $M$.

    \item[(5)] \textbf{The second invariant holds after shifting.} To verify the second invariant, recall that, before the shifting in Line~\ref{line:inductive-step-shift}, each side of $M$ either only contains finite elements, or consists of only finite elements and $\pm\infty$. In Line~\ref{line:inductive-step-shift}, the array $M$ gets shifted by at most $|\rank - T_i| \le m/2$. Then, on one of the two sides of $M$, we fill the empty spaces with either $\pm\infty$ or $\bot$. In either case, the second invariant is maintained.
    \end{enumerate}
    This concludes the inductive step and finishes the proof.
\end{proof}

\quantileexact*

\begin{proof}
By the union bound along with \Cref{lemma:exact-cond-1,lemma:exact-cond-23,lemma:exact-cond-4}, event $\goodevent$ happens with probability at least
\[
    1- 12\lfloor\log_2 k \rfloor\cdot \exp\left(-\min\left\{\frac{1}{12},\frac{\ln 2}{2}, \frac{1}{4}\right\}\cdot m\right) - 2\sum_{i=0}^{\lfloor\log_2k\rfloor-1}\exp\left(-\frac{m^2}{32k/2^i}\right). 
\]
The theorem then follows from \Cref{lemma:exact-correctness}.
\end{proof}

\section{From Quantile Estimation to $k$-Secretary}\label{section.3}
In this section, we prove \Cref{prop:reduction} (restated below) by showing that an algorithm for the quantile estimation problem (\Cref{def.quantile_estimation}) can be transformed into a competitive algorithm for $k$-secretary (\Cref{def.ksecretary}), with a mild additive increase in the space usage. Concretely, if the quantile estimation algorithm has an $O(k^{\alpha})$ rank error in expectation, the resulting $k$-secretary algorithm achieves a competitive ratio of $1 - O(k^{\alpha} / k)$.

\reduction*

In \Cref{sec:reduction-weaker}, we prove a weaker version of \Cref{prop:reduction}, in which the memory usage increases by a factor of $O(\log k)$. We then derive the actual version from the weaker reduction in \Cref{sec:reduction-stronger}.

\subsection{A Weaker Reduction}\label{sec:reduction-weaker}
We start by stating the weaker reduction below. Recall the definition of comparison-based algorithms from \Cref{def:comparison-based}.

\begin{proposition}[Weaker version of \Cref{prop:reduction}]\label{prop:reduction-weaker}
    Suppose that, for some $\alpha \in [1/2, 1]$, there is a comparison-based quantile estimation algorithm with memory usage $m$ and an error of $O(k^{\alpha})$ in expectation. Then, there is a $k$-secretary algorithm that uses $O(m \log k)$ memory and achieves a competitive ratio of
    \[
        1 -O\left(\frac{1}{k^{1 - \alpha}}\right).
    \]
\end{proposition}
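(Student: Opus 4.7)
The plan is to reuse Kleinberg's recursive algorithm, replacing his exact median step with the approximate quantile estimator $\A$. Define $\mathcal{B}_k$ on a length-$n$ stream as follows: split at $B = \lfloor n/2\rfloor$; in parallel, run $\mathcal{B}_{\lfloor k/2\rfloor}$ on $s_{1:B}$ and invoke $\A$ on $s_{1:B}$ to produce a threshold $x^*$ approximating the $\lfloor k/2\rfloor$-th largest of $s_{1:B}$; then scan $s_{(B+1):n}$ and greedily accept every element strictly above $x^*$, stopping once $\lceil k/2\rceil$ elements from the second half have been accepted (or the stream ends). The base case $k = O(1)$ is solved by the straightforward algorithm using $O(1)$ memory. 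Because $\A$ and the recursive call process the same prefix simultaneously, the memory recurrence is $M(k) = M(\lfloor k/2\rfloor) + m + O(1)$, giving $M(k) = O(m\log k)$; comparison-based stream access is preserved since $\A$ is comparison-based by hypothesis.

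\paragraph{Inductive structure.} I plan to prove by induction on $k$ that $\mathcal{B}_k$ is $(1 - C/k^{1-\alpha})$-competitive for a sufficiently large constant $C$. Let $x_1 > \cdots > x_k$ denote the top-$k$ elements and $\OPT = \sum_i x_i$, and write $\SOL = \SOL_1 + \SOL_2$ according to which half accepted an element. Conditioning on the set $s_{1:B}$, the inductive hypothesis yields $\Ex{}{\SOL_1 \mid s_{1:B}} \ge (1 - C/(k/2)^{1-\alpha}) \cdot \OPT_1$, where $\OPT_1$ is the sum of the top-$\lfloor k/2\rfloor$ elements of $s_{1:B}$. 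For $\SOL_2$, the greedy phase captures every element of $s_{(B+1):n}$ above $x^*$ until the budget $\lceil k/2\rceil$ runs out. Following Kleinberg, the binomial concentration of $K_1 \coloneqq |\{i \in [k]: x_i \in s_{1:B}\}|$ around $k/2$ with deviation $O(\sqrt{k})$ shows that summing the top-$\lfloor k/2\rfloor$ of $s_{1:B}$ and the top-$\lceil k/2\rceil$ of $s_{(B+1):n}$ would, in the ideal exact-quantile case, recover $\OPT$ up to an $O(\OPT/\sqrt{k})$ loss, which is subsumed by $O(\OPT/k^{1-\alpha})$ for $\alpha \ge 1/2$.

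\paragraph{New ingredient and main obstacle.} The new element beyond Kleinberg is bounding the extra loss caused by $\A$'s inexactness. Let $R \coloneqq \rank_{1:n}(x^*)$. I aim to show $\Ex{}{|R - k|} = O(k^\alpha)$: since $\A$ is comparison-based and $s_{1:B}$ is marginally a uniform size-$B$ subset of $s_{1:n}$, the rank of $x^*$ in $s_{1:B}$ (which by hypothesis concentrates around $\lfloor k/2\rfloor$ with expected error $O(k^\alpha)$) lifts to the rank in $s_{1:n}$ via \Cref{lemma:subset-rank-concentration}, incurring only $O(\sqrt{k}) \le O(k^\alpha)$ extra deviation. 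Plugging this bound into Kleinberg's per-element analysis of the greedy phase---each top-$k$ element in the second half is missed only if it lies below $x^*$ or the $\lceil k/2\rceil$ budget runs out before it arrives, events controlled respectively by $(R-k)_+$ and by $(k-R)_+ + O(\sqrt{k})$ from binomial concentration on the number of super-$x^*$ elements in $s_{(B+1):n}$---shows that the expected shortfall of $\SOL_2$ relative to the top-$\lceil k/2\rceil$ sum in $s_{(B+1):n}$ is $O(\OPT/k^{1-\alpha})$. The principal obstacle is disentangling the joint randomness of $\A$, the random split, and the ordering within the second half so that these concentration bounds are valid; the resolution mirrors the unraveling argument sketched in \Cref{sec:overview-details}, realizing first the split size, then the relative orderings within each half, and finally the element identities consistent with those ranks. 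Once this untangling is in place, combining with the inductive bound on $\SOL_1$ and choosing $C$ large enough closes the induction.
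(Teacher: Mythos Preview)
Your proposal is correct and follows essentially the same route as the paper: instantiate Kleinberg's recursion with the approximate quantile estimator $\A$, split $\SOL=\SOL_1+\SOL_2$, handle $\SOL_1$ via the inductive hypothesis, and control $\SOL_2$ by showing that the comparison-based property of $\A$ lets you lift $\rank_{1:B}(x^*)$ to $\rank_{1:n}(x^*)$ with only $O(\sqrt{k})$ extra deviation on top of the $O(k^\alpha)$ estimator error. Two small remarks: you have interchanged the roles of $(R-k)_+$ and $(k-R)_+$ (the ``below $x^*$'' loss is governed by $(k-R)_+$, the ``budget exhausted'' loss by $(R-k)_+ + O(\sqrt{k})$), and where you invoke a black-box $\Ex{}{|R-k|}$ bound plus Kleinberg's per-element argument, the paper instead conditions directly on $(l,Z)=(\rank_{1:B}(x^*),\,|s_{(B+1):n}\cap(x^*,\infty)|)$ and does an explicit four-case computation to obtain the clean lower bound $\min\{Z,k'\}/\max\{l+Z-1,k\}\cdot\OPT$ on the conditional contribution---but your version leads to the same $1-O(1/k^{1-\alpha})$ conclusion.
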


Compared to \Cref{prop:reduction}, the only change in the proposition above is that the memory bound gets relaxed from $m + O(1)$ to $O(m\log k)$.

We prove \Cref{prop:reduction-weaker} by constructing a $k$-secretary algorithm (\Cref{algo:k-secretary}) that applies the quantile estimator as a black box. Note that this algorithm is almost identical to the algorithm of~\cite{Kleinberg05}, except that the straightforward algorithm for finding the $k$-th largest element---which requires $\Omega(k)$ memory---is replaced by algorithm $\A$, which uses much less memory but only finds an approximately $k$-th largest element.

Another minor change is that we take the first $B \coloneqq \lfloor n/2\rfloor$ elements of the sequence as ``the first half''. In contrast, Kleinberg's algorithm draws $B$ from $\Binomial(n, 1/2)$ randomly, so that $\{s_1, s_2, \ldots, s_B\}$ would be uniformly distributed among all subsets of $\{s_1, s_2, \ldots, s_B\}$. This makes the subsequent concentration argument a bit easier. In our proof, we decided against choosing $B$ randomly because, in that case, the value of $B$ factors into the realization of $\rank_{1:B}(x^*)$ (via the black box quantile estimator). Then, in our analysis, the conditioning on the value of $\rank_{1:B}(x^*)$ would then bias the distribution of $B$ and renders the conditional distribution of $\{s_1, s_2, \ldots, s_B\}$ non-uniform.

\begin{algorithm2e}
    \caption{$\choosetopk(n, k, s)$}
    \label{algo:k-secretary}
    \KwIn{String length $n$, target rank $k$, access to random-order sequence $s = (s_1, s_2, \ldots, s_n)$. Quantile estimation algorithm $\A$.}
    \KwOut{At most $k$ accepted elements in $s$.}
    \lIf{$k = 1$}
    {Run the $(1/e)$-competitive algorithm for $1$-secretary}
    $B \leftarrow \lfloor n/2\rfloor$\;
    Run $\choosetopk(B, \lfloor k/2\rfloor, s_{1:B})$ on the first $B$ elements\label{algoksec.recursive}\;
    In parallel with the line above, run $x^* \leftarrow \A\left(B, \left\lfloor k/2\right\rfloor, s_{1:B}\right)$\label{algoksec.findkth}\;
    $\counter \gets 0$\;
    \For{$i = B+1, B+2, \ldots, n$}{
        Read $s_i$\;
        \If{$s_i > x^*$ and $\counter < \lfloor k / 2\rfloor$} {
            \textbf{Accept} $s_i$\;
            $\counter \gets \counter + 1$\;
        }
    }
\end{algorithm2e}

We will analyze $\choosetopk$ (\Cref{algo:k-secretary}) using an inductive proof. The inductive step in the analysis is stated in the following lemma.

\begin{lemma}\label{lemma:k-secretary-inductive}
    Suppose that the following are true for some constant $C_3 \ge 1$: (1) $n \ge k \ge 10$; (2) \Cref{algo:k-secretary} leads to a competitive ratio of at least 
    \[
        1 - \frac{200C_3}{(k')^{1 - \alpha}}
    \]
    in the recursive call $\choosetopk(B, k', s_{1:B})$ on Line~\ref{algoksec.recursive} for $k' = \lfloor k/2 \rfloor$ and $B = \lfloor n/2 \rfloor$; (3) Line~\ref{algoksec.findkth} returns an element $x^*$ that satisfies 
    \[
    \Ex{}{\left|\rank_{1:B}(x^*) - k'\right|} \le C_3\cdot (k')^{\alpha}.
    \]
    Then, on the instance with parameters $n$ and $k$, \Cref{algo:k-secretary} has a competitive ratio of
    \[
    1 - \frac{200C_3}{k^{1-\alpha}}.
    \]
\end{lemma}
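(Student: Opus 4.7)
The plan is to decompose the algorithm's expected value as $\mathbb{E}[\text{SOL}] = \mathbb{E}[\text{SOL}_1] + \mathbb{E}[\text{SOL}_2]$, where $\text{SOL}_1$ is the sum collected by the recursive call on the first half and $\text{SOL}_2$ is the sum greedily collected in the second half above the threshold $x^*$. Write $B = \lfloor n/2\rfloor$, $k' = \lfloor k/2\rfloor$, $S_1 = \{s_1,\ldots,s_B\}$, and let $\mathsf{OPT}_{1,k'}$ denote the sum of the top-$k'$ elements of $S_1$. Since the recursive call on Line~\ref{algoksec.recursive} is a $k'$-secretary instance on the uniformly random stream $s_{1:B}$, conditioning on $S_1$ and invoking the inductive hypothesis gives $\mathbb{E}[\text{SOL}_1 \mid S_1] \ge (1 - \epsilon_1)\cdot\mathsf{OPT}_{1,k'}$ with $\epsilon_1 = 200C_3/(k')^{1-\alpha}$.

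For the second half, I would condition on the pair $(x^*, S_2)$ where $S_2 = \{s_{B+1},\ldots,s_n\}$. By the uniformly random order on $S_2$, each element of $\{y \in S_2 : y > x^*\}$ is equally likely to be among the first $k'$ such elements to arrive; hence
\[
    \mathbb{E}[\text{SOL}_2 \mid x^*, S_2] = \min\!\bigl(1,\, k'/N_2\bigr) \cdot \sum_{y \in S_2,\, y > x^*} y, \quad N_2 \coloneqq |\{y \in S_2 : y > x^*\}|.
\]
To relate this back to $\mathsf{OPT}$, I would compare against an \emph{idealized} execution in which $x^*$ is replaced by the true $k'$-th largest element $\tau$ of $S_1$. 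In the ideal case, the union of the top-$k'$ of $S_1$ with the elements of $S_2$ above $\tau$ equals $\{x_1,\ldots,x_{R_{\mathrm{ideal}}}\}$ where $R_{\mathrm{ideal}} \coloneqq \rank_{1:n}(\tau)$; by hypergeometric concentration $\mathbb{E}[|R_{\mathrm{ideal}} - k|] = O(\sqrt{k})$, so the ideal execution captures $\mathsf{OPT}$ up to an $O(\sqrt{k})\cdot x_k$ loss.

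The gap between the actual and idealized executions is controlled by the estimator's error. By assumption $\mathbb{E}[|\rank_{1:B}(x^*) - k'|] \le C_3(k')^\alpha$, and the same concentration converts this into $\mathbb{E}[|\rank_{1:n}(x^*) - k|] = O(C_3 k^\alpha)$ using $\alpha \ge 1/2$. Each unit of deviation near the threshold rank $k$ shifts at most one element of value $\le x_k$ into or out of the accepted set, contributing a further $O(C_3 k^\alpha)\cdot x_k$ loss. Combining these with the recursive loss $\epsilon_1\cdot\mathbb{E}[\mathsf{OPT}_{1,k'}] \le \epsilon_1\cdot\mathsf{OPT}$, and using $\mathsf{OPT}\ge k\cdot x_k$ together with $(k')^{1-\alpha} \ge k^{1-\alpha}/2^{1-\alpha}$, the total loss fits inside $(200C_3/k^{1-\alpha})\cdot\mathsf{OPT}$ after chasing constants.

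The main technical obstacle is the factor $\min(1, k'/N_2)$ in the second-half expression: when $N_2 \gg k'$, the factor $k'/N_2$ strictly below $1$ discounts the top mass in the second half, and a naive bound does not cleanly cancel against $\mathsf{OPT}$. I would resolve this by case-splitting on $\{N_2 \le 2k'\}$ and its complement; concentration of $N_2$ around $R_{\mathrm{ideal}}\cdot(n-B)/n \approx k/2$, combined with the estimator's rank bound, shows that the bad event has probability $O(k^{\alpha-1})$, and its contribution to the loss is absorbed crudely using $\text{SOL}_2 \le k' \cdot x_{R_\mathrm{ideal}}$ into the overall loss budget.
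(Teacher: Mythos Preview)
Your decomposition $\SOL=\SOL_1+\SOL_2$ and the conditioning variables you pick match the paper's, but two of the steps you lean on do not go through as written.

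First, the claim that ``each unit of deviation near the threshold rank $k$ shifts at most one element of value $\le x_k$'' is false in the direction you need. If the global rank of the threshold is $r<k$, the elements you fail to see are $x_{r+1},\ldots,x_k$, each of value \emph{at least} $x_k$; you cannot upper-bound that loss by $(k-r)\,x_k$, and invoking $\OPT\ge k\,x_k$ afterwards points the inequality the wrong way. The correct inequality here is the decreasing-average bound $\sum_{i=r+1}^k x_i\le \frac{k-r}{k}\OPT$, which is what lets you convert a rank deviation of size $d$ into a loss of $\frac{d}{k}\OPT$. The paper bakes this in by writing $\SOL_1\ge \CR_{k'}\cdot\frac{1}{k}(\sum_i x_i)(\sum_i p_i)$ and, for $\SOL_2$, by showing the conditional contribution is at least $\frac{\min\{Z,k'\}}{\max\{l+Z-1,k\}}\cdot\OPT$ (their $Z$ is your $N_2$); both moves replace per-element value bounds by fractions of $\OPT$.

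Second, the case split at $N_2=2k'$ does not handle the discount $\min(1,k'/N_2)$. On the ``good'' event $\{N_2\le 2k'\}$ that factor is only guaranteed to be $\ge 1/2$, which throws away a constant factor and cannot close the induction with the stated constant $200$. What you actually need is to control the continuous loss $1-\min(1,k'/N_2)=(N_2-k')^+/N_2$, whose expectation is $O(k^{\alpha}/k')$ via $\Ex{}{|N_2-l|}=O(\sqrt{k})$ and $\Ex{}{|l-k'|}\le C_3(k')^{\alpha}$. The paper does exactly this: after deriving the uniform lower bound $\frac{\min\{Z,k'\}}{\max\{l+Z-1,k\}}\cdot\OPT$ by a four-case analysis on $\{Z\lessgtr k'\}\times\{l+Z-1\lessgtr k\}$, it algebraically lower-bounds that ratio by $\tfrac{k'}{k}-\tfrac{2|Z-l|+3|l-k'|+2}{k}$ and then takes expectations. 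Your idealized-versus-actual comparison can be made to work, but only after you replace the $x_k$-valued accounting by $\OPT/k$-valued accounting and track $(N_2-k')^+$ rather than merely thresholding at $2k'$.
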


We first show how the lemma implies \Cref{prop:reduction-weaker}.

\begin{proof}[Proof of \Cref{prop:reduction-weaker} assuming \Cref{lemma:k-secretary-inductive}]
    On a $k$-secretary instance, \Cref{algo:k-secretary} involves $O(\log k)$ levels of recursion, since each recursive call shrinks the parameter $k$ by a factor of $2$. Furthermore, each recursive call of $\choosetopk$ uses memory $m$ to call the quantile estimation algorithm $\A$ and an additional $O(1)$ words for storing the remaining variables. Therefore, the algorithm uses $O(m\log k)$ memory in total.
    
    Regarding the competitive ratio, when $k \le 10$, since any algorithm is trivially $0$-competitive, the competitive ratio is indeed lower bounded by
    \[
        0
    \ge 1 - \frac{\sqrt{10}}{\sqrt{k}}
    \ge 1 -200C_3\cdot \frac{1}{k^{1 - \alpha}}.
    \]
    The proposition then follows from \Cref{lemma:k-secretary-inductive} and an induction on $k$.
\end{proof}

Now we turn to the proof of \Cref{lemma:k-secretary-inductive}.

\begin{proof}[Proof of \Cref{lemma:k-secretary-inductive}]
We analyze the execution of $\choosetopk(n, k, s_{1:n})$ on the random-order sequence $s_{1:n}$. Let $x_1 > x_2 > \cdots > x_n$ denote the elements of $s_{1:n}$ when sorted in descending order. Let $\SOL$ denote the expected sum of the elements accepted by \Cref{algo:k-secretary}, over the randomness in both the algorithm and the random ordering. We can decompose $\SOL$ into two parts, $\SOL_1$ and $\SOL_2$, defined as the expected sums of the accepted elements in the first and the second halves, respectively:
\begin{align*}
        \SOL
&=      \SOL_1 + \SOL_2\\
&\ge    \CR_{k'} \cdot\sum_{i=1}^k x_i\cdot \pr{}{\text{$x_i$ is among the $k'$ largest elements in the first half}}\\
&\quad + \sum_{i=1}^k x_i\cdot \pr{}{\text{$x_i$ is among the second half and accepted}},
\end{align*}
where $k' = \lfloor k/2\rfloor$, and $\CR_{k'} \coloneqq 1 - 200C_3 / (k')^{1 - \alpha}$ denotes the lower bound on the competitive ratio of the algorithm on an instance of $k'$-secretary.

In the rest of the proof, we will show that both $\SOL_1$ and $\SOL_2$ are lower bounded by $\OPT \cdot (1/2 - O(1/k^{1 - \alpha}))$, so the entire algorithm has a competitive ratio of $1 - O(1/k^{1-\alpha})$.

\paragraph{Lower bound the first part.} Define binary random variables $X_1, X_2, \ldots, X_n$ such that $X_i = 1$ if and only if $x_i$ is in the first half $s_{1:B}$. Note that $X_1$ through $X_n$ can be viewed as being drawn without replacement from the size-$n$ population $(0, 0, \ldots, 0, 1, 1, \ldots, 1)$ with $B$ copies of $1$ and $n - B$ copies of $0$.

Then, the event ``$x_i$ is among the $k'$ largest elements in the first half'' can be equivalently written as
\[
    X_i = 1 \wedge X_1 + X_2 + \cdots + X_i \le k'.
\]
Let $p_i \coloneqq \pr{}{X_i = 1 \wedge X_1 + X_2 + \cdots + X_i \le k'}$ denote the probability of the event above over the randomness in $X_1, \ldots, X_n$. Note that $p_1 \ge p_2 \ge \cdots \ge p_n$. The lower bound on $\SOL_1$ can then be simplified into
\[
    \SOL_1
\ge \CR_{k'}\cdot\sum_{i=1}^{k}(x_i\cdot p_i)
\ge \CR_{k'}\cdot\frac{1}{k}\cdot\left(\sum_{i=1}^{k}x_i\right)\cdot\left(\sum_{i=1}^{k}p_i\right)
=   \OPT\cdot \CR_{k'}\cdot\frac{1}{k}\sum_{i=1}^{k}p_i,
\]
where $\OPT \coloneqq \sum_{i=1}^{k}x_i$ is the benchmark with which the algorithm competes. The second step above holds since both $x$ and $p$ are non-negative and monotone non-increasing.

It remains to lower bound the term $\sum_{i=1}^{k}p_i$. Note that for any realization of $X_1, \ldots, X_n$, we have
\[
    \sum_{i=1}^k \1{X_i = 1 \wedge X_1 + X_2 + \cdots + X_i \le k'} = \min\left\{\sum_{i=1}^k X_i, k'\right\},
\]
which follows from a case analysis on whether $\sum_{i=1}^k X_i \ge k'$. Taking an expectation on both sides gives
\[
    \sum_{i=1}^k p_i
=   \Ex{X}{\min\left\{\sum_{i=1}^{k}X_i, k'\right\}}.
\]

The right-hand side above can be further lower bounded as follows:
\begin{align*}
        \Ex{X}{\min\left\{\sum_{i=1}^k X_i, k'\right\}}
&\ge    \Ex{X}{\sum_{i=1}^k X_i} - \Ex{X}{\left|\sum_{i=1}^k X_i - k'\right|} \tag{$\min\{a, b\} \ge a - |a-b|$}\\
&\ge    k\cdot \frac{B}{n} - \sqrt{\Ex{X}{\left(\sum_{i=1}^kX_i - k'\right)^2}}. \tag{Jensen's Inequality}
\end{align*}
To control the remaining expectation in the above, we note that the function $x \mapsto (x - k')^2$ is convex, so \Cref{lemma.cite} implies that
\[
    \Ex{X}{\left(\sum_{i=1}^kX_i - k'\right)^2}
\le \Ex{Y}{\left(\sum_{i=1}^kY_i - k'\right)^2},
\]
where $Y_1, Y_2, \ldots, Y_k$ are sampled \emph{with} replacement from the same population (with $B = \lfloor n/2\rfloor$ copies of $1$ and $n - B$ copies of $0$). Equivalently, $Y_i$s are independent samples from $\Bern(B / n)$. It follows that
\[
    \Ex{Y}{\left(\sum_{i=1}^kY_i - k'\right)^2}
=   \Var{Y}{\sum_{i=1}^kY_i} + \left(\Ex{Y}{\sum_{i=1}^kY_i} - k'\right)^2
=   k\cdot\frac{B}{n}\cdot\left(1 - \frac{B}{n}\right) + \left(k\cdot \frac{B}{n} - k'\right)^2.
\]
Note that $(B/n)\cdot (1 - B/n) \le 1/4$. Furthermore, it can be verified that, for all integers $n \ge k \ge 1$,
\[
    \left(k\cdot \frac{B}{n} - k'\right)^2 \le \frac{1}{4}.
\]
Plugging the above back to the lower bound on $\Ex{X}{\min\left\{\sum_{i=1}^{k}X_i, k'\right\}}$ gives
\[
    \Ex{X}{\min\left\{\sum_{i=1}^{k}X_i, k'\right\}}
\ge k\cdot\frac{B}{n} - \sqrt{\frac{k+1}{4}}.
\]
It follows that
\[
    \SOL_1
\ge \OPT\cdot \CR_{k'} \cdot \frac{1}{k}\sum_{i=1}^{k}p_i
\ge \OPT\cdot\CR_{k'}\cdot \left(\frac{B}{n} - \sqrt{\frac{k+1}{4k^2}}\right).
\]
Plugging $B = \lfloor n / 2\rfloor \ge n / 2 - 1/2$ and $\sqrt{\frac{k+1}{4k^2}} \le \sqrt{\frac{4k}{4k^2}} = 1 / \sqrt{k}$ into the above gives
\begin{equation}\label{eq:SOL-1-bound}
    \SOL_1
\ge \OPT\cdot\CR_{k'}\cdot \left(\frac{1}{2} - \frac{1}{2n} - \frac{1}{\sqrt{k}}\right)
\ge \OPT\cdot\CR_{k'}\cdot \left(\frac{1}{2} - \frac{2}{\sqrt{k}}\right).
\end{equation}

\paragraph{The second part after conditioning.} Let $l\coloneqq\rank_{1:B} (x^*)$ denote the actual rank of $x^*$---the output of the quantile estimator on Line~\ref{algoksec.findkth}---among the first half $s_{1:B}$. Let $Z$ denote the number of elements in the second half $s_{(B+1):n}$ that are larger than $x^*$. Note that conditioning on the values of $l$ and $Z$, we have $\rank_{1:n}(x^*) = l + Z$. Equivalently, $x^*$ is given by $x_{l+Z}$.

In the following, we condition on the realization of $(l, Z)$ and examine the contribution to $\SOL_2$, namely, the conditional expectation
\[
    \sum_{i=1}^{k}x_i\cdot\pr{}{x_i\text{ is among the second half and accepted} \mid l, Z}.
\]
Later in the proof, we will take an expectation over the randomness in $(l, Z)$. 

Specifically, we consider four cases, depending on whether $Z \le k'$ and $l + Z - 1 \ge k$ hold:
\begin{enumerate}
    \item[\textbf{Case 1.}] $Z \le k'$. In this case, at most $k' = \lfloor k/2\rfloor$ elements among $s_{B+1}, s_{B+2}, \ldots, s_n$ exceed the threshold $x^*$. Then, by \Cref{algo:k-secretary}, all those elements are accepted. In other words, an element $x_i$ counts towards $\SOL_2$ as long as: (1) $i \le k$; (2) $i \le l + Z - 1$, so that $x_i > x_{l+Z} = x^*$. In the following, we consider two sub-cases, depending on whether $k$ or $l + Z - 1$ is larger.

    \begin{enumerate}
        \item[\textbf{Case 1a.}] $Z \le k'$ and $l + Z - 1 \ge k$. In this case, each element $x_i$ contributes to $\SOL_2$ as long as $x_i$ is among the second half of the sequence. Note that, given the values of $l$ and $Z$, it holds that $\rank_{1:B}(x_{l+Z}) = \rank_{1:B}(x^*) = l$, which implies $|\{s_1, s_2, \ldots, s_B\} \cap \{x_1, x_2, \ldots, x_{l+Z-1}\}| = l - 1$ and $|s_{(B+1):n} \cap \{x_1, x_2, \ldots, x_{l+Z-1}\}| = Z$. Furthermore, conditioning on the realization of $(l, Z)$, $s_{(B+1):n} \cap \{x_1, x_2, \ldots, x_{l+Z-1}\}$ is still uniformly distributed among all size-$Z$ subsets of $\{x_1, x_2, \ldots, x_{l+Z-1}\}$. In particular, each of $x_1, x_2, \ldots, x_{l+Z-1}$ appears in the second half with a conditional probability of $\frac{Z}{l+Z-1}$. It follows that the conditional contribution to $\SOL_2$ is given by
        \begin{align*}
            &~\sum_{i=1}^{k}x_i\cdot\pr{}{x_i\text{ is among the second half and accepted} \mid l, Z}\\
        =   &~\sum_{i=1}^{k}x_i\cdot\frac{Z}{l+Z-1}
        =   \OPT \cdot \frac{Z}{l+Z-1}.
        \end{align*}
        \item[\textbf{Case 1b.}] $Z \le k'$ and $l + Z -1 < k$. This case is similar to Case~1a, except that only the elements $x_1, x_2, \ldots, x_{l+Z-1}$ can contribute to $\SOL_2$, since an element $x_i$ needs to exceed $x^* = x_{l+Z}$ to be accepted. Since $x_1 > x_2 > \cdots > x_n$ are in descending order, we have the inequality
        \[
            \frac{1}{l+Z-1}\sum_{i=1}^{l+Z-1}x_i
        \ge \frac{1}{k}\sum_{i=1}^{k}x_i
        =   \frac{1}{k}\OPT.
        \]
        It follows that the conditional contribution in this case is at least
        \[
            \sum_{i=1}^{l + Z - 1}x_i\cdot \frac{Z}{l + Z - 1}
        =   Z\cdot\frac{1}{l+Z-1}\sum_{i=1}^{l+Z-1}x_i
        \ge \OPT\cdot \frac{Z}{k}.
        \]
    \end{enumerate}
    \item[\textbf{Case 2.}] $Z > k'$. In this case, the second half of the sequence, $s_{(B+1):n}$, contains $Z > k' = \lfloor k/2 \rfloor$ elements that exceed the threshold $x^*$ (namely, the elements in $\{s_{B+1}, s_{B+2}, \ldots, s_n\} \cap \{x_1, x_2, \ldots, x_{l+Z-1}\}$). Then, $\choosetopk$ would accept the $k'$ such elements that arrive first (out of the $Z$ elements).

    Note that, even after conditioning on the values of $(l, Z)$ as well as the size-$Z$ set
    \[
        \{s_{B+1}, s_{B+2}, \ldots, s_n\} \cap \{x_1, x_2, \ldots, x_{l+Z-1}\},
    \]
    the ordering of these $Z$ elements in the sequence is still uniformly distributed. Therefore, each of the $Z$ elements gets accepted with a probability of $k' / Z$.
    
    Again, to calculate the contribution of this case to $\SOL_2$, we need to separately consider the two cases $l + Z - 1 \ge k$ and $l + Z - 1 < k$.
    \begin{enumerate}
        \item[\textbf{Case 2a.}] $Z > k'$ and $l + Z - 1 \ge k$. In this case, each of $x_1, x_2, \ldots, x_k$ can potentially contribute to $\SOL_2$. For each $i \in [k]$, $x_i$ contributes to $\SOL_2$ if: (1) $x_i$ appears in the second half, which happens with probability $\frac{Z}{l + Z - 1}$, by the same argument as in Case~1a; (2) $x_i$ is among the $k'$ elements in the second half (out of $Z$ in total) that the algorithm accepts. This happens with probability $k'/Z$. Therefore, the conditional contribution to $\SOL_2$ is given by
        \[
            \sum_{i=1}^{k}x_i\cdot\frac{Z}{l + Z - 1}\cdot\frac{k'}{Z}
        =   \frac{k'}{l+Z-1}\cdot\OPT.
        \]
        
        \item[\textbf{Case 2b.}] $Z > k'$ and $l + Z - 1 < k$. Finally, compared to Case~2a, only elements $x_1, x_2, \ldots, x_{l+Z-1}$ can contribute to $\SOL_2$. Applying the inequality
        \[
            \frac{1}{l + Z - 1}\sum_{i=1}^{l+Z-1}x_i
        \ge \frac{1}{k}\sum_{i=1}^{k}x_i
        =   \frac{1}{k}\OPT
        \]
        again shows that the conditional contribution to $\SOL_2$ is at least
        \[
             \sum_{i=1}^{l+Z-1}x_i\cdot\frac{Z}{l + Z - 1}\cdot\frac{k'}{Z}
        =   k'\cdot\frac{1}{l+Z-1}\sum_{i=1}^{l+Z-1}x_i
        \ge \frac{k'}{k}\cdot\OPT.
        \]
    \end{enumerate}  
\end{enumerate}
Summarizing the four cases above, we have that, conditioning on the realization of $(l, Z)$, the conditional contribution to the expectation in $\SOL_2$ is lower bounded by
\[
    \frac{\min\{Z, k'\}}{\max\{l+Z-1, k\}}\cdot \OPT.
\]

\paragraph{Lower bound the second part.} It remains to lower bound the expectation of the ratio
\begin{equation}\label{eq:ratio-of-interest}
    \frac{\min\{Z, k'\}}{\max\{l+Z-1, k\}}
\end{equation}
over the randomness in $l \coloneqq \rank_{1:B}(x^*)$ and $Z \coloneqq |s_{(B+1):n} \cap (-\infty, x^*)|$. Equivalently, $Z$ can be defined as the value such that $l + Z = \rank_{1:n}(x^*)$.

Note that, conditioning on the value of $l$, $l + Z$ is identically distributed as the $l$-th smallest number in a size-$B$ subset of $[n] = \{1, 2, \ldots, n\}$ chosen uniformly at random. Thus, the distribution of $Z \mid l$ can be analyzed using \Cref{lemma:subset-rank-concentration}. Concretely, applying the lemma with parameters
\[
    \tilde n = n, \quad \tilde k = B = \lfloor n/2\rfloor, \quad \tilde i = l
\]
gives
\[
    \Ex{}{\left|l + Z-l\cdot \frac{n}{\lfloor n/2\rfloor}\right|}
\le 2\cdot\frac{n}{\lfloor n/2\rfloor}\cdot \sqrt{l\cdot \frac{n}{\lfloor n/2\rfloor}} + \left(\frac{n}{\lfloor n/2\rfloor}\right)^2
\le 7\sqrt{l} + 5,
\]
where the last step applies $n / \lfloor n/2\rfloor \le 11/5$, which holds for any integer $n \ge 10$.
It follows that
\[
    \Ex{}{\left|Z -l\right|}
\le \Ex{}{\left|l + Z-l\cdot \frac{n}{\lfloor n/2\rfloor}\right|} + \Ex{}{\left|l\cdot \frac{n}{\lfloor n/2\rfloor} - 2l\right|} 
\le 7\sqrt{l} + 5 + \frac{2l}{n-1}
\le 7\sqrt{l} + 7.
\]
The last step above applies $l \le B = \lfloor n/2\rfloor \le n/2$.

To further control the $\sqrt{l}$ term (after taking an expectation over $l$), we note that
\[
    \Ex{}{\sqrt{l}}
\le \sqrt{\Ex{}{l}}
\le \sqrt{k' + \Ex{}{|l - k'|}}
\le \sqrt{k' + C_3\cdot(k')^{\alpha}},
\]
where the last step applies the assumption of the lemma. We can further simplify the above into
\[
    \Ex{}{\sqrt{l}}
\le \sqrt{(C_3 + 1)k'}
\le \sqrt{2C_3k'},
\]
which gives the upper bound
\[
    \Ex{}{|Z - l|}
\le \Ex{}{7\sqrt{l} + 7}
\le 7\cdot\sqrt{2C_3k'} + 7.
\]

For each realization of $(l, Z)$, we can lower bound \Cref{eq:ratio-of-interest} as follows:
\begin{align*}
        &~\frac{\min\{Z, k'\}}{\max\{l+Z-1, k\}}\\
\ge     &~\frac{k' - |k'-Z|}{\max\{l+Z-1, k\}} \tag{$\min\{a, b\} \ge a - |a - b|$}\\
\ge     &~\frac{k' - |Z - l| - |l - k'|}{\max\{l+Z-1, k\}} \tag{triangle inequality}\\
\ge     &~\frac{k'}{\max\{l+Z-1, k\}}-\frac{|Z - l| + |l - k'|}{k} \tag{$\max\{l+Z-1, k\} \ge k$}\\
\ge     &~\frac{k' - |l+Z-1 -k|}{k}-\frac{|Z - l| + |l - k'|}{k}\\
\ge     &~\frac{k'}{k} - \frac{|Z - l| + |2l - 2k'| + |2k' - k - 1|}{k}-\frac{|Z - l| + |l - k'|}{k} \tag{triangle inequality}\\
=       &~\frac{k'}{ k}-\frac{2\cdot|Z - l| + 3\cdot|l - k'|+2}{k}.
\end{align*}
The fourth step above applies the inequality
\[
    \frac{a}{\max\{b, c\}}
\ge \frac{a - |b - c|}{c}
\]
for $a = k'$, $b = l + Z - 1$ and $c = k$. The above, in turn, follows from a case analysis: (1) If $b \le c$, the left-hand side is given by $a / c \ge (a - |b - c|) / c$; (2) If $b > c$, we have $a < c < b$, which gives
\[
    \frac{a}{\max\{b, c\}}
=   \frac{a}{b}
\ge \frac{a - |b - c|}{b - |b - c|}
=   \frac{a - |b - c|}{c}.
\]

Now, we analyze the expectation of the above using the inequalities
\[
    \Ex{}{|Z - l|} \le 7\cdot\sqrt{2C_3k'} + 7
\quad\text{and}\quad
    \Ex{}{|l - k'|} \le C_3 \cdot (k')^{\alpha},
\]
the second of which follows from the assumption of the lemma. We obtain
\begin{align*}
    \Ex{}{\frac{\min\{Z, k'\}}{\max\{l+Z-1, k\}}}
\ge &~\frac{k'-2}{k} - \frac{2}{k}\Ex{}{|Z-l|} - \frac{3}{k}\Ex{}{|l-k'|}\\
\ge &~\frac{k/2 - 5/2}{k} - \frac{14\sqrt{2C_3k'} + 14}{k} - \frac{3C_3(k')^{\alpha}}{k} \tag{$k' \ge (k-1)/2$}\\
\ge &~\frac{1}{2} - \frac{(3C_3 + 14\sqrt{2C_3})(k')^{\alpha} + 16.5}{k} \tag{$\alpha \ge 1/2$}.
\end{align*}
It follows that
\begin{equation}\label{eq:SOL-2-bound}
    \SOL_2
\ge \OPT\cdot \left(\frac{1}{2} - \frac{(3C_3 + 14\sqrt{2C_3})(k')^{\alpha} + 16.5}{k}\right).
\end{equation}

\paragraph{Adding the two terms.} Combining the bounds in \Cref{eq:SOL-1-bound,eq:SOL-2-bound} gives
\[
    \frac{\SOL}{\OPT}
\ge \frac{1}{2}\cdot\left(1 - \frac{200C_3}{(k')^{1-\alpha}}\right)\cdot\left(1 - \frac{4}{\sqrt{k}}\right) + \left(\frac{1}{2} - \frac{(3C_3 + 14\sqrt{2C_3})(k')^{\alpha} + 16.5}{k}\right).
\]
Rearranging and applying the relaxation $(1-a)(1-b) \ge 1 - a - b$ gives
\[
    1 - \frac{\SOL}{\OPT}
\le \frac{100C_3}{(k')^{1-\alpha}} + \frac{2}{\sqrt{k}} + \frac{(3C_3 + 14\sqrt{2C_3})(k')^{\alpha} + 16.5}{k}.
\]
For the first term, we note that $k' = \lfloor k/2\rfloor \ge \frac{5}{11}k$ holds for every integer $k \ge 10$. It follows that
\[
    \frac{100C_3}{(k')^{1-\alpha}}
\le \frac{100C_3 \cdot (11/5)^{1 - \alpha}}{k^{1-\alpha}}
\le \frac{149C_3}{k^{1 - \alpha}}.
\]
The second term, $2/\sqrt{k}$, is easily upper bounded by $2C_3/k^{1 - \alpha}$, since $C_3 \ge 1$ and $\alpha \in [1/2, 1]$. Finally, since $k' \le k/2$, the last term can be relaxed to
\[
    \frac{(3 + 14\sqrt{2})C_3\cdot(k/2)^{\alpha} + 16.5}{k}
\le \frac{(3/\sqrt{2} + 14)C_3\cdot k^{\alpha} + 16.5}{k}
\le \frac{33C_3}{k^{1 - \alpha}}.
\]

Therefore, we conclude that the competitive ratio of the algorithm is lower bounded by
\[
    1 - (149 + 2 + 33)\cdot \frac{C_3}{k^{1 - \alpha}}
\ge 1 - \frac{200C_3}{k^{1 - \alpha}}.
\]
\end{proof}

\subsection{Proof of \Cref{prop:reduction}}\label{sec:reduction-stronger}
Now, we derive \Cref{prop:reduction} from \Cref{prop:reduction-weaker}. The only missing piece is the following simple reduction that transforms a quantile estimation algorithm into one that only reads the second half of the stream, at a moderate cost on the accuracy.

\begin{lemma}\label{lemma:only-read-second-half}
    Suppose that, for some $\alpha \in [1/2, 1]$, there is a comparison-based quantile estimation algorithm $\A$ with memory usage $m(k)$ and an error of $O(k^{\alpha})$ in expectation. Then, there is a comparison-based quantile estimation algorithm $\A'$ with memory usage $m(\lfloor k/2\rfloor)$ and an $O(k^{\alpha})$ expected error. Furthermore, $\A'$ ignores all but the last $\lfloor n/2\rfloor$ elements in the length-$n$ stream.
\end{lemma}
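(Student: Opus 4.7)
The plan is to define $\A'$ to discard the first $\lceil n/2 \rceil$ elements of the stream and then invoke $\A$ on the remaining $b \coloneqq \lfloor n/2 \rfloor$ elements with target rank $k' \coloneqq \lfloor k/2 \rfloor$, returning whatever $\A$ outputs. Clearly $\A'$ inherits the comparison-based property from $\A$, uses $m(k')$ words of memory (modulo $O(1)$ for tracking the stream position), and touches only the last $\lfloor n/2 \rfloor$ elements, as required.

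For the error analysis I will exploit the random-order assumption in two ways. First, conditioned on the (random) size-$b$ subset $S$ of elements appearing in the second half, their order is still a uniformly random permutation of $S$; hence $\A$'s guarantee will yield
\[
    \Ex{}{|l - k'|} = O(k^\alpha),
\]
where $l$ denotes the rank of $x^* \coloneqq \A'(n, k, s)$ within the second half. Second, to translate this into a bound on the overall rank $R \coloneqq \rank_{1:n}(x^*)$, I will invoke that $\A$ is comparison-based: the value of $l$ is a function of the relative ordering of the second half (together with $\A$'s internal randomness), and is therefore independent of $S$. Hence conditioning on $l$ leaves $S$ uniformly distributed over size-$b$ subsets of $\{x_1, \ldots, x_n\}$, and $R$ becomes identically distributed to the $l$-th smallest index in a uniformly random size-$b$ subset of $[n]$ (after identifying each $x_i$ with $i$).

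With that setup I will apply \Cref{lemma:subset-rank-concentration} conditionally on $l$ to obtain $\Ex{}{|R - l \cdot n/b| \mid l} = O(\sqrt{l})$, using $n/b \le 3$. Taking an unconditional expectation via Jensen's inequality and the already-established bound $\Ex{}{l} \le k' + O(k^\alpha) = O(k)$ gives $\Ex{}{|R - l \cdot n/b|} = O(\sqrt{k})$; and $b = \lfloor n/2\rfloor$ together with $n \ge k$ forces $\Ex{}{l} \cdot |n/b - 2| = O(k/n) = O(1)$. A triangle inequality then delivers
\[
    \Ex{}{|R - k|} \le \Ex{}{|R - 2l|} + 2\,\Ex{}{|l - k'|} + |2k' - k| = O(\sqrt{k}) + O(k^\alpha) + 1,
\]
which is $O(k^\alpha)$ since $\alpha \ge 1/2$.

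The one step that deserves real care is the independence claim: that $l$ depends only on the relative ordering of the second half, and hence is independent of $S$. This is precisely the content of the comparison-based hypothesis on $\A$; without it, conditioning on $l$ would bias the distribution of $S$, and the clean application of \Cref{lemma:subset-rank-concentration} would fail. Everything else is routine concentration in the spirit of the paper's earlier proofs.
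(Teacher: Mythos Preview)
Your proposal is correct and follows essentially the same approach as the paper: define $\A'$ by running $\A$ on the last $\lfloor n/2\rfloor$ elements with target $\lfloor k/2\rfloor$, use the comparison-based property to argue that $l$ is independent of the identity of the subset $S$, apply \Cref{lemma:subset-rank-concentration} conditionally on $l$, and conclude via the same triangle-inequality decomposition through $l\cdot n/b$ and $2l$. The paper additionally singles out the trivial $k=1$ case (where $k'=0$) and is slightly more explicit about the additive constant from \Cref{lemma:subset-rank-concentration}, but these are cosmetic.
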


The construction of $\A'$ is very simple: it ignores the first half of the stream, and finds the $(k/2)$-th largest element among the second half using $\A$. We can then translate the error of $\A'$ into the error of $\A$ using the concentration bound from \Cref{lemma:subset-rank-concentration}. This follows from a calculation similar to (but simpler than) that in \Cref{lemma:rank-in-second-half-to-overall-simplified}.

\begin{proof}
    Let $B \coloneqq \lfloor n / 2\rfloor$. We define the alternative algorithm $\A'$ as follows: If $k = 1$, we simply find the largest element using $O(1)$ memory. Otherwise, $\A'$ ignores the first $n - B$ elements and simulates algorithm $\A$ on the last $B$ elements with parameter $k' \coloneqq \lfloor k/2\rfloor$.

    It suffices to analyze $\A'$ in the non-trivial case that $n \ge k \ge 2$. Let $x^*$ denote the output of $\A$ (and thus the output of $\A'$). Let $l \coloneqq \rank_{(n-B+1):n}(x^*)$ denote its rank among the second half.

    By the triangle inequality, the expected error can be upper bounded as follows:
    \[
        \Ex{}{|\rank_{1:n}(x^*) - k|}
    \le \Ex{}{\left|\rank_{1:n}(x^*) - l \cdot \frac{n}{B}\right|} + \Ex{}{\left|l \cdot \frac{n}{B} - k\right|}.
    \]
    In the rest of the proof, we upper bound both terms on the right-hand side by $O(k^{\alpha})$.

    \paragraph{Upper bound the first term.} We first analyze the conditional expectation given the realization of $l$. Since $\A$ is comparison-based, conditioning on the value of $l$, $\{s_{n-B+1}, s_{n-B+2}, \ldots, s_n\}$ is still uniformly distributed among all size-$B$ subsets of $\{s_1, s_2, \ldots, s_n\}$. Applying \Cref{lemma:subset-rank-concentration} with parameters
    \[
        \tilde n = n, \quad \tilde k = B, \quad \tilde i = l
    \]
    gives
    \[
        \Ex{}{\left|\rank_{1:n}(x^*) - l \cdot \frac{n}{B}\right|}
    \le 2\sqrt{l \cdot \frac{n}{B}}\cdot\frac{n}{B} + \frac{n^2}{B^2}.
    \]
    Since $B = \lfloor n / 2\rfloor$ and $n \ge 2$, we have $n / B \le 3$. The above can then be simplified into
    \[
        \Ex{}{\left|\rank_{1:n}(x^*) - l \cdot \frac{n}{B}\right|}
    \le 6\sqrt{3}\cdot\sqrt{l} + 9.
    \]
    It remains to take an expectation over $l$. By the assumption on quantile estimator $\A$, we have
    \begin{equation}\label{eq:l-minus-k-prime-upper-bound}
        \Ex{}{|l - k'|} \le C(k')^{\alpha} \le Ck^{\alpha}
    \end{equation}
    for some universal constant $C$. Applying the triangle inequality gives
    \begin{equation}\label{eq:l-upper-bound}
        \Ex{}{l}
    \le \Ex{}{k' + |l - k'|}
    \le k' + Ck^{\alpha}
    \le (C + 1)k.
    \end{equation}
    By Jensen's inequality, we have
    \[
        \Ex{}{\sqrt{l}}
    \le \sqrt{\Ex{}{l}}
    \le \sqrt{C + 1}\cdot\sqrt{k},
    \]
    and
    \[
        \Ex{}{\left|\rank_{1:n}(x^*) - l \cdot \frac{n}{B}\right|}
    \le 6\sqrt{3(C+1)}\cdot\sqrt{k} + 9
    \le O(\sqrt{k})
    \le O(k^{\alpha}),
    \]
    where the $O(\cdot)$ notation hides a universal constant that only depends on $C$. The last step above holds since $\alpha \ge 1/2$.

    \paragraph{Upper bound the second term.} We can upper bound $\left|l\cdot \frac{n}{B} - k\right|$ by
    \[
        \left|l\cdot \frac{n}{B} - 2l\right| + |2l - 2k'| + |2k' - k|.
    \]

    For the expectation of the first term, note that $B = \lfloor n/2 \rfloor$ and $n \ge 2$ implies $n / B = 2 + O(1/n)$. It follows that
    \[
        \Ex{}{\left|l\cdot \frac{n}{B} - 2l\right|}
    =   \left|\frac{n}{B} - 2\right|\cdot \Ex{}{l}
    =   O(1/n) \cdot \Ex{}{l}
    \le O(1/n) \cdot (C + 1)k
    =   O(1),
    \]
    where the third step applies \Cref{eq:l-upper-bound}, and the last step follows from $k \le n$.
    
    For the second term $|2l - 2k'|$, \Cref{eq:l-minus-k-prime-upper-bound} gives
    \[
        \Ex{}{|2l - 2k'|}
    \le 2Ck^{\alpha}.
    \]
    
    Finally, the last term $|2k' - k|$ is either $0$ or $1$. Therefore, we conclude that
    \[
        \Ex{}{\left|l\cdot \frac{n}{B} - k\right|}
    \le O(1) + 2Ck^{\alpha} + 1
    =   O(k^{\alpha}).
    \]
    Again, the $O(\cdot)$ notation hides a constant factor that only depends on $C$.

    Therefore, we conclude that the expected error of $\A'$ is $O(k^{\alpha})$.
\end{proof}

We end by deriving \Cref{prop:reduction} from \Cref{prop:reduction-weaker} and \Cref{lemma:only-read-second-half}.

\begin{proof}[Proof of \Cref{prop:reduction}]
    Let $\A$ be the quantile estimator with memory usage $m$ and an $O(k^{\alpha})$ expected error. By \Cref{lemma:only-read-second-half}, we have an algorithm $\A'$ with the same memory usage and expected error. By \Cref{prop:reduction-weaker}, using $\A'$ as the quantile estimation algorithm, $\choosetopk$ (\Cref{algo:k-secretary}) has a competitive ratio of $1 - O(1/k^{1-\alpha})$.

    It remains to show that \Cref{algo:k-secretary} can be implemented using $m + O(1)$ memory (instead of $O(m\log k)$ memory). To see this, note that when we call $\choosetopk(n, k, s)$, the algorithm makes a recursive call $\choosetopk(\lfloor n / 2\rfloor, \lfloor k/2\rfloor, s_{1:\lfloor n / 2\rfloor})$, and also runs the quantile estimation algorithm $\A'(\lfloor n / 2\rfloor, \lfloor k/2\rfloor, s_{1:\lfloor n / 2\rfloor})$ in parallel. By the construction of $\A'$ (from \Cref{lemma:only-read-second-half}), $\A'(\lfloor n / 2\rfloor, \lfloor k/2\rfloor, s_{1:\lfloor n / 2\rfloor})$ only accesses the elements 
    \[
        s_{\lfloor n / 2\rfloor - \lfloor n / 4\rfloor + 1}, s_{\lfloor n / 2\rfloor - \lfloor n / 4\rfloor + 2}, \ldots, s_{\lfloor n / 2\rfloor}.
    \]
    More generally, the $i$-th recursive call (where $i \in \{1, 2, \ldots, \lfloor \log_2 k\rfloor\})$) of $\choosetopk$ calls $\A'$ on the first $\lfloor n/2^i\rfloor$ elements. By \Cref{lemma:only-read-second-half}, $\A'$ only reads the elements with indices between
    \[
        \lfloor n / 2^i\rfloor - \lfloor n / 2^{i+1}\rfloor + 1
    \]
    and $\lfloor n / 2^i\rfloor$ (inclusive). It follows that the $\Theta(\log k)$ calls to $\A'$ do not overlap in terms of the (contiguous) subsequence of $s_1, s_2, \ldots, s_n$ that they access. Therefore, we only need to allocate a memory of $m$ for procedure $\A'$. Apart from this, $\choosetopk$ performs $O(\log k)$ levels of recursion, each of which takes $O(1)$ words of memory. If we further expand the recursion into a loop implementation, at any time, we only need to store $O(1)$ different values of $x^*$ (either being computed by algorithm $\A'$, or being used as a threshold for accepting elements). Therefore, we can implement the algorithm with the desired competitive ratio using $m + O(1)$ space.
\end{proof}

\bibliographystyle{alpha}
\bibliography{sample}

\appendix

\section{Technical Lemmas}\label{sec:technical-lemmas}
\lemmasubsample*

\begin{proof}[Proof of \Cref{lemma:subsample}]
    It suffices to show that, for every subset $S' \subseteq S$, $\{s_1, s_2, \ldots, s_B\} = S'$ holds with probability exactly $p^{m}(1-p)^{n-m}$, where $m = |S'|$.

    For $\{s_1, s_2, \ldots, s_B\}$ to be equal to $S'$, the following two conditions must hold: (1) $B = m$ is sampled from $\Binomial(n, p)$; (2) The first $m$ elements of the stream constitute the set $S'$. The former happens with probability $p^{m}(1-p)^{n-m} \cdot \binom{n}{m}$. Since $s$ and $B$ are independent, conditioning on $B = m$, $s_1, s_2, \ldots, s_n$ is still uniformly distributed among all permutations. In particular, $\{s_1, s_2, \ldots, s_m\}$ is uniformly distributed among all size-$m$ subsets of $S$. Therefore, the latter condition holds with probability $1 / \binom{n}{m}$. Therefore, the overall probability is given by
    \[
        \left[p^{m}(1-p)^{n-m} \cdot \binom{n}{m}\right] \cdot\frac{1}{\binom{n}{m}} = p^{m}(1-p)^{n-m}.
    \]
\end{proof}

\subsetrankconcentration*

The proof of \Cref{lemma:subset-rank-concentration} is based on the following lemma, which relates sampling without replacement to sampling with replacement.
\begin{lemma}[\cite{bardenet2015concentration}]\label{lemma.cite}
    Let $\mathcal{X} = (x_1, \ldots,  x_N)$ be a finite population of $N$ real points, $X_1, \ldots , X_n$ denote a random sample without replacement from $\mathcal{X}$ and $Y_1, \ldots , Y_n$ denote a random sample with replacement from $\mathcal{X}$. If $f : \mathbbm{R} \rightarrow \mathbbm{R}$ is continuous and convex, then
    \[
        \Ex{}{f\left(\sum_{i=1}^n X_i\right)}
    \le \Ex{}{f\left(\sum_{i=1}^n Y_i\right)}.
    \]
\end{lemma}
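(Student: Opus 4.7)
The lemma is Hoeffding's classical inequality comparing sums from sampling with and without replacement, and the plan is to prove it by a direct combinatorial averaging argument that iteratively invokes the convexity of $f$. Expanding both sides yields
\[
\Ex{}{f\!\left(\sum_{i=1}^n Y_i\right)} = \frac{1}{N^n} \sum_{(l_1, \ldots, l_n) \in [N]^n} f(x_{l_1} + \cdots + x_{l_n}),
\]
with the analogous expression for $\Ex{}{f\!\left(\sum_{i=1}^n X_i\right)}$ restricting the summation to tuples with pairwise distinct coordinates and using the normalization $(N-n)!/N!$ in place of $1/N^n$. Splitting the first sum by whether the tuple is fully distinct, the claim rearranges into a single combinatorial statement: the average of $f(x_{l_1} + \cdots + x_{l_n})$ over tuples with at least one repeated coordinate is at least as large as the analogous average over tuples with pairwise distinct coordinates.

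For the base case $n = 2$, this averaging inequality reduces to $(N-1)\sum_{i=1}^N f(2 x_i) \ge \sum_{i \ne j} f(x_i + x_j)$, which follows by summing the pairwise Jensen estimate $f(2x_i) + f(2x_j) \ge 2 f(x_i + x_j)$ over all unordered pairs $\{i, j\}$ with $i \ne j$. For general $n$, I would proceed by induction on the number of duplicated coordinates in a tuple. Given a tuple $(l_1, \ldots, l_n)$ with $l_i = l_j$ for some $i \ne j$, holding the remaining $n - 2$ coordinates fixed and applying the $n = 2$ pairwise Jensen step to the positions $(i, j)$ bounds the value at the original tuple by the average over tuples in which the duplicated pair is replaced by coordinates ranging over $[N]^2$. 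Iterating this unzipping until no duplicates remain yields the desired averaging inequality.

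The main obstacle lies in the combinatorial bookkeeping for general $n$: when a tuple has several groups of repeats (for instance, a single value appearing three or more times), one must verify that iterated applications of the pairwise Jensen step chain together with the correct weights, avoiding double-counting across different duplicate groups. An alternative approach invokes Strassen's theorem on convex stochastic ordering: it suffices to construct a martingale coupling between $\sum_{i=1}^n X_i$ and $\sum_{i=1}^n Y_i$ for which $\Ex{}{\sum_{i=1}^n Y_i \;\Big|\; \sum_{i=1}^n X_i} = \sum_{i=1}^n X_i$, from which Jensen's inequality yields the result immediately. Any explicit construction of such a coupling appears to require essentially the same combinatorial content as the unzipping argument, however, so I expect the direct averaging proof above to be the cleanest route.
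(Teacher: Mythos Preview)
The paper does not prove this lemma at all; it is quoted as a known result from \cite{bardenet2015concentration} (the inequality is originally due to Hoeffding, 1963) and used as a black box. So there is no ``paper's proof'' to compare against, and your task was really to supply a self-contained argument.

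Your reformulation and the $n=2$ base case are correct, but the inductive step has a genuine gap. You write that, for a tuple with $l_i=l_j=a$, the pairwise Jensen step ``bounds the value at the original tuple by the average over tuples in which the duplicated pair is replaced by coordinates ranging over $[N]^2$.'' Read literally, this claims
\[
f(2x_a+R)\ \le\ \frac{1}{N^2}\sum_{b,c\in[N]} f(x_b+x_c+R),
\]
which is false for a single $a$ (take $f(t)=t^2$, $N=2$, $x_1=0$, $x_2=1$, $R=0$, $a=2$). Convexity only gives you a comparison between \emph{averages}: $\frac1N\sum_a f(2x_a+R)\ge \frac{1}{N^2}\sum_{b,c} f(x_b+x_c+R)$. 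Even with this fix, ``iterate until no duplicates remain'' is not yet a proof, because after one unzipping the freed coordinates can collide with entries of $R$, so the number of duplicates need not decrease.

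A clean way to finish, close to your idea, is to induct on the \emph{equality pattern}. Group tuples in $[N]^n$ by their partition into equal-index blocks, with block sizes $m_1,\ldots,m_K$ (so $\sum_l m_l=n$). It suffices to show that for every such pattern,
\[
\frac{1}{(N)_K}\sum_{\substack{(a_1,\ldots,a_K)\\ \text{distinct}}} f\!\left(\sum_{l}m_l x_{a_l}\right)\ \ge\ \frac{1}{(N)_n}\sum_{\substack{(b_1,\ldots,b_n)\\ \text{distinct}}} f\!\left(\sum_i x_{b_i}\right),
\]
and then average over patterns with the multinomial weights. Induct on $n-K$. For the step, pick a block with $m_1\ge 2$ and ``split off'' one unit to a new block; fixing the remaining indices and writing $g(t)=f(t+R)$, the needed pairwise inequality is
\[
g(m_1 x_a)+g(m_1 x_b)\ \ge\ g\bigl((m_1-1)x_a+x_b\bigr)+g\bigl((m_1-1)x_b+x_a\bigr),
\]
which holds for convex $g$ because the pair $\{m_1x_a,m_1x_b\}$ majorizes $\{(m_1-1)x_a+x_b,(m_1-1)x_b+x_a\}$. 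Summing over $a\ne b$ gives the inductive step with the correct weights, and the bookkeeping issue you flagged disappears because the induction is on block sizes rather than on individual tuples.
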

 
\begin{proof}[Proof of \Cref{lemma:subset-rank-concentration}]
Define binary random variables $X_1, X_2, \ldots, X_n$ such that $X_j = 1$ if  element $j \in [n]$ is included in the size-$k$ subset, and $X_j = 0$ otherwise. Then, $X_1$ through $X_n$ can be viewed as being sampled from the size-$n$ population
\[
    \mathcal{X} = (\underbrace{0, 0, \ldots, 0}_{n-k\text{ copies}}, \underbrace{1, 1, \ldots, 1}_{k\text{ copies}})
\]
without replacement. Towards applying \Cref{lemma.cite}, we consider random variables $Y_1, Y_2, \ldots, Y_n$ that are sampled from $\mathcal{X}$ \emph{with} replacement. Equivalently, $Y_1$ through $Y_n$ are independently sampled from $\Bern(k/n)$. Since the function $x \mapsto \exp(tx)$ is convex for any $t \in \mathbb{R}$, \Cref{lemma.cite} implies that, for any $n' \in [n]$,
\begin{equation}\label{eq:without-to-with-replacement}
    \Ex{X}{\exp\left(t\cdot \sum_{j=1}^{n'}X_j\right)}
\le \Ex{Y}{\exp\left(t\cdot \sum_{j=1}^{n'}Y_j\right)}.
\end{equation}

In the rest of the proof, we first control the tail probabilities of $x - i\cdot \frac{n}{k}$ on both sides. We then get an upper bound on the expectation via integration.

\paragraph{Control the left-tail.} Fix $m \ge 0$. Recall that $x$ denotes the $i$-th smallest element in the random size-$k$ subset. Our goal is to upper bound the tail probability $\pr{}{x \le i \cdot \frac{n}{k} - m\cdot\sqrt{i \cdot \frac{n}{k}}}$. When $i \cdot \frac{n}{k} - m\cdot\sqrt{i \cdot \frac{n}{k}} < 1$, this probability is trivially $0$. Otherwise, we note that, for $x \le j$ to hold, we must have $\sum_{r = 1}^j X_r \ge i$. Applying this observation to $j = \left\lfloor i\cdot \frac{n}{k} - m\cdot \sqrt{i\cdot \frac{n}{k}}\right\rfloor$ gives
\begin{align*}
        \pr{}{x \le i\cdot \frac{n}{k} - m\cdot \sqrt{i\cdot \frac{n}{k}}}
    &=  \pr{}{x \le \left\lfloor i\cdot \frac{n}{k} - m\cdot \sqrt{i\cdot \frac{n}{k}}\right\rfloor}\\
    &\le\pr{X}{\sum_{r=1}^{\left \lfloor i\cdot \frac{n}{k} - m\cdot \sqrt{i\cdot \frac{n}{k}}\right\rfloor } X_r \ge i}
    =   \inf_{t > 0}\pr{X}{e^{t\sum_{r=1}^{\left \lfloor i\cdot \frac{n}{k} - m\cdot \sqrt{i\cdot \frac{n}{k}}\right \rfloor }X_r} \ge e^{ti}}\\
    &\le\inf_{t>0}\frac{\Ex{X}{e^{t\sum_{r=1}^{\left \lfloor i\cdot \frac{n}{k} - m\cdot \sqrt{i\cdot \frac{n}{k}}\right \rfloor }X_r}}}{e^{ti}} \tag{Markov's inequality} \\
    &\le\inf_{t>0}\frac{\Ex{Y}{e^{t\sum_{r=1}^{\left \lfloor i\cdot \frac{n}{k} - m\cdot \sqrt{i\cdot \frac{n}{k}}\right \rfloor }Y_r}}}{e^{ti}} \tag{\Cref{eq:without-to-with-replacement}}\\
    &=  \inf_{t>0}\frac{\Pi_{r=1}^{\left \lfloor i\cdot \frac{n}{k} - m\cdot \sqrt{i\cdot \frac{n}{k}} \right \rfloor }\Ex{Y_r}{e^{t \cdot Y_r}}}{e^{ti}},
\end{align*}
where the last step holds since $Y_1, Y_2, \ldots$ are independent.

Recall that each $Y_r$ follows $\Bern(k/n)$. By Hoeffding's lemma, we have $\Ex{Y_r}{e^{t\cdot \left(Y_r - \frac{k}{n}\right)}} \le e^{t^2/8}$, which further implies
\[
    \Ex{Y_r}{e^{t\cdot Y_r}} \le \exp\left(\frac{t^2}{8} + t\cdot \frac{k}{n}\right).
\]
Plugging the above into the tail bound gives
\begin{align*}
    \pr{}{x \le i\cdot \frac{n}{k} - m\cdot \sqrt{i\cdot \frac{n}{k}}}
    &\le\inf_{t>0}\exp\left(\left(\frac{t^2}{8} + t\cdot \frac{k}{n}\right)\cdot \left\lfloor i\cdot \frac{n}{k} - m\cdot \sqrt{i\cdot \frac{n}{k}}\right\rfloor - ti\right)\\
    &\le\inf_{t>0}\exp\left(\left(\frac{t^2}{8} + t\cdot \frac{k}{n}\right)\cdot \left( i\cdot \frac{n}{k} - m\cdot \sqrt{i\cdot \frac{n}{k}}\right) - ti\right)\\
    &=  \inf_{t > 0}\exp\left(\frac{t^2}{8}\cdot\left(i\cdot\frac{n}{k} - m\cdot\sqrt{i \cdot \frac{n}{k}}\right) - t\cdot m\cdot\sqrt{i\cdot \frac{k}{n}}\right)\\
    &=  \exp\left(-\frac{2m^2 \cdot i\cdot \frac{k}{n}}{i\cdot \frac{n}{k} - m\cdot \sqrt{i\cdot \frac{n}{k}}}\right)\\
    &\le\exp\left(-\frac{2m^2 \cdot i\cdot \frac{k}{n}}{i\cdot \frac{n}{k}}\right)
    =   \exp\left(-2m^2\cdot \frac{k^2}{n^2}\right), \tag{$m \ge 0$}
\end{align*}
where the fourth step applies $\inf_{t > 0}(at^2 - bt) = -\frac{b^2}{4a}$ for $a, b > 0$.

We conclude that, for all $m \ge 0$,
\begin{equation}\label{eq:left-tail-bound}
    \pr{}{x \le i\cdot \frac{n}{k} - m\cdot \sqrt{i\cdot \frac{n}{k}}}
\le \exp\left(-2m^2\cdot \frac{k^2}{n^2}\right).
\end{equation}

\paragraph{Control the right-tail.} Similarly, we fix $m \ge 0$. If $i\cdot\frac{n}{k} + m\cdot\sqrt{i \cdot\frac{n}{k}} > n$, the tail probability $\pr{}{x \ge i\cdot\frac{n}{k} + m\cdot\sqrt{i \cdot\frac{n}{k}}}$ is trivially $0$. Otherwise, note that, for the $i$-th smallest element in the size-$k$ subset (namely, $x$) to be larger than or equal to $j$, we must have $\sum_{r = 1}^j X_r \le i$. This gives
\begin{align*}
    \pr{}{x \ge i\cdot\frac{n}{k} + m\cdot\sqrt{i \cdot\frac{n}{k}}}
&=  \pr{}{x \ge \left\lceil i\cdot \frac{n}{k} + m\cdot \sqrt{i\cdot \frac{n}{k}}\right\rceil}\\
&\le\pr{X}{\sum_{r=1}^{\left\lceil i\cdot \frac{n}{k} + m\cdot \sqrt{i\cdot \frac{n}{k}}\right \rceil } X_r \le i}
=   \inf_{t > 0}\pr{X}{e^{-t\cdot \sum_{r=1}^{\left\lceil i\cdot \frac{n}{k} + m\cdot \sqrt{i\cdot \frac{n}{k}}\right \rceil }X_r} \ge e^{-ti}}\\
&\le\inf_{t>0}\frac{\Ex{X}{e^{-t\sum_{r=1}^{\left\lceil i\cdot \frac{n}{k} + m\cdot \sqrt{i\cdot \frac{n}{k}} \right \rceil}X_r}}}{e^{-ti}} \tag{Markov's inequality}\\
&\le\inf_{t>0}\frac{\Ex{Y}{e^{-t\sum_{r=1}^{\left\lceil i\cdot \frac{n}{k} + m\cdot \sqrt{i\cdot \frac{n}{k}}\right \rceil }Y_r}}}{e^{-ti}} \tag{\Cref{eq:without-to-with-replacement}}\\
&=  \inf_{t>0}\frac{\Ex{Y}{\Pi_{r=1}^{\left\lceil i\cdot \frac{n}{k} + m\cdot \sqrt{i\cdot \frac{n}{k}}\right \rceil } e^{-t \cdot Y_r}}}{e^{-ti}}\\
&=  \inf_{t>0}\frac{\Pi_{r=1}^{\left\lceil i\cdot \frac{n}{k} + m\cdot \sqrt{i\cdot \frac{n}{k}}\right \rceil }\Ex{Y_r}{e^{-t \cdot Y_r}}}{e^{-ti}},
\end{align*}
where the last step follows from the independence among $Y_1, Y_2, \ldots, Y_n$.

Again, recall that each $Y_r$ follows $\Bern(k/n)$, so Hoeffding's lemma gives $\Ex{Y_r}{[e^{-t\cdot \left(Y_r - \frac{k}{n}\right)}} \le e^{t^2/8}$. Thus,
\[
    \Ex{Y_r}{e^{-t\cdot Y_r}} \le \exp\left(\frac{t^2}{8} - t\cdot \frac{k}{n}\right).
\]
Plugging the above into the upper bound on the right-tail gives
\[
    \pr{}{x \ge i\cdot\frac{n}{k} + m\cdot\sqrt{i \cdot\frac{n}{k}}}
\le \inf_{t>0}\exp\left(\left(\frac{t^2}{8} - t\cdot \frac{k}{n}\right)\cdot \left\lceil i\cdot \frac{n}{k} + m\cdot \sqrt{i\cdot \frac{n}{k}}\right\rceil + ti\right).
\]
Let $\overline{m} \in \mathbb{R}$ be the unique value such that $i \cdot \frac{n}{k} + \overline{m}\cdot\sqrt{i \cdot \frac{n}{k}} = \left\lceil i\cdot \frac{n}{k} + m\cdot \sqrt{i\cdot \frac{n}{k}}\right\rceil$. Clearly, we have $\overline{m} \ge m$. Then, we can re-write the above into
\begin{align*}
        \pr{}{x \ge i\cdot\frac{n}{k} + m\cdot\sqrt{i \cdot\frac{n}{k}}}
&\le    \inf_{t>0}\exp\left(\left(\frac{t^2}{8} - t\cdot \frac{k}{n}\right)\cdot \left( i\cdot \frac{n}{k} + \overline{m}\cdot \sqrt{i\cdot \frac{n}{k}}\right) + ti\right)\\
&=      \inf_{t>0}\exp\left(\frac{t^2}{8}\cdot \left( i\cdot \frac{n}{k} + \overline{m}\cdot \sqrt{i\cdot \frac{n}{k}}\right) - t \cdot \overline{m} \cdot \sqrt{i \cdot \frac{k}{n}}\right)\\
&=      \exp\left(-\frac{2\cdot  \overline{m}^2\cdot i\cdot \frac{k}{n}}{i\cdot \frac{n}{k} + \overline{m}\cdot \sqrt{i\cdot \frac{n}{k}}}\right).
\end{align*}
Again, the last step above follows from $\inf_{t > 0}(at^2 - bt) = -\frac{b^2}{4a}$, which holds for $a, b > 0$.

Note that for any $a, b > 0$, the function $x \mapsto \frac{x^2}{ax + b}$ is monotone increasing on $(0, +\infty)$. Since $\overline{m} \ge m$, we have
\[
    \frac{2\cdot  \overline{m}^2\cdot i\cdot \frac{k}{n}}{i\cdot \frac{n}{k} + \overline{m}\cdot \sqrt{i\cdot \frac{n}{k}}}
\ge \frac{2\cdot  m^2\cdot i\cdot \frac{k}{n}}{i\cdot \frac{n}{k} + m\cdot \sqrt{i\cdot \frac{n}{k}}},
\]
which further implies
\begin{align*}
        \pr{}{x \ge i\cdot\frac{n}{k} + m\cdot\sqrt{i \cdot\frac{n}{k}}}
&\le    \exp\left(-\frac{2\cdot  m^2\cdot i\cdot \frac{k}{n}}{i\cdot \frac{n}{k} + m\cdot \sqrt{i\cdot \frac{n}{k}}}\right)\\
&\le    \exp\left(-m^2\cdot \frac{k^2}{n^2}\right) + \exp\left(-m\cdot \sqrt{i\cdot \frac{k}{n}}\cdot \frac{k}{n}\right).
\end{align*}

We conclude that, for all $m \ge 0$,
\begin{equation}\label{eq:right-tail-bound}
    \pr{}{x \ge i\cdot \frac{n}{k} + m\cdot \sqrt{i\cdot \frac{n}{k}}}
\le \exp\left(-m^2\cdot \frac{k^2}{n^2}\right) + \exp\left(-m\cdot \sqrt{i\cdot \frac{k}{n}}\cdot \frac{k}{n}\right).
\end{equation}

    \paragraph{Put everything together.} Using the fact that $\Ex{}{X} = \int_{0}^{+\infty}\pr{}{X \ge \tau}~\rmd\tau$ holds for any non-negative random variable $X$, we can write the expectation of interest into the following:
    \[
        \Ex{}{\left|x - i\cdot\frac{n}{k}\right|}
    =   \int_{0}^{+\infty}\pr{}{x \le i\cdot\frac{n}{k} - \tau}~\rmd\tau + \int_{0}^{+\infty}\pr{}{x \ge i\cdot\frac{n}{k} + \tau}~\rmd\tau.
    \]

    Let $I_{-}$ and $I_{+}$ denote the two integrals above. By a change of variables, we have
    \begin{align*}
        I_{-}
    &=  \sqrt{i \cdot\frac{n}{k}}\cdot\int_{0}^{+\infty}\pr{}{x \le i\cdot\frac{n}{k} - \tau\cdot \sqrt{i \cdot\frac{n}{k}}}~\rmd\tau\\
    &\le\sqrt{i \cdot\frac{n}{k}}\cdot\int_{0}^{+\infty}\exp\left(-2\tau^2\cdot\frac{k^2}{n^2}\right)~\rmd\tau \tag{\Cref{eq:left-tail-bound}}\\
    &=  \sqrt{i \cdot\frac{n}{k}}\cdot\frac{n}{k}\cdot\int_{0}^{+\infty}\exp\left(-2\tau^2\right)~\rmd\tau.
    \end{align*}

    Similarly, we have
    \begin{align*}
        I_{+}
    &=  \sqrt{i \cdot\frac{n}{k}}\cdot\int_{0}^{+\infty}\pr{}{x \ge i\cdot\frac{n}{k} + \tau\cdot \sqrt{i \cdot\frac{n}{k}}}~\rmd\tau\\
    &\le\sqrt{i \cdot\frac{n}{k}}\cdot\left[\int_{0}^{+\infty}\exp\left(-\tau^2\cdot\frac{k^2}{n^2}\right)~\rmd\tau + \int_{0}^{+\infty}\exp\left(-\tau\cdot\sqrt{i \cdot \frac{k}{n}}\cdot\frac{k}{n}\right)~\rmd\tau\right] \tag{\Cref{eq:right-tail-bound}}\\
    &=  \sqrt{i \cdot\frac{n}{k}}\cdot\left[\frac{n}{k}\cdot\int_{0}^{+\infty}\exp\left(-\tau^2\right)~\rmd\tau + \frac{1}{\sqrt{i \cdot \frac{k}{n}}\cdot\frac{k}{n}}\right]\\
    &=  \sqrt{i\cdot\frac{n}{k}}\cdot\frac{n}{k}\cdot \int_{0}^{+\infty}\exp\left(-\tau^2\right)~\rmd\tau + \frac{n^2}{k^2}.
    \end{align*}

    Finally, since
    \[
        \int_{0}^{+\infty}e^{-\tau^2} + e^{-2\tau^2}~\rmd\tau
    =   \frac{2 + \sqrt{2}}{4}\cdot\sqrt{\pi}
    \le 1.513
    <   2,
    \]
    we have the desired upper bound:
    \[
        \Ex{}{\left|x - i\cdot\frac{n}{k}\right|}
    =   I_{-} + I_{+}
    \le 2\sqrt{i\cdot\frac{n}{k}}\cdot\frac{n}{k} + \frac{n^2}{k^2}.
    \]
\end{proof}

\expectednoverb*

\begin{proof}[Proof of \Cref{lemma.integral1}]
    By a Chernoff bound,
    \[
        \pr{B \sim \Binomial\left(n, 1/2\right)}{B \le \frac{n}{4}}  \le \exp\left(-2n \cdot  (1/4)^2\right) = e^{-n/8}.
    \]
    Then, since $|n/B - 2| \le n$ holds for every $B \in \{1, 2, \ldots, n\}$, we have
    \[
        \Ex{B \sim \Binomial(n, 1/2)}{\left|\frac{n}{B} - 2\right| \cdot \1{1 \le B \le n/4}}
    \le n \cdot \pr{B}{1 \le B \le n/4}
    \le n\cdot e^{-n/8}.
    \]
    
    It remains to upper bound the contribution from the $B > n / 4$ case. Note that, for every $B > n/4$, we have
    \[
        \left|\frac{n}{B} - 2\right|
    =   \frac{2}{B}\left|\frac{n}{2} - B\right|
    \le \frac{8}{n}\left|\frac{n}{2} - B\right|.
    \]
    Taking an expectation over $B \sim \Binomial(n, 1/2)$ gives
    \[
        \Ex{B \sim \Binomial(n, 1/2)}{\left|\frac{n}{B} - 2\right| \cdot \1{B > n/4}}\\
    \le \frac{8}{n}\cdot\Ex{B \sim \Binomial(n, 1/2)}{|B - n/2|}.
    \]
    Then, by Jensen's inequality,
    \[
        \Ex{B \sim \Binomial(n, 1/2)}{|B - n/2|}
    \le \sqrt{\Ex{B}{(B - n/2)^2}}
    =   \sqrt{\Var{}{B}}
    =   \sqrt{n/4} = \frac{\sqrt{n}}{2}.
    \]
    It follows that
    \[
        \Ex{B \sim \Binomial(n, 1/2)}{\left|\frac{n}{B} - 2\right| \cdot \1{B > n/4}}
    \le \frac{8}{n}\cdot\frac{\sqrt{n}}{2} = \frac{4}{\sqrt{n}}.
    \]
    
    In total, we have
    \begin{align*}
        &~\Ex{B \sim \Binomial(n, 1/2)}{\left|\frac{n}{B} - 2\right|\cdot\1{B \ne 0}}\\
    =   &~\Ex{B \sim \Binomial(n, 1/2)}{\left|\frac{n}{B} - 2\right|\cdot\1{1 \le B \le n/4}} + \Ex{B \sim \Binomial(n, 1/2)}{\left|\frac{n}{B} - 2\right|\cdot\1{B > n/4}}\\
    \le &~\frac{4}{\sqrt{n}} + n\cdot e^{-n/8}
    \le \frac{14}{\sqrt{n}},
    \end{align*}
    where the last step applies $x^{3/2} \cdot e^{-x/8} \le 10$, which holds for all $x \ge 0$.
\end{proof}

\end{document}